\documentclass{theoretics}

\addbibresource{final-TheoretiCS.bib}

\usepackage{amsfonts,amsmath,amsthm,amssymb,dsfont}

\usepackage{framed}
\usepackage[capitalise,noabbrev,nameinlink]{cleveref}


 
\usepackage{float,wrapfig}
\floatstyle{boxed} 
\restylefloat{figure}
\usepackage{mathrsfs}
\usepackage{mathdots}

\usepackage{xfrac}

\usepackage{thm-restate} 

\newenvironment{proofof}[1]{\begin{proof}[Proof of #1]}{\end{proof}}

 \renewcommand{\epsilon}{\varepsilon}

\newcommand{\poly}{\mathop{\mathrm{poly}}}
\newcommand{\polylog}{\mathop{\mathrm{polylog}}}
\newcommand{\quasipoly}{\mathop{\mathrm{quasipoly}}}
\newcommand{\eps}{\epsilon}

\newcommand{\N}{\mathbb{N}}

\newcommand{\BPP}{\mathsf{BPP}\xspace}

\newcommand{\RE}{\mathsf{RE}\xspace}

\newcommand{\PTIME}{\mathsf{P}\xspace}

\newcommand{\PH}{\mathsf{PH}\xspace}
\newcommand{\PP}{\mathsf{PP}\xspace}
 
\newcommand{\ETIME}{\mathsf{E}\xspace}
\newcommand{\EXP}{\mathsf{EXP}\xspace}

\newcommand{\NP}{\mathsf{NP}\xspace}
\newcommand{\coNP}{\mathsf{coNP}\xspace}
\newcommand{\PSPACE}{\mathsf{PSPACE}\xspace}

\newcommand{\co}{\mathsf{co}}

\newcommand{\NC}{\mathsf{NC}\xspace}
\newcommand{\TIME}{\mathsf{TIME}}

\newcommand{\NTIME}{\mathsf{NTIME}}
\newcommand{\SPACE}{\mathsf{SPACE}}

\newcommand{\RP}{\mathsf{RP}}

\newcommand{\ZPP}{\mathsf{ZPP}}
\newcommand{\ZPSUBEXP}{\mathsf{ZPSUBEXP}}
\newcommand{\Ppoly}{\PTIME_{\sf/poly}}

\newcommand{\ParP}{\oplus\PTIME}

\def \ro {\text{r.o.-}}

\newcommand{\cro}[1]{$c$\text{-}\ro}

\newcommand{\NE}{\mathsf{NE}}

\newcommand{\E}{\mathsf{E}}

\newcommand{\NEXP}{\mathsf{NEXP}\xspace}

\newcommand{\SAT}{\mathsf{SAT}}
\newcommand{\PAL}{\mathsf{PAL}}
\newcommand{\ssat}{\mathsf{\#3SAT}}
\newcommand{\psat}{\mathsf{\oplus 3SAT}}

\newcommand{\DISJ}{\mathsf{DISJ}}

\newcommand{\AC}{\mathsf{AC}}

\newcommand{\SIZE}{\mathsf{SIZE}}

\newcommand{\MCSP}{\mathsf{MCSP}\xspace}

\newcommand{\sMCSP}{\mathsf{search\text{-}MCSP}\xspace}

\newcommand{\GapMAJ}{\mathsf{PromiseMAJORITY}}

\newcommand{\Parity}{\mathsf{Parity}}

\def\tt{\mathsf{tt}}

\def\caC{\mathcal{C}}
\def\caD{\mathcal{D}}
\def\calD{\mathcal{D}}

\newcommand{\RKt}{\mathsf{R_{K^t}}}

\newcommand{\Ksupt}{\mathsf{K^t}}

\title{Constructive Separations and Their Consequences}



\ThCSshortnames{L.~Chen, C.~Jin, R.~Santhanam, and R.~Williams}
 \ThCSshorttitle{Constructive Separations and Their Consequences}
\ThCSauthor[1]{Lijie Chen}{lijieche@mit.edu}[https://orcid.org/0000-0002-6084-4729]

\ThCSauthor[1]{Ce Jin}{cejin@mit.edu}[https://orcid.org/0000-0001-5264-1772]

\ThCSauthor[2]{Rahul Santhanam}{rahul.santhanam@cs.ox.ac.uk}[https://orcid.org/0000-0002-8716-6091]

\ThCSauthor[1]{Ryan Williams}{rrw@mit.edu}[https://orcid.org/0000-0003-2326-2233]
\ThCSthanks{A preliminary version of this article appeared at FOCS 2021 \cite{0001JSW21}. Ryan Williams was supported by NSF CCF-2127597 and CCF-1909429. Part of this work was completed while Ryan Williams was visiting the Simons Institute for the Theory of Computing, participating in the `Theoretical Foundations of Computer Systems' and `Satisfiability: Theory, Practice, and Beyond' programs.}
\ThCSaffil[1]{Massachusetts Institute of Technology, USA}
\ThCSaffil[2]{University of Oxford, UK}

\ThCSkeywords{complexity classes, lower bounds, refuters}

\ThCSyear{2024}
\ThCSarticlenum{3}
\ThCSdoi{10.46298/theoretics.24.3}
\ThCSreceived{Mar 29, 2022}
\ThCSrevised{Jan 18, 2024}
\ThCSaccepted{Jan 24, 2024}
\ThCSpublished{Feb 15, 2024}

\begin{document}
	
	\maketitle 
	
	\begin{abstract}
		For a complexity class ${\cal C}$ and language $L$, a \textbf{constructive separation of} $L \notin \caC$ gives an efficient algorithm (also called a \textbf{refuter}) to find counterexamples (bad inputs) for every~$\caC$ algorithm attempting to decide $L$. We study the questions: Which lower bounds can be made constructive? What are the consequences of constructive separations? We build a case that  ``constructiveness'' serves as a dividing line between many weak lower bounds we know how to prove, and strong lower bounds against $\PTIME$, $\ZPP$, and $\BPP$. Put another way, constructiveness is the opposite of a complexity barrier: it is a property we want lower bounds to have. Our results fall into three broad categories.
		
		\begin{itemize}
			\item \textbf{For many separations, making them constructive would imply breakthrough lower bounds.} Our first set of results shows that, for many well-known lower bounds against streaming algorithms, one-tape Turing machines, and query complexity, as well as lower bounds for the Minimum Circuit Size Problem, making these lower bounds constructive would imply breakthrough separations ranging from $\EXP^\NP \ne \BPP$ to even $\PTIME \ne \NP$. For example, it is well-known that distinguishing binary strings with $(1/2-\eps)n$ ones from strings with $(1/2+\eps)n$ ones requires randomized query complexity $\Theta(1/\eps^2)$. We show that a sufficiently constructive refuter for this query lower bound would imply $\PTIME \neq \NP$.
		
			\item \textbf{Most conjectured uniform separations can be made constructive.} Our second set of results shows that for most major open problems in lower bounds against $\PTIME$, $\ZPP$, and $\BPP$, including $\PTIME \ne \NP$, $\PTIME \ne \PSPACE$, $\PTIME \ne \PP$, $\ZPP \ne \EXP$, and $\BPP \ne \NEXP$, any proof of the separation would further imply a \emph{constructive} separation. Our results generalize earlier results for $\PTIME \ne \NP$ [Gutfreund, Shaltiel, and Ta-Shma, CCC 2005] and $\BPP \ne \NEXP$ [Dolev, Fandina and Gutfreund, CIAC 2013]. Thus any proof of these strong lower bounds must also yield a constructive version, in contrast to many weak lower bounds we currently know.
		
			\item \textbf{Some separations cannot be made constructive.} Our third set of results shows that certain complexity separations cannot be made constructive. We observe that for $t(n) \geq n^{\omega(1)}$ there are no constructive separations for $\RKt$ (which is known to be not in $\PTIME$) from any complexity class, unconditionally. We also show that under plausible conjectures, there are languages in $\NP \setminus \PTIME$ for which there are no constructive separations from any complexity class.
		\end{itemize}
	
	\end{abstract}
 
\section{Introduction}

\newcommand{\theorestated}{} 

A primary goal of complexity theory is to derive strong complexity lower bounds for natural computational problems. When a lower bound holds for a problem $\Pi$ against a model ${\cal M}$ of algorithms, this implies that for each algorithm $A$ from ${\cal M}$, there is an infinite sequence of {\it counterexamples} $\{x_i\}$ for which $A$ fails to solve $\Pi$ correctly.\footnote{If the family of counterexamples was finite, we could hard-code them into the algorithm $A$ to give a new algorithm $A'$ that solves $\Pi$ correctly, for most "reasonable" models ${\cal M}$.} In this paper, we study the question: can such a family of counterexamples be constructed efficiently, for fixed $\Pi$ and a given algorithm $A$ in ${\cal M}$? We call a positive answer to this question a {\it constructive} separation of~$\Pi$ from ${\cal M}$.

There are several motivations for studying this question in a systematic way for natural problems $\Pi$ and models ${\cal M}$. Computer science is inherently a constructive discipline, and it is natural to ask if a given lower bound can be made constructive. Indeed, this can be seen as an ``explicit construction" question of the kind that is studied intensively in the theory of pseudorandomness, where we may have a proof of existence of certain objects with optimal parameters, e.g., extractors, and would like to construct such objects efficiently. %

Our primary motivation is to understand the general lower bound problem better! Constructive lower bounds have led to some recent resolutions of lower bound problems in complexity theory, and we believe they will lead to more. In his Geometric Complexity Theory approach, Mulmuley~\cite{MulmuleyFlip10} suggests that in order to break the ``self referential paradox'' of $\PTIME$ vs $\NP$ and related problems\footnote{Namely, since the $\PTIME$ vs.\ $\NP$ problem is a universal statement about mathematics that says that discovery is hard, why could it not preclude its own proof and hence be independent of the axioms of set theory?}, one has to shoot for \textbf{algorithms} which can efficiently find counterexamples for any algorithms claiming to solve the conjectured hard language. This view has been dominant in the GCT approach towards the $\mathsf{VNP}$ vs.\ $\mathsf{VP}$ problem~\cite{MulmuleyVI,Mulmuley12,IkenmeyerK20}. %

The ability to ``construct bad inputs for a hard function'' has also been critical to some recent developments in (Boolean) complexity theory. Chen, Jin, and Williams~\cite{cjw20} studied a notion of constructive proof they called \emph{explicit obstructions}. %
They show several ``sharp threshold'' results for explicit obstructions, demonstrating (for example) that explicit obstructions unconditionally exist for $n^{2-\eps}$-size DeMorgan formulas, but if they existed for $n^{2+\eps}$-size formulas then one could prove the breakthrough lower bound $\EXP \not\subset \NC^1$. (We discuss the differences between their work and ours in Section~\ref{sec:related-work}, along with other related work.) %

Constructive lower bounds have also been directly useful in proving recent lower bounds. Chen, Lyu, and Williams~\cite{ChenLW20} recently showed how to strengthen several prior lower bounds for $\E^{\NP}$ based on the algorithmic method to hold \emph{almost everywhere}. A key technical ingredient in this work was the development of a \emph{constructive} version of a nondeterministic time hierarchy that was already known to hold almost everywhere~\cite{FortnowS16}. The ``refuter'' in the constructive lower bound (the algorithm producing counterexamples) is used directly in the design of the hard function in $\E^{\NP}$. This gives a further motivation to study when lower bounds can be made constructive.

\paragraph{The Setup.}
Define \emph{typical complexity classes} as the classes $\PTIME$, $\BPP$, $\ZPP$, $\NP$, $\Sigma_k \PTIME$,  $\PP$, $\oplus \PTIME$, $\PSPACE$, $\EXP$, $\NEXP$, $\EXP^\NP$ and their complement classes.

Intuitively, a \emph{refuter} for $f$ against an algorithm $A$ is an algorithm $R$ that finds a counterexample on which $A$ makes a mistake, proving in an algorithmic way that $A$ cannot compute $f$. (This notion seems to have first been introduced by Kabanets~\cite{Kabanets00} in the context of derandomization; see Section~\ref{sec:related-work} for more details.)
\begin{definition}[Refuters and constructive separation]
	For a 
	function $f \colon \{0,1\}^{\star} \rightarrow \{0,1\}$ and an algorithm $A$, a $\PTIME$-\emph{refuter} for $f$ against $A$ is a deterministic polynomial time algorithm $R$ that, given input $1^n$, prints a string $x\in \{0,1\}^n$,  such that for infinitely many $n$,  $A(x)\neq f(x)$.
	
	We extend this definition to randomized refuters as follows:
	\begin{itemize}
		\item A $\BPP$-\emph{refuter} for $f$ against $A$ is a randomized polynomial time algorithm $R$ that, given input $1^n$, prints a string $x\in \{0,1\}^n$,
		such that for infinitely many $n$,  $A(x)\neq f(x)$ with probability at least $2/3$.\footnote{We remark that here it is not necessarily possible to amplify the success probability, so the choice of the constant $2/3$ matters.}
		\item A $\ZPP$-\emph{refuter}
		for $f$ against $A$ is a randomized polynomial time algorithm $R$ that, given input $1^n$, prints $x\in \{0,1\}^n\cup \{\perp\}$, such that for infinitely many $n$, either $x= \perp$ or $A(x)\neq f(x)$ with probability $1$, and $x\neq \perp$ with probability at least $2/3$.
	\end{itemize}

	For ${\cal D}\in \{\PTIME, \BPP, \ZPP\}$ and a typical complexity class ${\cal C}$, we say there is a \emph{${\cal D}$-constructive separation of $f \notin {\cal C}$}, if for every algorithm $A$ computable in ${\cal C}$, there is a refuter for $f$ against $A$ that is computable in ${\cal D}$. 
	\label{defn:refuter}
\end{definition}

Note that we allow the refuter algorithm to depend on the algorithm $A$.  

In \cref{defn:refuter}, we allow the algorithm $A$ being refuted to be randomized, but we only consider randomized algorithms $A$ with bounded probability gap, that is, on every input $x$ there is an answer $b\in \{0,1\}$ such that $A$ outputs $b$ with at least $2/3$ probability, and we denote this answer $b$ by $A(x)$. 

In \cref{defn:refuter} we restrict ${\cal C}$ and ${\cal D}$ to come from a small list of complexity classes in order to be formal and concrete; the definition can of course be generalized naturally to other classes of algorithms.

The length requirement that $|R(1^n)|=n$ in \cref{defn:refuter} is important.\footnote{This requirement seems somewhat strong, but it is easy to show that if there is a refuter which on infinitely many $1^n$ always outputs a string of length in $\{n, n^{c_1},\ldots,n^{c_k}\}$ for some constants $c_1, \ldots,c_k > 0$, then there is another refuter which outputs a string of length $n$ on infinitely many $1^n$. See \cref{sec:uniform-can-constructive}.}
For example, if $x=R(1^n)$ has very short length $|x| = \log(n)$, then the task of refutation would be much easier, as one has exponential $2^{O(|x|)}$ time to produce an input $x$.

Our work is certainly not the first to consider the efficiency of producing ``bad'' inputs for weak algorithms. Gutfreud, Shaltiel, and Ta Shma~\cite{GutfreundST07} showed that if $\PTIME \neq \NP$, then there is a $\PTIME$-constructive separation of $\PTIME \neq \NP$ (in other words, there is a $\PTIME$-constructive separation of SAT $\notin \PTIME$). They also proved analogous results for $\ZPP \neq \NP$ and $\NP \not\subset \BPP$; in these results, they considered the setting where the randomized algorithm being refuted may have unbounded probability gap, which is more general than what we consider in this paper.
Building on the technique of~\cite{GutfreundST07}, Dolev, Fandina and Gutfreund~\cite{DolevFG13} established a $\BPP$-constructive separation of $\BPP \neq \NEXP$ (assuming $\BPP \neq \NEXP$ is true).
They also proved a similar result for $\ZPP \neq \NEXP$.\footnote{We remark that \cite{DolevFG13} only considered refuting algorithms with \emph{one-sided} error.}
Atserias \cite{Atserias06} showed that if $\NP \not \subset \PTIME/\poly$, then there is a $\BPP$-constructive separation of $\NP \not \subset \PTIME/\poly$.

At this point it is natural to ask: 
\begin{quote}
	{\bf Question 1:} Which lower bounds imply a corresponding \emph{constructive} lower bound? 
\end{quote}

Naively, one might expect that the answer to Question 1 is positive when the lower bound is relatively easy to prove. %
We show that this intuition is wildly inaccurate. On the one hand, we show that for many natural examples of problems $\Pi$ and weak models ${\cal M}$, a lower bound is easily provable (and well-known), but {\it constructivizing} the \emph{same} lower bound would imply a breakthrough separation in complexity theory (a much stronger type of lower bound). On the other hand, we show that for many "hard" problems $\Pi$ and strong models ${\cal M}$, a lower bound for $\Pi$ against ${\cal M}$ {\it automatically} constructivizes: the existence of the lower bound alone can be used to derive an algorithm that produces counterexamples. So, in contrast with verbs such as ``relativize''~\cite{BakerGS75}, ``algebrize''~\cite{AaronsonW09}, and ``naturalize''~\cite{RazborovR97}, we \emph{want} to prove lower bounds that constructivize! We are identifying a \emph{desirable} property of lower bounds.

We now proceed to discuss our results in more detail, and then give our interpretation of these results.

\subsection{Most Conjectured Poly-Time Separations Can Be Made Constructive}
Generalizing prior work~\cite{GutfreundST07,DolevFG13}, we show that for most  major open lower bound problems regarding polynomial time, their resolution implies corresponding \emph{constructive} lower bounds for most complete problems.

\newcommand{\footnoterefuteruniformsep}{\footnote{Throughout this paper when we say  a language $L$ is $\caD$-complete, we mean it is $\caD$-complete under polynomial-time many-one reductions. A language $L$ is \emph{paddable} if there is a deterministic polynomial-time algorithm that receives $(x,1^n)$ as input, where string $x$ has length at most $n-1$, and then outputs a string $y\in \{0,1\}^n$ such that $L(x)=L(y)$.}}
\begin{restatable}{theorem}{theorefuteruniformsep}\label{theo:refuter-uniform-sep} \theorestated
	Let $\caC \in \{\PTIME, \ZPP,\BPP \}$ and let $\caD \in \{
        \NP, \Sigma_2 \PTIME, \dotsc ,\Sigma_k \PTIME, \dotsc$,$  \PP,
        \PSPACE$, $\EXP$, $\NEXP$, $\EXP^\NP\}$. Then $\caD \nsubseteq
        \caC$ implies that for every paddable $\caD$-complete language
        $L$, there is a $\caC$-constructive separation of $L \notin
        \caC$.\footnoterefuteruniformsep\ 
	Furthermore, $\ParP \nsubseteq \caC$ implies that for every paddable $\ParP$-complete language $L$, there is a $\BPP$-constructive separation of $L \notin \caC$.
\end{restatable}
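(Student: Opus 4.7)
The plan is to extend the Gutfreund--Shaltiel--Ta-Shma and Dolev--Fandina--Gutfreund templates from $\SAT$ to every paddable $\caD$-complete language $L$. Fix such an $L$ and a $\caC$-algorithm $A$ purporting to decide $L$; the goal is to construct a $\caC$-refuter $R$ such that, for infinitely many $N$, $R(1^N)$ is a length-$N$ counterexample for $A$.

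The key idea is to use $A$ itself to search for its own errors, packaged through a paddable reduction. For $\caD \in \{\NP, \Sigma_k\PTIME, \PSPACE, \EXP, \NEXP, \EXP^{\NP}\}$, I would define the ``false-negative'' error language
\[
F_A \;=\; \bigl\{\, 1^n \,:\, \exists\, x \in \{0,1\}^n \text{ with } L(x)=1 \text{ and } A(x)=0 \,\bigr\}.
\]
Since $A \in \caC \subseteq \caD$, $L \in \caD$, and each such $\caD$ is closed under polynomial-time existential quantification, $F_A \in \caD$. Applying $\caD$-completeness of $L$, let $\rho$ be a polynomial-time reduction from $F_A$ to $L$ with $|\rho(1^n)| \leq p(n)$; paddability of $L$ then yields, for each $N \geq p(n)$, an instance $\tilde{\psi}_n^{(N)} \in \{0,1\}^N$ with $\tilde{\psi}_n^{(N)} \in L \iff 1^n \in F_A$. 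Define the refuter by $R(1^N) = \tilde{\psi}_{n(N)}^{(N)}$, where $n(N) = \max\{\, n \,:\, p(n) \leq N\,\}$.

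The analysis will proceed by a win--win dichotomy. Either (i) $A$ errs on $\tilde{\psi}_{n(N)}^{(N)}$ for infinitely many $N$ and $R$ is already the desired refuter, or (ii) for all but finitely many $N$ we have $A(\tilde{\psi}_{n(N)}^{(N)}) = L(\tilde{\psi}_{n(N)}^{(N)}) = F_A(1^{n(N)})$, so composing the padded reduction with $A$ shows $F_A \in \caC$. In case (ii), I would bootstrap to $L \in \caC$, contradicting $\caD \nsubseteq \caC$. The bootstrap will invoke the canonical self-reducibility of paddable $\caD$-complete problems: search-to-decision tree search via $A$ for $\caD = \NP$ or $\Sigma_k\PTIME$; quantifier-by-quantifier unrolling for $\caD = \PSPACE$; step-by-step configuration simulation for $\caD \in \{\EXP, \NEXP, \EXP^{\NP}\}$. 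At each recursive step, the $F_A$-decider is used to detect when $A$'s answers can be trusted (and when a deeper recursion is required), and paddability is applied throughout to align the lengths of intermediate reduction instances with the $L$-instance being decided. The symmetric case where $A$'s errors are all ``false-positive'' is handled by running the same argument on $\overline{L}$, which is paddable $\co\caD$-complete and satisfies $\co\caD \nsubseteq \caC$ whenever $\caD \nsubseteq \caC$, since each $\caC \in \{\PTIME, \ZPP, \BPP\}$ is closed under complement.

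For $\caD = \PP$, the above $F_A$ is not obviously in $\caD$, and a tailored argument using self-reducibility of a $\PP$-complete problem such as $\mathsf{MAJSAT}$ together with the Beigel--Reingold--Spielman closure properties of $\PP$ is required. For $\caD = \ParP$, the existential quantifier in $F_A$ will be handled via the Valiant--Vazirani isolation lemma, which randomly pins down a single erroneous $x$; this forces the refuter to be randomized, matching the $\BPP$-refuter conclusion in the theorem's final clause. For randomized $\caC \in \{\ZPP, \BPP\}$, the bounded probability gap of $A$ is preserved under polynomial amplification, and the refuter uses its own randomness to simulate $A$ consistently across its queries. The main obstacle will be the case-(ii) bootstrap: given only $F_A \in \caC$ and oracle access to $A$, deriving a $\caC$-algorithm for $L$. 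This step relies on class-specific self-reducibility machinery and careful length alignment via paddability, and is the most delicate ingredient when $\caD$ varies across the listed classes.
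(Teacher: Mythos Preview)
Your high-level ingredients are right, but the refuter you actually define does not do enough work, and the ``bootstrap'' you defer to is where the real content lies---and it does not go through as sketched.

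Concretely: your refuter outputs the single fixed string $\tilde\psi_{n(N)}^{(N)}$, and in case~(ii) you only learn that the unary language $F_A \in \caC$. But $F_A(1^n)$ merely tells you whether \emph{some} false negative exists at length $n$; it does not locate one, and it says nothing about false positives. For $\caD=\NP$ the false-positive language $F_A^+ = \{1^n : \exists x,\, L(x)=0 \wedge A(x)=1\}$ sits in $\Sigma_2\PTIME$, not obviously in $\NP$, so your ``run the same argument on $\overline L$'' does not stay inside $\caD$. And even when both $F_A,F_A^+ \in \caC$, you still cannot decide $L(x)$ for a specific $x$ on a length where errors exist. Your proposed self-reducibility recursion (``step-by-step configuration simulation'' for $\EXP$, etc.) does not yield a $\caC$-algorithm for $L$: simulating an $\EXP$ computation step by step takes exponential time, and using $A$ to shortcut intermediate configurations requires checking $A$'s answers for consistency---at which point you are no longer deriving a contradiction but building a different refuter.

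The paper's argument avoids this entirely by making the refuter itself perform the search. For $\caD \in \{\PSPACE,\EXP,\NEXP,\EXP^\NP\}$ (closed under both $\exists\poly$ and $\forall\poly$), it uses the \emph{prefix} language $G_A = \{(1^n,x):\exists y\text{ extending }x,\, L(y)\neq A(y)\}$, reduces it to $L$ via two oracle queries, and has the refuter walk down the prefix tree using $A$ as oracle; it outputs either the leaf $y$ reached or the $O(1)$ queries where $A$'s answers were inconsistent. For $\caD=\Sigma_k\PTIME$, the crucial idea you are missing is that downward self-reducibility of a complete language $M$ yields a \emph{polynomial-time checkable} error predicate: the shortest $x$ with $A(p_n(x)) \neq D^{O_{|x|-1}}(x)$ (where $O$ is defined via $A$ itself) is automatically a point where $A\circ p_n$ disagrees with $M$, and ``find the shortest such $x$'' is an $\NP$ search that can itself be fed through $A$. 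For $\PP$ the paper exploits the binary-search structure of the $\#\SAT$-via-$\PP$ reduction to cut the counterexample list to three strings. In each case the refuter is a search procedure whose failure branch directly exhibits a bad query, rather than a fixed output followed by an external contradiction argument.
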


In other words, for many major separation problems such as $\PP \neq \BPP$, $\EXP \neq \ZPP$, and $\PSPACE \neq \PTIME$, proving the separation automatically implies constructive algorithms that can produce counterexamples to any given weak algorithm. We find Theorem~\ref{theo:refuter-uniform-sep} to be mildly surprising: intuitively it seems that proving a constructive lower bound should be strictly stronger than simply proving a lower bound. (Indeed, we will later see other situations where making \emph{known} lower bounds constructive would have major consequences!) Moreover, for separations beyond $\PTIME \neq \NP$, the polynomial-time refuters guaranteed by Theorem~\ref{theo:refuter-uniform-sep} are producing hard instances for  problems that presumably do not have short certificates. For example, we do not believe that $\PSPACE = \NP$ (we do not believe $\PSPACE$ has short certificates), yet one can refute polynomial-time algorithms attempting to solve QBF with other polynomial-time algorithms, under the assumption that $\PSPACE \neq \PTIME$. The point is that such polynomial-time refuters intuitively cannot check their own outputs for correctness. We find this very counterintuitive.

\subsection{Unexpected Consequences of Making Some Separations Constructive}

Given Theorem~\ref{theo:refuter-uniform-sep}, we see that most of the major open problems surrounding polynomial-time lower bounds would yield constructive separations. Can \emph{all} complexity separations be made constructive? It turns out that for several ``weak'' lower bounds proved by well-known methods, making them constructive requires proving \emph{other} breakthrough lower bounds! %

Thus, there seems to be an algorithmic ``dividing line'' between many lower bounds we are able to prove, and many of the longstanding lower bounds that seem perpetually out of reach. The longstanding separation questions (as seen in Theorem~\ref{theo:refuter-uniform-sep}) \emph{require} a constructive proof: an efficient algorithm that can print counterexamples. Here we show that many lower bounds we are able to prove do not require constructivity, but if they could be made constructive then we would prove a longstanding separation! In our minds, these results confirm the intuition of Mulmuley that we should ``go for explicit proofs'' in order to make serious progress on lower bounds, especially uniform ones.

\paragraph*{Constructive Separations for (Any) Streaming Lower Bounds Imply Breakthroughs.} It is well-known that various problems are \emph{unconditionally} hard for low-space randomized streaming algorithms. For example, from the randomized communication lower bound for the Set-Disjointness ($\DISJ$) problem~\cite{KalyanasundaramS92,Razborov92,Bar-YossefJKS04}, it follows that no $n^{1-\eps}$-space randomized streaming algorithm can solve $\DISJ$ on $2n$ input bits.\footnote{Recall in the $\DISJ$ problem, Alice is given an $n$-bit string $x$, Bob is given an $n$-bit string $y$, and the goal is to determine whether their inner product $\sum_{i=1}^{n} x_i y_i$ is nonzero.}

Clearly, every $n^{o(1)}$-space streaming algorithm for $\DISJ$ \emph{must} fail to compute $\DISJ$ on some input (indeed, it must fail on many inputs). We show that efficient refuters against streaming algorithms attempting to solve \emph{any} $\NP$ problem would imply a breakthrough lower bound on \emph{general} randomized algorithms, not just streaming algorithms.

Similarly to \cref{defn:refuter}, we can consider
$\PTIME^{\NP}$-refuters against streaming algorithms, which are
deterministic polynomial time algorithms given a SAT oracle that
output counterexamples for streaming algorithms on infinitely many
input lengths. We can also similarly define
$\PTIME^{\NP}$-constructive separations.
\newcommand{\footnoteuniformPNP}{\footnote{That is, for every such randomized streaming algorithm $A$, there is a $\PTIME^{\NP}$ refuter $B$ such that $B(1^n)$ prints an input $x$ of length $n$ such that $A$ decides whether $x \in L$ incorrectly, for infinitely many $n$.}}
\begin{restatable}{theorem}{theoPNP}\label{theo:PNP-constructive-to-EXPNP-lowb} \theorestated
	Let $f(n) \ge \omega(1)$. For every language $L \in \NP$, a $\PTIME^\NP$-constructive separation of~$L$ from uniform randomized streaming algorithms with $O(n \cdot (\log n)^{f(n)})$ time and $O(\log n)^{f(n)}$ space\footnoteuniformPNP\ implies $\EXP^{\NP} \neq \BPP$.
\end{restatable}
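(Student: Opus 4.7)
The plan is to proceed by contrapositive: assuming $\EXP^\NP = \BPP$, I aim to construct, for every $L \in \NP$, a randomized streaming algorithm $A_L$ within the stated time and space bounds that cannot be refuted by any $\PTIME^\NP$-refuter on infinitely many input lengths, thereby contradicting the hypothesized constructive separation. The hypothesis yields $\PTIME^\NP \subseteq \EXP^\NP = \BPP$ (so every candidate refuter is $\BPP$-simulable in polynomial time) and $L \in \NP \subseteq \BPP$; combined with Adleman's $\BPP \subseteq \PTIME/\poly$ it gives $\EXP^\NP \subseteq \PTIME/\poly$, which by Impagliazzo–Kabanets–Wigderson collapses to $\EXP^\NP = \MA$, furnishing short interactive proofs for every $\EXP^\NP$ computation.

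I would construct $A_L$ via a universal enumeration of $\PTIME^\NP$-refuters $R_1, R_2, \dots$. On input $x$ of length $n$, the algorithm $A_L$ computes a short randomized fingerprint $\phi(x)$ (say, a random linear hash over a small prime) in its single streaming pass using $(\log n)^{f(n)}$ space, and compares it against the precomputed fingerprints $\phi(R_i(1^n))$ together with the precomputed bits $L(R_i(1^n))$ for every index $i$ up to a slowly growing bound $s(n) \to \infty$, where all precomputations are performed via the $\BPP$-simulations guaranteed by the hypothesis. Upon a fingerprint collision with some $R_j$, $A_L$ outputs the precomputed $L(R_j(1^n))$; otherwise it outputs a default bit. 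For each fixed refuter $R_i$, once $n$ is large enough that $i \leq s(n)$, the construction forces $A_L(R_i(1^n)) = L(R_i(1^n))$ with probability $\geq 2/3$, so $R_i$ fails to refute $A_L$ on all but finitely many $n$.

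The main obstacle is the tension between the near-linear streaming time budget $O(n \cdot (\log n)^{f(n)})$ and the $\poly(n)$ time generically required to $\BPP$-simulate each $R_i$ and evaluate $L$ on the output. To resolve this I would exploit the collapse $\EXP^\NP = \MA$ to obtain short $\MA$-style witnesses for both $R_i(1^n)$ and $L(R_i(1^n))$ that can be verified in time close to $n$, together with the $\omega(1)$-slack in $f$, which permits $s(n)$ to grow unboundedly while keeping the aggregate verification cost within the streaming budget. A careful interleaving of the streaming read of $x$ with incremental fingerprinting and with the verification of these $\MA$-witnesses across all indices $i \le s(n)$ should then fit inside $(\log n)^{f(n)}$ space and $O(n(\log n)^{f(n)})$ time. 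Balancing the growth rate of $s(n)$ against the per-index verification cost and the available streaming resources is the core technical difficulty I would expect to occupy most of the proof.
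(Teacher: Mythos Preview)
Your contrapositive setup is correct, but the core of the argument has a genuine gap. The obstacle you identify is real: simulating a $\PTIME^{\NP}$ refuter $R_i$ on $1^n$ generically costs $n^{c_i}$ time, which already exceeds the streaming budget $n\cdot(\log n)^{f(n)}$ for any $c_i>1$. Your proposed fix via $\EXP^{\NP}=\MA$ does not close this gap. An $\MA$ protocol for the predicate ``$R_i(1^n)=y$'' on input $1^n$ still has a verifier running in $\poly(n)$ time, not near-linear time; and more fundamentally, a streaming algorithm has no prover to supply the Merlin witness, so there is nothing to verify. The sentence about ``interleaving the streaming read with verification of $\MA$-witnesses'' does not specify where those witnesses come from or why verification is sublinear, and as stated it cannot be made to work.

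The paper's proof takes a different route that sidesteps enumeration entirely. From $\EXP^{\NP}=\BPP\subset\Ppoly$ one gets (via $\E^{\NP}\subset\Ppoly$) that \emph{every} string output by a $\PTIME^{\NP}$ refuter has circuit complexity $\polylog(n)$. The streaming algorithm then: (i) uses a one-sided-error streaming algorithm for $\sMCSP$ (an adaptation of McKay--Murray--Williams, run under $\NP\subseteq\BPP$) to recover a $\polylog(n)$-size circuit $C$ encoding the input in one pass with $n\cdot\polylog(n)$ time and $\polylog(n)$ space; and (ii) solves $\mathsf{Succinct}\text{-}L$ on $C$, which is a $\NEXP$ problem on an input of length $\polylog(n)$, hence handled in $\polylog(n)$ randomized time via the collapse (together with the easy-witness lemma to produce an actual witness circuit). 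The single algorithm so constructed is correct on \emph{all} inputs of low circuit complexity, so no $\PTIME^{\NP}$ refuter can succeed against it.

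If you want to rescue your enumeration idea, the missing observation is not $\MA$ but rather input compression: the bit ``$L(R_i(1^n))$'' and the fingerprint of $R_i(1^n)$ are functions of $(i,n)$ written in \emph{binary}, an input of length $O(\log n)$ when $i\le s(n)$ is small; these functions lie in $\EXP^{\NP}=\BPP$, hence are computable in $\polylog(n)$ randomized time directly, with no witness needed. With that fix your diagonalization would go through, but it is a different mechanism than the one you proposed.
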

\renewcommand{\footnoteuniformPNP}{}
  
Essentially every lower bound proved against streaming algorithms in the literature holds for a problem whose decision version is  in $\NP$. 
Theorem~\ref{theo:PNP-constructive-to-EXPNP-lowb} effectively shows that if \emph{any} of these lower bounds can be made constructive, even in a $\PTIME^{\NP}$ sense, then we would separate randomized polynomial time from $\EXP^{\NP}$, a longstanding open problem in complexity theory. 
We are effectively showing that the counterexamples printed by such a refuter algorithm must encode a function that is hard for \emph{general} randomized streaming algorithms.

Stronger lower bounds follow from more constructive refuters (with an algorithm in a lower complexity class than $\PTIME^{\NP}$) against randomized streaming algorithms. At the extreme end, we find that uniform circuits refuting $\DISJ$ against randomized streaming algorithms would even imply $\PTIME \ne \NP$.
Similarly to \cref{defn:refuter}, we can consider polylogtime-uniform-$\AC^0$-refuters against streaming algorithms, which are polylogtime-uniform-$\AC^0$ circuits that output counterexamples for streaming algorithms on infinitely many input lengths. 

\newpage 
\newcommand{\footnoteuniformAC}{\footnote{That is, for every such randomized streaming algorithm $A$, there is a polylogtime-uniform $\AC^0$ circuit family $\{C_n\}$ such that $A$ fails to solve $\DISJ$ on $2n$-bit inputs correctly on the output $C_n(1^n)$ for infinitely many $n$.}}
\begin{restatable}{theorem}{theouniformAC}\label{theo:uniformAC0-constructive-to-PNP-lowb} \theorestated
	Let $f(n) \ge \omega(1)$. A polylogtime-uniform-$\AC^0$-constructive separation of $\DISJ$ from randomized streaming algorithms with $O(n \cdot (\log n)^{f(n)})$ time and $O(\log n)^{f(n)}$ space\footnoteuniformAC\ implies $\PTIME \neq \NP$.
      \end{restatable}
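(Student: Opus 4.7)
The plan is to prove the theorem by contraposition. Assuming $\PTIME = \NP$, we construct a single randomized streaming algorithm $A^{\star}$ using $O(n(\log n)^{f(n)})$ time and $O((\log n)^{f(n)})$ space that defeats every polylogtime-uniform $\AC^0$ refuter, directly contradicting the hypothesized constructive separation. The key observation is that each polylogtime-uniform $\AC^0$ circuit family $\{C_n\}$ is specified by a Turing machine $M$ of constant description size, so the set of all candidate refuter outputs $\{C_n^M(1^n) : |M| \le s(n)\}$ has size at most $2^{s(n)}$ for any slowly-growing $s(n) = \omega(1)$.

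The algorithm $A^{\star}$ proceeds in two phases. In the streaming phase, it reads the input $x \in \{0,1\}^{2n}$ and maintains a pairwise-independent rolling hash $h(x)$ of $O(\log n + s(n))$ bits, costing $O((\log n)^{f(n)})$ space. In the post-processing phase, $A^{\star}$ enumerates Turing machine descriptions $M$ with $|M| \le s(n)$; for each such $M$, using the direct-access polylogtime uniformity of $\{C_n^M\}$, it evaluates the $\AC^0$ circuit $C_n^M$ bit-by-bit via constant-depth recursive traversal in $O(\log n)$ space, computes the hash $h(C_n^M(1^n))$, and then invokes $\PTIME = \NP$ to decide $\DISJ(C_n^M(1^n))$ by resolving the $\NP$ question
\[ \exists i \in [n]:\ C_n^M(1^n)[i] = 1 \ \wedge\ C_n^M(1^n)[n+i] = 1, \]
which under the assumption is in polynomial time. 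If a unique $M^{\star}$ satisfies $h(C_n^{M^{\star}}(1^n)) = h(x)$, $A^{\star}$ outputs $\DISJ(C_n^{M^{\star}}(1^n))$; otherwise $A^{\star}$ outputs a default value.

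For any fixed polylogtime-uniform $\AC^0$ refuter $\{C_n\}$ with underlying TM $M_0$, once $n$ is large enough that $s(n) \ge |M_0|$, $A^{\star}$'s enumeration includes $M_0$. Pairwise independence of $h$ (together with choosing its output length $\Omega(s(n) + \log n)$) ensures that, with probability at least $2/3$, no spurious $M' \ne M_0$ in the $\le 2^{s(n)}$-sized candidate set collides with $C_n^{M_0}(1^n)$; thus $A^{\star}$ correctly identifies $M_0$ and outputs $\DISJ(C_n(1^n))$. Consequently $A^{\star}$ is correct on $C_n(1^n)$ for all but finitely many $n$, directly contradicting the refuter's defining property.

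The main obstacle will be fitting the full construction within the tight streaming resource budget, especially for slowly growing $f$. The naive post-processing uses $2^{s(n)} \cdot \poly(n)$ time, since each polylogtime-uniform $\AC^0$ evaluation and each $\PTIME = \NP$-based $\DISJ$ computation is a fixed polynomial in $n$. The remedy is to choose $s(n) = \omega(1)$ growing extremely slowly (e.g.\ $s(n) = \log^{\star} n$) and exploit that the $\SAT$ algorithm provided by $\PTIME = \NP$ has a fixed polynomial exponent, so that $(\log n)^{f(n)}$ absorbs the overhead whenever $f(n) = \omega(1)$. A secondary technical point is arranging the pairwise-independent hash to use a seed storable in $O((\log n)^{f(n)})$ space while producing enough output bits to resolve the $2^{s(n)}$ candidates; a standard rolling-hash construction over a field of size $2^{O(\log n + s(n))}$ suffices.
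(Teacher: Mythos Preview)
Your approach differs from the paper's. The paper, assuming $\PTIME=\NP$, uses two ingredients: (i) the observation (Lemma~\ref{lem:pnpac0}) that under $\PTIME=\NP$ the output of any polylogtime-uniform $\AC^0$ refuter has circuit complexity $\polylog(n)$; and (ii) a deterministic streaming algorithm for search-$\MCSP$ (adapted from McKay--Murray--Williams, Lemma~\ref{lm:McKayMW19-search-MCSP}) that, in $n\cdot\polylog(n)$ time and $\polylog(n)$ space, recovers a $\polylog(n)$-size circuit encoding the input whenever one exists. The streaming algorithm then solves Succinct-$\DISJ$ (a $\co\NP$ question, hence in $\PTIME$) on that $\polylog(n)$-size circuit in $\polylog(n)$ time. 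No enumeration of refuters and no hashing are needed.

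Your proposal has a genuine gap in the time analysis of the post-processing phase. You correctly observe that naively it takes $2^{s(n)}\cdot\poly(n)$ time, but your ``remedy'' only controls the $2^{s(n)}$ factor; it does nothing about the $\poly(n)$ factor, and the claim that ``$(\log n)^{f(n)}$ absorbs the overhead whenever $f(n)=\omega(1)$'' is false when the overhead is $n^c$ for a fixed $c>0$ (take, e.g., $f(n)=\log\log n$, so that $(\log n)^{f(n)}=2^{(\log\log n)^2}\ll n$). The source of this polynomial is your direct bit-by-bit evaluation of each candidate $\AC^0$ circuit $C_n^M$: such a circuit has $\poly(n)$ size, so evaluating its $2n$ output bits by ``constant-depth recursive traversal'' already costs $\poly(n)$ time for a single $M$, blowing the $O(n(\log n)^{f(n)})$ budget before the enumeration even enters. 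The hashing idea for identifying the right $M$ is fine in principle, but computing the hash of each $C_n^M(1^n)$ is exactly where the cost lies. A repair would have to route through the fact that under $\PTIME=\NP=\PH$ each bit $C_n^M(1^n)[i]$ is computable in time polynomial in $|M|+\log n$ rather than in the circuit size (this is essentially Lemma~\ref{lem:pnpac0}), together with careful clocking of the enumerated machines to control the depth-dependent blowup in that polynomial; none of this is in the proposal, and it is precisely the nontrivial content that the paper's search-$\MCSP$ route packages cleanly.
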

\renewcommand{\footnoteuniformAC}{}

To recap, it is well-known that $\DISJ$ does not have randomized streaming algorithms with $O(n \cdot (\log n)^{f(n)})$ time and $O(\log n)^{f(n)}$ space, even for $f(n) \leq o(\log n/ \log \log n)$, by communication complexity arguments. We are saying that, if (given the code of such an algorithm) we can efficiently construct hard instances of $\DISJ$ for that algorithm, then strong lower bounds follow. \emph{That is, making communication complexity arguments constructive would imply strong unconditional lower bounds.} 
\paragraph*{Constructive Separations for One-Tape Turing Machines Imply Breakthroughs.} Next, we show how making some rather old lower bounds constructive would imply a circuit complexity breakthrough. It has been known at least since Maass~\cite{pal} that nondeterministic one-tape Turing machines require $\Omega(n^2)$ time to simulate nondeterministic multitape Turing machines. However, those lower bounds are proved by non-constructive counting arguments. We show that if there is a $\PTIME^{\NP}$ algorithm that can produce bad inputs for a given one-tape Turing machine, then $\E^{\NP}$ requires exponential-size circuits. This in turn would imply $\BPP \subseteq \PTIME^{\NP}$, a breakthrough simulation of randomized polynomial time.

\begin{restatable}{theorem}{theogeneral} \label{thm:general} \theorestated
	For every language $L$ computable by a nondeterministic $n^{1+o(1)}$-time RAM, a $\PTIME^{\NP}$-constructive separation of $L$ from nondeterministic $O(n^{1.1})$-time one-tape Turing machines implies $\E^{\NP} \not \subset \SIZE[2^{\delta n}]$ for some constant $\delta >0$.  
\end{restatable}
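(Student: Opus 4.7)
A natural strategy is to prove the contrapositive: assume $\E^{\NP} \subseteq \SIZE[2^{\delta n}]$ for every constant $\delta > 0$, and build, for any given $\PTIME^{\NP}$ refuter $R$, a nondeterministic $O(n^{1.1})$-time one-tape TM $M$ that defeats $R$ by correctly deciding $L$ on $R(1^n)$ for almost every $n$. The starting observation is that each bit of $R(1^n)$ is a $\PTIME^{\NP}$ function of $\polylog(n)$ input bits, so viewing $R(1^n) \in \{0,1\}^n$ as the truth table of a function on $\log n$ bits, the assumption forces $R(1^n)$ to be the truth table of some circuit of size $s \le n^{\delta}$, for any $\delta > 0$ fixed in advance.

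On input $y \in \{0,1\}^n$, the machine $M$ nondeterministically guesses a circuit $C$ of size $s$ and then simultaneously (i) verifies that $y$ is the truth table of $C$ and (ii) simulates the nondeterministic $n^{1+o(1)}$-time RAM algorithm for $L$ on $y$, answering each of the RAM's input-bit queries by evaluating $C$ in place of directly reading $y$. Along any path on which verification succeeds, the simulation's answer is exactly $L(y)$; hence $M$'s language, restricted to inputs that are truth tables of small circuits, agrees with $L$. Since $R(1^n)$ is guaranteed to be such a truth table by our assumption, $M$ is correct on $R(1^n)$, contradicting the fact that $R$ is supposed to refute $M$ on infinitely many $n$.

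To fit $M$ into $O(n^{1.1})$ steps on a single tape, I would use the standard ``guess-a-trace-and-sort'' technique for nondeterministic RAM-to-one-tape simulation. The machine $M$ guesses the full RAM execution trace (state transitions and memory events) along with, for every input position $i$ referenced by either the simulation or by the truth-table verification, a tableau certifying the value $C(i)$. All $(i, C(i))$ pairs are then sorted by address via a nondeterministically guessed permutation, and consistency is verified by a single left-to-right scan of $y$ that matches each $(i, b_i)$ against $y_i$, together with local consistency checks on the tableaux and on the RAM transitions. Each tableau costs $\poly(s) = n^{O(\delta)}$ time, and the trace contains $n^{1+o(1)}$ events, yielding a total running time of $n^{1+O(\delta)+o(1)}$, which lies below $n^{1.1}$ for sufficiently small $\delta > 0$.

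The main technical obstacle is the last step: carrying out all these verifications on a \emph{single tape}, where naive random access costs $\Theta(n)$ per query and would blow the time budget. Sidestepping this requires careful coordination of the tape layout so that (a) the only full traversal of $y$ is the sorted scan used to match against the address-sorted access list, and (b) every per-query circuit evaluation happens locally --- within the head's current neighborhood, using its own guessed tableau --- rather than reaching back to a distant copy of $C$. Together with a nondeterministic one-tape sorting subroutine with near-linear overhead, this is where the bulk of the technical effort would live; once in place, the small-circuit succinct encoding of $y$ provides everything the RAM simulation ever needs, so $M$ survives the refuter whenever $R(1^n)$ admits a small circuit, which under the contrapositive assumption is always.
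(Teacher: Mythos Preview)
Your contrapositive setup and the first observation---that under $\E^{\NP} \subseteq \SIZE[2^{\delta n}]$ the refuter's output $R(1^n)$ is the truth table of an $n^{\delta}$-size circuit---are exactly what the paper does. The divergence is in how the one-tape machine $M$ handles the RAM simulation.

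The paper does not attempt to guess and verify the full $n^{1+o(1)}$-length execution trace. Instead it first invokes a Cook--Levin-type reduction (Lemma~\ref{lemma:red}) that converts the nondeterministic RAM for $L$ into a \emph{strongly explicit} family of 3-SAT formulas $\{C_n\}$ of size $n^{1+o(1)}$, meaning each clause is computable in $\polylog(n)$ time from its index. The machine $M$ then guesses a small circuit $D$ for the input \emph{and} a small circuit $E$ for the satisfying assignment (the witness). Under the assumption $\E^{\NP} \subseteq \SIZE[2^{\delta n}]$, the lexicographically first satisfying assignment for the refuter's instance---being a $\PTIME^{\NP}$-computable string when viewed bit by bit---also has an $n^{\delta}$-size circuit, so such an $E$ exists on accepting inputs. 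With both $D$ and $E$ of size $n^{\delta}$, the entire verification is local: scan the input once carrying $D$ to check agreement, then enumerate the $n^{1+o(1)}$ clauses one by one, computing each in $\polylog(n)$ time and evaluating $D,E$ on the three variable indices. No sorting, no long trace on the tape; the construction is effectively a streaming algorithm with $n^{O(\delta)}$ space, which is exactly why it runs in $n^{1+O(\delta)}$ time on a single tape.

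Your approach instead writes the full trace on the tape and defers everything to a ``nondeterministic one-tape sorting subroutine with near-linear overhead.'' That is a genuine gap: the very lower bound you are trying to constructivize (Maass's $\Omega(n^2)$ for palindromes on nondeterministic one-tape TMs) is a crossing-sequence obstruction to exactly this kind of global rearrangement of $\Theta(n)$ data, and it is far from clear that the multiset-equality check implicit in ``verify the guessed permutation'' escapes it. The paper's witness-compression trick is what buys you locality and eliminates the need for any such subroutine; without it, the one-tape time bound is the hard part, not a detail.
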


\paragraph*{Constructive Separations for Query Lower Bounds Imply Breakthroughs.} Now we turn to query complexity. Consider the following basic problem $\GapMAJ_{n,\eps}$ for a parameter $\eps < 1/2$. 
\begin{quote}
	$\GapMAJ_{n,\eps}$: \emph{Given an input $x\in \{0,1\}^n$, letting $p=\frac{1}{n}  \sum_{i=1}^n x_i$, distinguish between the cases $p<1/2-\eps$ or $p>1/2+\eps$.}
\end{quote}
This is essentially the ``coin problem''~\cite{BrodyV10}. It is well-known that every randomized query algorithm needs $\Theta(1/\eps^2)$ queries to solve $\GapMAJ_{n,\eps}$ with constant success probability (uniform random sampling is the best one can do). That is, any randomized query algorithm making $o(1/\eps^2)$ queries must make mistakes on some inputs, with high probability. We show that constructing efficient refuters for this simple sampling lower bound would imply $\PTIME \ne \NP$!

\begin{restatable}{theorem}{theorefuterqc}\label{theo:refuter-qc-to-PNP} \theorestated
	Let $\eps$ be a function of $n$ satisfying $\eps \leq 1/(\log n)^{\omega(1)}$, and $1/\eps$ is a positive integer computable in $\poly(1/\eps)$ time given $n$ in binary.
	\begin{itemize}
		\item If there is a polylogtime-uniform-$\AC^0$-constructive separation of $\GapMAJ_{n,\eps}$ from randomized query algorithms $A$ using $o(1/\eps^2)$ queries and $\poly(1/\eps)$ time, then $\NP \ne \PTIME$.
		\item If there is a polylogtime-uniform-$\NC^1$-constructive separation of $\GapMAJ_{n,\eps}$ from randomized query algorithms $A$ using $o(1/\eps^2)$ queries and $\poly(1/\eps)$ time, then $\PSPACE \ne \PTIME$.
	\end{itemize}
\end{restatable}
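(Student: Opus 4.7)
I will prove the $\AC^0$ statement by contrapositive, assuming $\PTIME = \NP$; the $\NC^1$ case is analogous, using the assumption $\PSPACE = \PTIME$ in place of $\NP = \PTIME$. The plan is to construct a confident randomized query algorithm $A^\star$ using $o(1/\eps^2)$ queries and $\poly(1/\eps)$ time that cannot be refuted by any polylogtime-uniform-$\AC^0$ circuit family; this will contradict the hypothesis and force $\NP \neq \PTIME$. The intuition is that under $\PTIME = \NP$, the $\NP$-oracle is free and lets $A^\star$ ``recognize'' polylogtime-uniform-$\AC^0$ structure in its input from a few random samples, then use internal polylog-time-per-bit evaluation to estimate $\GapMAJ$ of that structure via sampling.

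On input $x \in \{0,1\}^n$, $A^\star$ samples $k = \Theta((\log n)/\eps)$ random positions of $x$ and queries them; since $1/\eps \geq (\log n)^{\omega(1)}$, this is well within the $o(1/\eps^2)$ budget. Using the $\NP$-oracle, $A^\star$ then searches for a Turing machine $T$ of description size at most $\log n$ whose generated polylogtime-uniform-$\AC^0$ circuit family $\{C_m\}$ satisfies $C_n(1^n)_{q_i} = x_{q_i}$ at all queried positions $q_i$. Polylog-uniformity makes each output bit of $C_n$ computable in $\polylog(n)$ time, so verifying a candidate $T$ at $k$ positions takes $\poly(1/\eps)$ time, and the entire oracle call resolves in $\poly(1/\eps)$ time under $\PTIME = \NP$. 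If some $T$ is returned, $A^\star$ internally samples $r = O((\log n)/\eps^2)$ uniformly random positions, computes $C_n(1^n)$ at each in polylog time per bit, and outputs $1$ if the empirical bias exceeds $+\eps/2$, $0$ if it falls below $-\eps/2$, and a default value $0$ otherwise; if no $T$ is returned, $A^\star$ outputs $0$.

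Correctness is a Chernoff-plus-union-bound argument. For any candidate refuter $R$ with constant-size uniformity TM $T_R$ and any $x = R(1^n)$ with $|\text{bias}(x)| > \eps$, we have $|T_R| \leq \log n$ for all large $n$, so some consistent $T$ is found. For any fixed TM $T'$ with $\GapMAJ(C_n^{(T')}(1^n)) \neq \GapMAJ(x)$, the outputs differ on a $> 2\eps$-fraction of positions, so $\Pr[T' \text{ is consistent with the } k \text{ random queries}] \leq (1 - 2\eps)^k \leq 1/n^{\Omega(1)}$; union-bounding over the $\leq n$ candidate TMs, with probability $1 - 1/\poly(n)$ every consistent $T$ has the same $\GapMAJ$ value as $x$. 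A second Chernoff on the $r$ internal bit evaluations shows the empirical bias lies within $\eps/100$ of the true bias, so the threshold decision outputs $\GapMAJ(x)$. Combining, $A^\star(R(1^n)) = \GapMAJ(R(1^n))$ with probability $\geq 2/3$ for all but finitely many $n$, so $R$ is not a refuter for $A^\star$.

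The main obstacle, beyond careful budget accounting, is ensuring the bounded-probability-gap requirement on \emph{every} input (not only refuter outputs). On inputs far from any small polylog-uniform-$\AC^0$ source, the oracle returns no $T$ and the output is deterministically $0$; on inputs close to such a source, every consistent $T$'s bias is within $O(\eps)$ of $\text{bias}(x)$ by the same Chernoff argument, so the threshold-based decision together with $\polylog(n)$ independent repetitions (which both budgets easily accommodate) and a majority vote boosts the output into the required $\geq 2/3$ confidence gap. For the $\NC^1$ version, the construction is identical with polylog-uniform-$\NC^1$ replacing $\AC^0$ and a $\PSPACE$-oracle replacing the $\NP$-oracle; under $\PSPACE = \PTIME$ the oracle is again resolvable in $\poly(1/\eps)$ time, and polylog-uniform-$\NC^1$ circuits still admit $\polylog(n)$-time per-bit evaluation, so the same quantitative analysis carries through.
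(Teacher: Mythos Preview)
Your overall contrapositive strategy is sound and close in spirit to the paper's, but there is a real gap in the evaluation step, and your approach diverges from the paper's in an instructive way.

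\textbf{The gap.} You write that ``polylog-uniformity makes each output bit of $C_n$ computable in $\polylog(n)$ time.'' This is not correct. Polylogtime uniformity only gives $\polylog(n)$-time access to the \emph{structure} (gate types and wires) of the $\poly(n)$-size circuit; computing a single output bit of a depth-$d$ polylogtime-uniform $\AC^0$ circuit on input $1^n$ is a $\Sigma_d$ problem over $O(\log n)$-bit gate indices, not a $\polylog(n)$-time computation. Under $\PTIME=\NP$ this collapses to $\PTIME$, but the running time is $(\log n)^{g(d)}$ with the exponent $g(d)$ depending on the circuit depth $d$ (through the iterated collapse). Since your $\NP$-search ranges over \emph{all} Turing machines of description length $\leq \log n$ with no depth restriction, the consistent $T$ you find may describe a circuit of large depth, and you have no uniform $\poly(1/\eps)$ bound on the verification or on the subsequent internal bit evaluations. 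Your $\NP$ verifier therefore is not a single polynomial-time procedure as written. (The same issue arises in your $\NC^1$ paragraph: evaluating a bit of a polylogtime-uniform $\NC^1$ circuit is a $\polylog(n)$-\emph{space} computation, hence $\poly(n)$ time, not $\polylog(n)$ time unless you invoke the $\PSPACE=\PTIME$ collapse---and then the exponent again depends on the family.)

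\textbf{How the paper avoids this.} The paper does not search for the refuter's uniformity machine at all. Instead it first uses the collapse to argue (their Lemma on outputs of polylogtime-uniform $\AC^0$/$\NC^1$) that for every such refuter family the output string $R(1^n)$ has \emph{general} circuit complexity at most $(c\log n)^c$ for a constant $c$ depending only on the refuter, hence at most $1/\eps^{0.9}$ for all large $n$. Then the query algorithm PAC-learns a size-$\leq 1/\eps^{0.9}$ Boolean circuit consistent with $O(\eps^{-1.91})$ random samples (finding the lex-first minimum-size consistent circuit is in $\PH$, hence in $\PTIME$ under $\PTIME=\NP$), and finally estimates the bias by evaluating the learned circuit at $\Theta(1/\eps^2)$ fresh points---these evaluations cost $\poly(1/\eps)$ each, with no hidden depth parameter. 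The crucial difference is that the learned object is an \emph{explicit} circuit whose evaluation time is controlled by its size $1/\eps^{0.9}$, rather than a succinct description whose evaluation time depends on an unbounded structural parameter.

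\textbf{A secondary issue.} Your bounded-gap fix by ``$\polylog(n)$ independent repetitions and a majority vote'' does not work: if on some input the single-run output is close to an unbiased coin, majority voting does not amplify. This point would need a different argument (or a different interpretation of the algorithm class for promise problems).
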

Note that $\GapMAJ_{n,\eps}$ can be easily computed in $\NC^1$. If for every randomized query algorithm $A$ running in $n^{\alpha}$ time and making $n^{\alpha}$ queries for some $\alpha > 0$, we can always find inputs in $\NC^1$ on which $A$ makes mistakes, then we would separate $\PTIME$ from $\PSPACE$.

\paragraph*{Constructive Separations for $\MCSP$ Against $\AC^0$ Imply Breakthroughs.} Informally, the Minimum Circuit Size Problem ($\MCSP$) is the problem of determining the circuit complexity of a given $2^n$-bit truth table. Recent results on the phenomenon of hardness magnification~\cite{OliveiraS18,McKayMW19,cjw19,ChenHOPRS20,cjw20} show that, for various restricted complexity classes $\caC$:
\begin{itemize}
	\item Strong lower bounds against $\caC$ are known for explicit languages.
	\item Standard complexity-theoretic hypotheses imply that such lower bounds should hold also for MCSP (and its variants). 
	\item However, actually proving that $\MCSP \notin \caC$ would imply a breakthrough complexity separation. 
	\item There is also often a slightly weaker lower bound against $\caC$ that can be shown for $\MCSP$, suggesting that we are quantitatively ``close'' to a breakthrough separation in some sense.
\end{itemize}

The scenario where all four conditions above hold is called a ``hardness magnification frontier'' in \cite{ChenHOPRS20}.
We show that a similar phenomenon holds for constructive separations. %
It is well known that versions of $\MCSP$ are not in $\AC^0$~\cite{allenderbkmr06}, but strongly constructive separations are not known. We show that strongly constructive separations would separate $\PTIME$ from $\NP$, and that they exist under a standard complexity hypothesis. Moreover, we show that slightly weaker constructive separations {\it do} exist, and the strong constructive separations we seek do hold for other hard problems such as $\Parity$.

In the following, $\MCSP[s(n)]$ is the computational problem that asks whether a Boolean function on $n$ bits, represented by its truth table, has circuits of size at most $s(n)$.
Similarly to \cref{defn:refuter}, a \emph{polylogtime-uniform-$\AC^0[f(n)]$-refuter for $\MCSP[s(n)]$ against an algorithm $A$} is defined as a polylogtime-uniform-$\AC^0$ circuit of size $f(n)$ that outputs a string $x\in \{0,1\}^N$ given input $1^N$ (where $N=2^n$), such that for infinitely many $N=2^n$, $A(x) \neq \MCSP[s(n)](x)$.\footnote{Note that here we restrict the input lengths $N$ to be powers of two, since otherwise the $\MCSP$ problem is ill-defined.}  
We also consider a natural relaxation of refuter (\cref{defn:refuter}), called \emph{list-refuter}, which outputs a list of $n$-bit strings $x_i$ (instead of a single $n$-bit string $x$) given input $1^n$, and we only require that at least one of the strings $x_i$ is a counterexample.

\begin{restatable}{theorem}{theomag} \label{thm:mag_refuter} \theorestated
	Let $s(n) \ge n^{\log(n)^{\omega(1)}}$ be any time-constructive super-quasipolynomial function.
	In the following, we consider $\MCSP[s(n)]$ and Parity problems of input length $N=2^n$.
	The following hold:
	\begin{enumerate}
		\item (Major Separation from Constructive Lower Bound) If there exists a polylogtime-uniform $\AC^0[\quasipoly]$ refuter for $\MCSP[s(n)]$ against every polylogtime-uniform $\AC^0$ algorithm, then $\PTIME \ne \NP$.
		\item (Constructive Lower Bound Should Exist) If $\PH \not \subseteq \SIZE(s(n)^2)$, then there is a polylogtime-uniform-$\AC^0[\quasipoly]$ refuter for $\MCSP[s(n)]$ against every polylogtime-uniform $\AC^0$ algorithm.
		\item (Somewhat Constructive Lower Bound) For $s(n) \le o(2^n/n)$, there is a polylogtime-uniform-$\AC^0[2^{\poly(s(n))}]$ refuter for $\MCSP[s(n)]$ against every polylogtime-uniform $\AC^0$ algorithm.
		\item (Constructive Lower Bound for a Different Hard Language) There is a  polylogtime-uniform-$\AC^0[\quasipoly]$-list-refuter for $\Parity$ against every polylogtime-uniform $\AC^0$ algorithm.
	\end{enumerate}
\end{restatable}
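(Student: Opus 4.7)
The theorem has four parts, each calling for a distinct technique. I will sketch each in turn.

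For Part 1, I argue the contrapositive: assuming $\PTIME = \NP$, no polylogtime-uniform $\AC^0[\quasipoly]$ refuter $R$ can succeed against every polylogtime-uniform $\AC^0$ algorithm. Given any candidate $R$, the plan is to construct a polylogtime-uniform $\AC^0$ algorithm $A$ that agrees with $\MCSP[s(n)]$ on $R(1^N)$ for all sufficiently large $N$. Under $\PTIME = \NP$, the bit $\MCSP[s(n)](R(1^N))$ is polynomially computable, and the structural simplicity of the sequence $\{R(1^N)\}_N$ (the outputs of a single fixed polylog-uniform $\AC^0[\quasipoly]$ circuit) should allow us to encode the correct answer into $A$'s uniform description. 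The main obstacle is respecting the polylog-time uniformity of $A$ while summarizing the polynomially expensive $\MCSP[s(n)]$ computation on $R$'s outputs; I anticipate a magnification-style bootstrap is needed here, in the spirit of known magnification proofs for $\MCSP$, using the refuter's own constructive power together with the collapse $\PH = \PTIME$ implied by $\PTIME = \NP$ to amplify the constructive lower bound into a uniform contradiction.

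For Part 2, starting from the assumption $\PH \not\subseteq \SIZE(s(n)^2)$, I fix an explicit $\PH$-computable function $f$ of circuit complexity $> s(n)^2$. Its truth table of length $N=2^n$ can be produced by a polylogtime-uniform $\AC^0[\quasipoly]$ circuit via the standard simulation of $\PH$ by $\quasipoly$-size $\AC^0$ through brute-force quantifier elimination over $\poly(N)$-bit witnesses. Given any polylogtime-uniform $\AC^0$ algorithm $A$, the refuter evaluates $A$ on both the truth table of $f$ (a NO-instance of $\MCSP[s(n)]$) and on a trivial YES-instance such as the constant-zero truth table, and outputs whichever $A$ mislabels; since $\MCSP[s(n)]$ must disagree with $A$ on at least one of these two, the construction succeeds for every $A$.

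For Part 3, the size budget $2^{\poly(s(n))}$ allows a polylogtime-uniform $\AC^0$ circuit to enumerate all $2^{\tilde O(s(n))}$ Boolean circuits of size $s(n)$ and materialize their truth tables as a YES-set $\mathcal{Y}$. For any polylogtime-uniform $\AC^0$ algorithm $A$, either $A$ misclassifies some element of $\mathcal{Y}$ -- in which case the refuter outputs that element directly -- or $A$ accepts all of $\mathcal{Y}$, in which case $\MCSP[s(n)] \notin \AC^0$ forces $A$ to also accept some NO-instance outside $\mathcal{Y}$; the refuter then enumerates a combinatorially diverse explicit pool of NO-candidates and outputs one that $A$ accepts. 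All enumerations and the parallel simulations of $A$ fit within the $2^{\poly(s(n))}$ size budget.

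For Part 4, the plan is to build the list-refuter from a polylogtime-uniform pseudorandom generator $G$ with seed length $\polylog(N)$ that fools the class of $\AC^0$ circuits XOR-ed with $\Parity$, available via Braverman-style constructions. The list output by the refuter is the full $\quasipoly(N)$-sized range of $G$. For any polylogtime-uniform $\AC^0$ algorithm $A$, the test function $h(x) = A(x) \oplus \Parity(x)$ has constant bias on uniformly random $N$-bit inputs by H{\aa}stad-type lower bounds for $\Parity$ against $\AC^0$, and the fooling property of $G$ forces a comparable bias on $G$'s range, guaranteeing many Parity-counterexamples for $A$ appear in the list. The main technical step is implementing $G$ within polylog-uniform $\AC^0[\quasipoly]$, for which one appeals to standard derandomizations of the switching lemma.
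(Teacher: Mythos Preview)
Your proposal has genuine gaps in Parts 1, 2, and 3, and Part 4 is unnecessarily complicated.

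\textbf{Part 1.} You miss the key observation that makes this part short. The paper shows that under $\PTIME=\NP$, the output $R(1^N)$ of \emph{any} polylogtime-uniform $\AC^0[\quasipoly]$ refuter has circuit complexity $\polylog(N)$ (this is Lemma~\ref{lem:pnpac0}: polylogtime-uniform $\AC^0$ evaluation is in $\PH$, and $\PTIME=\NP$ collapses $\PH$ to $\PTIME$). Since $s(n)$ is super-quasipolynomial, $R(1^N)$ is therefore always a YES instance of $\MCSP[s(n)]$. Hence the trivial algorithm that always outputs $1$ is never refuted. Your plan to ``encode the correct answer into $A$'s uniform description'' via a ``magnification-style bootstrap'' is both vague and unnecessary.

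\textbf{Part 2.} Your argument is incorrect. You propose to output either the truth table of a hard $\PH$ function $f$ or the all-zeros string, whichever $A$ mislabels, claiming ``$\MCSP[s(n)]$ must disagree with $A$ on at least one of these two.'' But nothing prevents $A$ from being correct on both: $A$ could reject the hard truth table and accept the all-zeros string. The paper instead uses two pseudorandom generators: Nisan's generator $G$ (whose range consists of YES instances) and a shifted generator $G'(z)=G(z)\oplus y_N$ where $y_N$ is the truth table of $f$ (whose range consists of NO instances, by the circuit lower bound on $f$). Since both $G$ and $G'$ fool $A$, and $A$ must either accept or reject random inputs with probability at least $1/2$, there is a YES instance in the range of $G$ that $A$ rejects, or a NO instance in the range of $G'$ that $A$ accepts. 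Enumerating both ranges fits in $\quasipoly(N)$ size.

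\textbf{Part 3.} Your YES-instance enumeration is fine, but the NO-instance step is the crux and you leave it as ``enumerate a combinatorially diverse explicit pool of NO-candidates.'' There are $2^N - 2^{\widetilde O(s(n))}$ NO instances, and you give no mechanism for finding one that $A$ accepts within size $2^{\poly(s(n))}$. The paper's idea is to use a second PRG $G'$ with seed length $\poly(s(n))$ that fools $\AC^0$ circuits of size $2^{\poly(s(n))}$, and to apply it not to $A$ but to an auxiliary $A'$ that first filters out all YES instances (computable in that size) and then runs $A$. Fooling $A'$ guarantees a string in the range of $G'$ that is a NO instance accepted by $A$.

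\textbf{Part 4.} Your plan requires a PRG that fools the test $A(x)\oplus\Parity(x)$, but this function is in $\AC^0[2]$, not $\AC^0$, and polylog-seed PRGs against $\AC^0[2]$ are not known. The paper's list-refuter is far simpler and oblivious: output all strings of the form $y\,0^{N-\log(N)^d}$ for $y\in\{0,1\}^{\log(N)^d}$. If $A$ were correct on all of them, then padding would give $2^{O(m^{1/d})}$-size depth-$(d{+}1)$ circuits for $\Parity$ on $m$ bits, contradicting H{\aa}stad's lower bound.
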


Note that in item 3, the input size $N$ to the problem is $N=2^n$, hence $2^{\poly(s(n))}$ is only slightly super-quasipolynomial in $N$.

\paragraph{Comparison with Theorem~\ref{theo:refuter-uniform-sep}.} It is very interesting to contrast Theorem~\ref{theo:refuter-uniform-sep} with the various theorems of this subsection.  Theorem~\ref{theo:refuter-uniform-sep} tells us that many longstanding open problems in lower bounds would automatically imply constructive separations, when resolved. 
In contrast, theorems from this subsection say that
extending simple and well-known lower bounds to become constructive would resolve other major lower bounds! Taken together, we view the problem of understanding which lower bounds can be made constructive as a significant key to understanding the future landscape of complexity lower bounds.

\subsection{Certain Lower Bounds Cannot Be Made Constructive}

Finally, we can give some negative answers to our Question 1. %
We show that for some hard functions, there are \emph{no} constructive separations from \emph{any} complexity classes. Specifically, we show (unconditionally or under plausible complexity conjectures) that there are no refuters for these problems against a trivial decision algorithm that \emph{always returns the same answer} (zero, or one). Hence, there can be no constructive separations of these hard languages from any complexity class containing the constant zero or constant one function. (All complexity classes that we know of contain both the constant zero and one function.)

For a string $x \in \{0,1\}^*$, the $t$-time-bounded Kolmogorov complexity of $x$, denote by $\Ksupt(x)$, is defined as the length of the shortest program prints $x$ in time $t(|x|)$. We use $\RKt$ to denote the set of strings $x$ such that $\Ksupt(x) \ge |x| - 1$. Hirahara~\cite{Hirahara20} recently proved that for any super-polynomial $t(n) \geq n^{\omega(1)}$, $\RKt \notin \PTIME$. We observe that this separation cannot be made $\PTIME$-constructive. 

\begin{restatable}{proposition}{propnorefuter}\label{prop:norefuter-RKt} \theorestated
	For any $t(n) \geq n^{\omega(1)}$, there is no $\PTIME$-refuter for $\RKt$ against the constant zero function.
\end{restatable}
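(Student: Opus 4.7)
The plan is a direct diagonalization/incompressibility argument. Suppose, for contradiction, that $R$ is a $\PTIME$-refuter for $\RKt$ against the constant zero function. By \cref{defn:refuter}, $R$ runs in deterministic polynomial time $p(n)$, and for infinitely many $n$, the string $x_n := R(1^n) \in \{0,1\}^n$ must satisfy $\RKt(x_n) \neq 0$, i.e.\ $\Ksupt(x_n) \geq n-1$.

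The key observation is that $R$ itself provides an efficient description of $x_n$. Consider the fixed-length program that hardcodes the code of $R$ together with the binary encoding of $n$, and on execution simulates $R(1^n)$ and prints the result. This program has length at most $|R| + O(\log n)$, where $|R|$ is a constant independent of $n$, and it halts within time $O(p(n))$. Since $t(n) \geq n^{\omega(1)}$, we have $t(n) \geq p(n) + O(\log n)$ for all sufficiently large $n$, so this description witnesses $\Ksupt(x_n) \leq |R| + O(\log n) = O(\log n)$.

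For all sufficiently large $n$, $O(\log n) < n - 1$, contradicting $\Ksupt(x_n) \geq n-1$. Hence no such refuter can succeed on infinitely many $n$.

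I do not anticipate any real obstacle: the argument is just the standard fact that deterministic poly-time outputs have small $t$-time-bounded Kolmogorov complexity whenever $t$ dominates the runtime, combined with the definition of $\RKt$. The only minor care needed is to make sure the encoding of $n$ in the universal machine contributes only $O(\log n)$ to the description length and that the simulation overhead of the universal machine fits within $t(n)$, both of which are immediate from $t(n) \geq n^{\omega(1)}$. The same argument would in fact rule out any refuter whose running time is eventually dominated by $t(n)$, so the conclusion is robust.
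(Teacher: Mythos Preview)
Your proposal is correct and is essentially identical to the paper's own proof: both observe that a $\PTIME$ refuter against the constant-zero function must output, on infinitely many $n$, a string of $\Ksupt$ complexity at least $n-1$, yet any such output has $\Ksupt$ complexity $O(\log n)$ because the refuter together with the binary encoding of $n$ is a $\poly(n)\le t(n)$-time description.
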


Since $\RKt$ is a function in $\EXP$, it would be interesting to find functions in $\NP$ with no constructive separations.\footnote{Note that $\RKt$ is in $\co\NTIME[t(n)]$, but it is likely not in $\coNP$.} We show that under plausible conjectures, such languages in $\NP$ exist.

\newcommand{\footnotenonpa}{\footnote{Here, $\E = \TIME[2^{O(n)}]$, the class of languages decidable in (deterministic) $2^{O(n)}$ time, and $\NE$ is the corresponding nondeterministic class.}}
\newcommand{\footnotenonpb}{\footnote{Here, $\RE = {\sf RTIME}[2^{O(n)}]$, the class of languages decidable in randomized $2^{O(n)}$ time with one-sided error.}}
\begin{restatable}{theorem}{theononp}
	\label{thm:no-np-refuter} \theorestated
	The following hold:
	\begin{itemize}
		\item 
		If $\NE \neq \E$, then there is a language in $\NP \setminus \PTIME$ that does not have $\PTIME$ refuters against the constant one function.\footnotenonpa
		\item 
		If $\NE \neq \RE$, then there is a language in $\NP \setminus \PTIME$ that does not have $\BPP$ refuters against the constant one function.\footnotenonpb
	\end{itemize}
\end{restatable}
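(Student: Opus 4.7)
The plan is to construct, under $\NE \neq \E$, a language $L \in \NP \setminus \PTIME$ whose complement $\bar L$ consists of ``canonical hard strings'' $\{s_n\}_n$ that cannot be produced in polynomial time on input $1^n$ for infinitely many $n$. By the Hartmanis--Immerman--Sewelson theorem, $\NE \neq \E$ yields a tally $\NP$ language $T \subseteq \{1\}^*$ with $T \notin \PTIME$; a padding-plus-diagonalization argument further produces such a $T$ with $T \notin \io\PTIME$, meaning no polynomial-time algorithm decides $T$ correctly on infinitely many inputs. For the construction of $L$ below to place $L$ in $\NP$, I will need the bits of the characteristic sequence of $T$ to admit $\NP$-certificates in both directions, which I would obtain by working with a tally language in $\NP \cap \coNP$ via a more refined Hartmanis--Immerman--Sewelson-style argument.

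Define $s_n \in \{0,1\}^n$ by $s_n[i] = T(1^i)$ for $i = 1, \ldots, n$, and set $L = \{0,1\}^* \setminus \{s_n : n \geq 1\}$. Then: (i) $L \in \NP$, since $x \in L$ iff $x \neq s_{|x|}$, and with bits of $s_n$ in $\NP \cap \coNP$ one can exhibit an $\NP$-witness for non-equality (an index $i$ together with a certificate that $T(1^i) = 1 - x_i$); (ii) $L \notin \PTIME$, because a polynomial-time $L$-decider permits iterative reconstruction of $s_n$ (having obtained $s_{n-1}$, one tests whether $s_{n-1} \cdot 0 \in \bar L$ in $\PTIME$, which reveals $T(1^n)$), giving $T \in \PTIME$ and a contradiction; (iii) no $\PTIME$-refuter against the constant-one function exists, because any such refuter $R$ must satisfy $R(1^n) = s_n$ for infinitely many $n$, so the polynomial-time algorithm ``output the $n$-th bit of $R(1^n)$'' equals $T(1^n)$ for infinitely many $n$, contradicting $T \notin \io\PTIME$.

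The main obstacle is obtaining the refined hard tally language $T \in \NP \cap \coNP$ with $T \notin \io\PTIME$ from just $\NE \neq \E$; this is the technical heart of the argument and requires a careful padding-and-diagonalization step beyond the bare Hartmanis--Immerman--Sewelson correspondence. For the second bullet ($\NE \neq \RE$), the plan is parallel: use the analogous correspondence for $\RE$ to produce a tally $T$ hard against $\io\RP$, and argue that a $\BPP$-refuter would give---by feeding the refuter's output through $T$'s $\NP$-verifier and applying standard amplification---an $\RP$-algorithm correctly deciding $T$'s bits on infinitely many tally inputs, contradicting $T \notin \io\RP$.
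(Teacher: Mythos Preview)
Your approach has two genuine gaps, and the first is fatal.

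In step~(iii) you need a tally language $T$ with $T \notin \io\PTIME$, but no tally language can satisfy this: any $T \subseteq \{1\}^*$ either contains $1^n$ for infinitely many $n$ (so the constant-$1$ algorithm agrees with $T$ infinitely often) or for only finitely many $n$ (so the constant-$0$ algorithm does). Hence the conclusion ``the map $1^n \mapsto R(1^n)[n]$ agrees with $T$ infinitely often'' is true but is never a contradiction. You might hope to exploit the stronger fact that $R(1^n)$ encodes the entire prefix $T(1^1),\ldots,T(1^n)$ at the good lengths, but without knowing which lengths are good---and without a poly-time way to certify this, which would itself require $T \in \NP \cap \coNP$---there is no apparent way to extract a $\PTIME$ decider for $T$.

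The second gap is the one you already flag: producing a tally $T \in (\NP \cap \coNP) \setminus \PTIME$ from just $\NE \neq \E$. Via Hartmanis--Immerman--Sewelson this amounts to deriving $\NE \cap \coNE \neq \E$ from $\NE \neq \E$, which is not known; calling it ``padding-plus-diagonalization'' does not supply the missing idea.

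The paper avoids both obstacles with a different encoding. Instead of a single canonical hard string per length, the complement of its $\NP$ language consists of all tuples $(t,w_0,\ldots,w_{2^m-1},s)$ where $t$ is the correct truth table of a fixed $L' \in \NE \setminus \E$ on $\{0,1\}^m$, each $w_i$ is a valid $\NP$-witness for the $1$-entries of $t$, and $s \notin \SAT$. Packaging the witnesses $w_i$ into the string makes $\bar L \in \NP$ automatic with no $\coNP$ needed (the negation of ``$t_i=0 \Rightarrow i\notin L'$'' is an $\NP$ statement), and the separate $\SAT$ component $s$ is what drives $\bar L \notin \PTIME$ via a recursive SAT solver. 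A refuter against the constant-$1$ function must then hand you, at the good lengths, the full truth table of $L'$ on $\{0,1\}^m$ together with checkable witnesses, and reading off $t_i$ (after verifying $w_i$) yields the desired exponential-time algorithm for $L'$.
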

\renewcommand{\footnotenonpa}{}

Thus, under natural conjectures about exponential-time classes, there are some problems in $\NP$ with no constructive separations at all, not even against the trivial algorithm that always accepts.

\subsection{Intuition}

Let us briefly discuss the intuition behind some of our results. We will first focus on the results showing that constructive separations of known lower bounds would imply complexity breakthroughs, as we believe these are the most interesting of our paper. 

\paragraph{Constructive Separations of Known Lower Bounds Imply Breakthroughs.} Suppose for example we want to show that a constructive separation of $\SAT$ from quick low-space streaming algorithms implies $\EXP^{\NP} \neq \BPP$. The proof is by contradiction: assuming $\EXP^\NP = \BPP$, we aim to construct a streaming algorithm running in $n (\log n)^{\omega(1)}$ time and $(\log n)^{\omega(1)}$ space which solves $3\SAT$ correctly on all instances produced by $\PTIME^\NP$ algorithms. Given a $\PTIME^\NP$ algorithm $R$, $\EXP^\NP = \BPP$ implies $\EXP^\NP \subset \Ppoly$, which further implies that the output of $R(1^n)$ must have circuit complexity at most $\polylog(n)$ (construed as a truth table).

Extending work of McKay, Murray, and Williams~\cite{McKayMW19}, we show that $\NP \subset \BPP$ (implied by $\EXP^\NP = \BPP$) implies there is an $n (\log n)^{\omega(1)}$ time and $(\log n)^{\omega(1)}$ space randomized algorithm with one-sided error for finding a $\polylog(n)$-size circuit encoding the given length-$n$ input, if such a circuit exists. So given any input $R(1^n)$ from a potential refuter $R$, our streaming algorithm can first compute a $\polylog(n)$-size circuit $C$ encoding $R(1^n)$, and it construes this circuit $C$ as an instance of the $\mathsf{Succinct\text{-}3SAT}$ problem. Since $\mathsf{Succinct\text{-}3SAT} \in \NEXP = \BPP$, our streaming algorithm can solve $\mathsf{Succinct\text{-}3SAT}(C)$ in $\polylog(n)$ randomized time, which completes the proof.

For our results on constructive query lower bounds, we use ideas from learning theory. Set $\eps \ll 1/\poly(\log n)$. Assuming $\PSPACE = \PTIME$, we want to show that for every $n$-bit string printed by an uniform $\NC^1$ circuit $C$ on the input $1^n$, we can decide the $\GapMAJ_{n,\eps}$ problem with $o(1/\eps^2)$ randomized queries in $\poly(1/\eps)$ time. (Then, any sufficiently constructive lower bound that $\GapMAJ_{n,\eps}$ requires $\Omega(1/\eps^2)$ queries would imply $\PTIME \neq \PSPACE$.) $\PSPACE = \PTIME$ implies that for every uniform $\NC^1$ circuit $C$, its output can be encoded by some $\polylog(n)$-size circuit $D$. Now, also assuming $\PSPACE = \PTIME$, this circuit $D$ can be PAC-learned with error $\eps/2$ and failure probability $1/10$ using only $\poly\log(n)/\eps$ queries (and randomness). Let $D'$ be the circuit we learnt through this process; it approximates $D$ well enough that we can make $O(1/\eps^2)$ random queries \emph{to the circuit $D'$, without querying $D$} in $\poly(1/\eps,\log n)$ time, and return the majority answer as a good answer for the original $n$-bit answer. Such an algorithm only makes $\polylog(n)/\eps \ll o(1/\eps^2)$ queries to the original input and runs in $\poly(1/\eps)$ time.

\paragraph*{Constructive Separations for Uniform Complexity Separations.} Next, we highlight some insights behind the proof of~\Cref{theo:refuter-uniform-sep}. The proof is divided into several different cases (Theorem~\ref{thm:list-nexp},~Theorem~\ref{thm3.5}, and~Theorem~\ref{thm:pp}), and we will focus on the intuition behind Theorem~\ref{thm3.5}, which applies to all complexity classes with a downward self-reducible complete language (such as $\PSPACE$ or $\Sigma_k \PTIME$).

We take the $\PSPACE$ vs.\ $\PTIME$ problem as an example. Gutfreund, Shaltiel, and Ta-Shma~\cite{GutfreundST07} showed how to construct refuters for $\PTIME \neq \NP$, but their proof utilizes the search-to-decision reduction for $\NP$-complete problems, and no such reduction exists for $\PSPACE$. We show how a downward self-reduction can be used to engineer a situation similar to that of~\cite{GutfreundST07}.

Let $M$ be a downward self-reducible $\PSPACE$-complete language and let $A$ be a $\PTIME$ algorithm. We also let $D$ be a polynomial-time algorithm defining a downward-self reduction for $M$, so that for all but finitely many $n \in \N$ and $x \in \{0,1\}^n$,
\begin{equation}
	D(x)^{M_{\le n-1}} = M(x).\label{eq:define-M}
\end{equation} That is, $D$ can compute $M(x)$ given access to an $M$-oracle for all strings of length less than $|x|$.
Our key idea is that~\eqref{eq:define-M} \emph{also} defines $M$. Assuming the polynomial-time algorithm $A$ cannot compute $M$, it follows that~\eqref{eq:define-M} does not always hold if $M$ is replaced by $A$. In particular, the following $\NP$ statement is true for infinitely many $n$: %
\begin{equation}
	\text{$\exists x \in \{0,1\}^n$ such that }
	D(x)^{A_{\le n-1}} \ne A(x). %
	\label{eq:NP-statement}
\end{equation}

Now we use a similar approach as in~\cite{GutfreundST07}: we use $A$ and a standard search-to-decision reduction to find the shortest string $x^*$ so that~\eqref{eq:NP-statement} holds. If $A$ fails to do so, we can construct a counterexample to the claim that $A$ solves the $\PSPACE$-complete language $M$ similarly to~\cite{GutfreundST07}. If~$A$ finds such an $x^*$, then by definition $A(y) = M(y)$ for all $y$ with $|y| \le |x^*|-1$ and we have $A(x^*) \ne M(x^*)$ from~\eqref{eq:NP-statement}, also a counterexample.\footnote{Note the argument above only finds a single counterexample; using a paddable $\PSPACE$-complete language, one can adapt the above argument to find infinitely many counter examples, see the proof of~\Cref{thm3.5} for details.}

\subsection{Organization}

In Section~\ref{sec:prelim} we introduce the necessary definitions and technical tools for this paper, as well as review other related work. 
In Section~\ref{sec:refuter-for-streaming-algorithms} we show that making known streaming and query lower bounds constructive implies major complexity separations, and prove Theorem~\ref{theo:PNP-constructive-to-EXPNP-lowb} and Theorem~\ref{theo:uniformAC0-constructive-to-PNP-lowb}.
In Section~\ref{sec:constructive-sep-MCSP} we show that certain constructive separations for $\MCSP$ imply breakthrough lower bounds such as $\PTIME \ne \NP$, and prove Theorem~\ref{thm:mag_refuter}.
In Section~\ref{sec:uniform-can-constructive} we study constructive separations for uniform classes and prove Theorem~\ref{theo:refuter-uniform-sep}.
In Section~\ref{sec:hard-language-no-constructive-sep} we show that several hard languages do not have constructive separations from any complexity class, and prove Proposition~\ref{prop:norefuter-RKt} and Theorem~\ref{thm:no-np-refuter}. 
Finally, in Section~\ref{sec:future-work} we conclude with some potential future work.

\renewcommand{\theorestated}{{\normalfont\bfseries (Restated)\ }} 

\section{Preliminaries}\label{sec:prelim}
\subsection{Notation}
We use $\widetilde O(f)$ as shorthand for $O(f\cdot \polylog (f))$ throughout the paper. All logarithms are base-2. We use $n$ to denote the number of input bits.
We say a language $L \subseteq \{0,1\}^{\star}$ is $f(n)$-sparse if $|L_n| \le f(n)$, where $L_n = L \cap \{0,1\}^n$.
We assume knowledge of basic complexity theory (see~\cite{AB09-book,GoldreichBook08}).

	\subsection{Definitions of MCSP and time-bounded Kolmogorov complexity}\label{sec:prelim-MCSP-MKtP}

	The Minimum Circuit Size Problem ($\MCSP$)~\cite{kabanets-cai00} and  $t$-time-bounded Kolmogorov complexity ($\Ksupt$) are studied in this paper. We recall their definitions. 

	\begin{definition}[$\MCSP$] Let $s\colon \N \to \N$ satisfy $s(n)\ge n-1$ for all $n$.  
		
		Problem: $\MCSP[s(n)]$.
		
		Input: A function $f\colon\{0,1\}^n\to \{0,1\}$, presented as a truth table of $N= 2^n$ bits.
		
		Decide: Does $f$ have a (fan-in two) Boolean circuit  $C$ of size at most $s(n)$?
	\end{definition}

    We will also consider $\sMCSP$, the search version of $\MCSP$, in which the small circuit~$C$ must be output when it exists.

	For a time bound $t\colon \mathbb{N} \to \mathbb{N}$, recall that the $\mathsf{K^t}$ complexity ($t$-time-bounded Kolmogorov complexity) of string $x$ is the length of the shortest program which outputs $x$ in at most $t(|x|)$ time.
	
	\begin{definition}[$\RKt$] Let $t\colon \N \to \N$.
		
		Problem: $\RKt$.
		
		Input: A string $x \in \{0,1\}^n$.
		
		Decide:  Does $x$ have $\Ksupt(x)$ complexity at least $n - 1$?
		\label{defi:RKt}
	\end{definition}

\subsection{Implications of Circuit Complexity Assumptions on Refuters}

The following technical lemma shows that, assuming uniform classes have non-trivially smaller circuits, the output of a refuter may be assumed to have low circuit complexity. This basic fact will be useful for several proofs in the paper.

\begin{lemma}\label{lemma:P-NP-small-circuit}
Let $s\colon \N \to \N$ be an increasing function. The following hold:

\begin{enumerate}
    \item Assuming $\E^\NP \subset \SIZE[s(n)]$, then for every $\PTIME^\NP$ algorithm $R$ such that $R(1^n)$ outputs $n$ bits, it holds that $R(1^n)$ has circuit complexity at most $s(O(\log n))$.
    \item Assuming $\E \subset \SIZE[s(n)]$, then for every $\PTIME$ algorithm $R$ such that $R(1^n)$ outputs $n$ bits, it holds that $R(1^n)$ has circuit complexity at most $s(O(\log n))$.
    \item Assuming $\SPACE[O(n)] \subset \SIZE[s(n)]$, then for every LOGSPACE algorithm $R$ such that $R(1^n)$ outputs $n$ bits, it holds that $R(1^n)$ has circuit complexity at most $s(O(\log n))$.
\end{enumerate}
\end{lemma}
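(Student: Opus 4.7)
The plan is to handle all three parts by a uniform padding/translation argument. For each part, I would introduce an auxiliary language
\[
L_R = \{ \langle n, i\rangle : \text{the $i$-th bit of } R(1^n) \text{ is } 1 \},
\]
where $n$ is encoded in binary and $i \in \{1,\dots,n\}$ is also encoded in binary, so that the total input length of $L_R$ is $m = O(\log n)$. The goal is to show, in each case, that $L_R$ lies in the class assumed (on the left-hand side of the hypothesized inclusion) to have small circuits, and then to hardwire the binary representation of $n$ into the resulting circuit to extract a circuit for $R(1^n)$ viewed as a truth table.

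Next I would verify the class memberships. Simulating a $\PTIME^{\NP}$ computation $R(1^n)$ takes $\poly(n) = 2^{O(m)}$ time with $\NP$-oracle queries of length $\poly(n)$, which places $L_R \in \E^{\NP}$ and proves part (1). Part (2) is identical with the $\NP$ oracle removed, giving $L_R \in \E$. For part (3), simulating a LOGSPACE computation $R(1^n)$ uses $O(\log n) = O(m)$ workspace, so $L_R \in \SPACE[O(m)]$, which matches $\SPACE[O(n)]$ when the role of $n$ is played by the input length $m$ of $L_R$.

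Applying the respective hypothesis yields, in each case, a circuit family $\{D_m\}$ of size at most $s(m) = s(O(\log n))$ for $L_R$. Fixing $n$ and hardwiring the $\lceil \log n \rceil$ bits encoding $n$ into $D_m$ produces a circuit $C_n$, of size no larger than $s(O(\log n))$, whose only input is the index $i \in \{0,1\}^{\lceil \log n \rceil}$ and whose output is the $i$-th bit of $R(1^n)$. Rounding $n$ up to the next power of two (and then trivially padding $R(1^n)$ with zeros on the unused indices) costs only constant factors and does not affect the asymptotic bound. Since $R(1^n)$ is precisely the truth table of $C_n$, this shows $R(1^n)$ has circuit complexity at most $s(O(\log n))$, as required.

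The argument is a routine padding/translation, so I do not expect a genuine obstacle. The only care required is to track that the $\poly(n)$ time bound on $R$ (or the $O(\log n)$ space bound in the LOGSPACE case) translates to an \emph{exponential} (respectively, \emph{linear}) bound in the input length $m$ of $L_R$, so that the class on the left of each hypothesis is correctly identified before the circuit upper bound $s(\cdot)$ is applied.
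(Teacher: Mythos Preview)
Your proposal is correct and follows essentially the same approach as the paper: define the bit-extraction language $L_R$ on inputs $\langle n,i\rangle$ of length $O(\log n)$, argue it lies in the relevant class ($\E^{\NP}$, $\E$, or $\SPACE[O(n)]$), apply the hypothesized circuit upper bound, and hardwire $n$. The paper only writes out the $\E^{\NP}$ case and declares the other two straightforward, whereas you spell out all three; otherwise the arguments are the same.
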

\begin{proof}
In the following we only prove the first item, the generalization to the other two items are straightforward.

Consider the following function $f_R(n,i)$, which takes two binary integers $n$ and $i \in [n]$ as inputs, and output the $i$-th bit of the output of $R(1^n)$. The inputs to $f_R$ can be encoded in $O(\log n)$ bits in a way that all inputs $(n,i)$ with the same $n$ has the same length. 

Since $R$ is in $\PTIME^\NP$, we have $f_R \in \E^\NP$. By our assumption and fix the first part of the input to~$f_R$ as $n$, it follows that $R(1^n)$ has circuit complexity at most $s(O(\log n))$.
\end{proof}

The following simple corollary of Lemma~\ref{lemma:P-NP-small-circuit} will also be useful.

\begin{corollary}\label{cor:small-circuit-special-case}
    If $\ETIME^\NP \subset \Ppoly$ ($\ETIME \subset \Ppoly$ or $\SPACE[O(n)] \subseteq \Ppoly$), then for every $\PTIME^\NP$ ($\PTIME$ or LOGSPACE) algorithm $R$ such that $R(1^n)$ outputs $n$ bits, it holds that $R(1^n)$ has circuit complexity at most $\polylog(n)$.
\end{corollary}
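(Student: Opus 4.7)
The plan is to derive the corollary as a direct instantiation of Lemma~\ref{lemma:P-NP-small-circuit} with an appropriate polynomial size bound. The condition $\ETIME^\NP \subset \Ppoly$ is just shorthand for the statement that every language in $\ETIME^\NP$ has a polynomial-size circuit family, i.e., there exists a constant $c$ (depending on the language) such that $\ETIME^\NP \subset \SIZE[n^c]$. So the strategy is: pick the function $f_R$ constructed in the proof of Lemma~\ref{lemma:P-NP-small-circuit}, apply the $\Ppoly$ hypothesis to it to obtain a concrete polynomial bound $s(n) = n^c$, and then invoke the lemma.

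In more detail, first I would fix an algorithm $R$ as in the hypothesis and consider the function $f_R(n,i)$ defined in the lemma's proof, which outputs the $i$-th bit of $R(1^n)$ on inputs $(n,i)$ suitably encoded so that all pairs with the same $n$ share the same length $O(\log n)$. In the $\PTIME^\NP$ case, $f_R \in \ETIME^\NP$; in the $\PTIME$ case, $f_R \in \ETIME$; and in the LOGSPACE case, $f_R \in \SPACE[O(n)]$, where the linear space bound holds because the input to $f_R$ has length $\Theta(\log n)$ while $R$ uses $O(\log n) = O(|\text{input to }f_R|)$ space. Then the corresponding $\Ppoly$ hypothesis yields a constant $c$ with $f_R \in \SIZE[n^c]$.

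Next I would set $s(n) := n^c$ in Lemma~\ref{lemma:P-NP-small-circuit} and read off the conclusion: $R(1^n)$ has circuit complexity at most $s(O(\log n)) = (O(\log n))^c = \polylog(n)$, as desired. The three cases of the corollary match the three items of the lemma line by line, so no additional argument is required beyond repeating this specialization.

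There is essentially no obstacle here; the only thing worth flagging is that the constant $c$ in the polynomial size bound may depend on $R$ (since it is the exponent in the circuit size bound for the specific language $f_R$), and correspondingly the exponent inside $\polylog(n)$ in the conclusion depends on $R$. This is fine because the corollary is stated for each fixed $R$ separately.
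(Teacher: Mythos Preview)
Your proposal is correct and matches the paper's intent (the paper gives no explicit proof, simply labeling it a ``simple corollary'' of Lemma~\ref{lemma:P-NP-small-circuit}). One small wrinkle worth being explicit about: you cannot literally invoke the lemma as a black box with $s(n)=n^c$, because the lemma's hypothesis is the uniform containment $\E^\NP \subset \SIZE[n^c]$, which is strictly stronger than $\E^\NP \subset \Ppoly$. You already handle this correctly by going into the proof and applying the $\Ppoly$ hypothesis only to the single language $f_R$, which is all the argument actually uses; just make sure your write-up frames it that way rather than as ``set $s(n):=n^c$ in the lemma.''
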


We also observe that $\PTIME = \NP$ has strong consequences for polylogtime-uniform $\AC^0$ circuits.
\begin{lemma} \label{lem:pnpac0} 
	The following hold:
   \begin{enumerate}
       \item 
   Assuming $\PTIME = \NP$, then for every polylogtime-uniform $\AC^0$ algorithm $R$ such that $R(1^n)$ outputs $n$ bits, it holds that $R(1^n)$ has circuit size complexity at most $\polylog(n)$.
   \item 
   Assuming $\PTIME = \PSPACE$, then for every polylogtime-uniform $\NC^1$ algorithm $R$ such that $R(1^n)$ outputs $n$ bits, it holds that $R(1^n)$ has circuit size complexity at most $\polylog(n)$.
   \end{enumerate}
\end{lemma}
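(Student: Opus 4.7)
The plan is to follow the template of Lemma~\ref{lemma:P-NP-small-circuit}. Given a polylogtime-uniform $\AC^0$ (resp.\ $\NC^1$) algorithm $R$ producing $n$ output bits on input $1^n$, I would define the auxiliary function $f_R(n,i)$ that takes $n$ and $i\in[n]$ each encoded in $\lceil\log n\rceil$ bits --- so inputs sharing the same $n$ have common length $m = 2\lceil\log n\rceil$ --- and outputs the $i$-th bit of $R(1^n)$. It then suffices to place $f_R$ inside a complexity class that, under the hypothesis, collapses into $\PTIME$ on its $m$-bit inputs: this gives $f_R$ a circuit of size $\poly(m)=\polylog(n)$, and fixing the first half of the input to $n$ extracts the claimed $\polylog(n)$-size circuit for $R(1^n)$ as a function of $i$, exactly as in Lemma~\ref{lemma:P-NP-small-circuit}.

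For Part 1, I would use the standard alternating characterization of $\AC^0$. Because $R$ is computed by a polynomial-size, constant-depth $\AC^0$ circuit $C_n$ whose direct-connection language is decidable in $\polylog(n)$ time, evaluating the $i$-th output bit of $C_n(1^n)$ amounts to alternately existentially/universally guessing gate indices (each $\polylog(n)$ bits) along an $O(1)$-length chain from the $i$-th output wire down to an input wire, verifying adjacency using the uniformity machine, and noting that every input wire carries value $1$. This places $f_R$ in $\Sigma_{O(1)}$-alternating polylog time, which with respect to the $m$-bit input sits at a constant level of the polynomial hierarchy. Assuming $\PTIME = \NP$, the hierarchy collapses to $\PTIME$, so $f_R$ admits a $\poly(m) = \polylog(n)$-size circuit.

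For Part 2, I would evaluate the $\NC^1$ circuit by depth-first traversal: on input $1^n$, the $i$-th output of $C_n$ sits at the root of a fan-in-two evaluation tree of depth $O(\log n)$, so it can be computed using a stack of $O(\log n)$ frames, each holding a gate name and auxiliary state of $O(\log n)$ bits, together with scratch space to answer polylogtime direct-connection queries. The total space used is $\polylog(n) = \poly(m)$, placing $f_R$ in $\PSPACE$ on $m$-bit inputs. Under the hypothesis $\PTIME = \PSPACE$, this collapses to $\PTIME$, once again yielding a $\poly(m) = \polylog(n)$-size circuit for $f_R$ and hence for $R(1^n)$.

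The only point requiring some care is the bookkeeping around uniformity: one must check that polylogtime uniformity of $R$ with respect to input length $n$ translates correctly into $\poly(m)$-time (respectively $\poly(m)$-space) verification of direct-connection queries with respect to the new input length $m = O(\log n)$. I do not anticipate this to be a genuine obstacle --- it is essentially a rescaling of parameters --- and no new ideas beyond those in Lemma~\ref{lemma:P-NP-small-circuit} should be needed.
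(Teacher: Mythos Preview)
Your proposal is correct and essentially identical to the paper's own proof: both define the auxiliary function $f(n,i)$ returning the $i$-th output bit of $R(1^n)$, place it in $\PH$ (for $\AC^0$) or $\PSPACE$ (for $\NC^1$) with respect to the $O(\log n)$-bit input via alternating guesses of gate information verified by the polylogtime uniformity machine, and then invoke the collapse assumption to obtain a $\polylog(n)$-size circuit. Your write-up is slightly more explicit about the $\NC^1$ depth-first evaluation and the parameter rescaling, but no new ideas are involved on either side.
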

\begin{proof}
Let $B$ be a polylogtime-uniform algorithm that, on the integer $n$ (in binary) and $O(\log n)$-bit additional input, reports gate and wire information for an $\AC^0$ circuit $R_n$. Consider the function $f(n,i)$ which determines the $i$-th output bit of the circuit $R_n$ on the input $1^n$, given~$n$ and~$i$ in binary. The function $f$ is a problem in $\PH$: given input of length $m=O(\log n)$, by existentially and universally guessing and checking gate/wire information (and using the $\polylog (n)$-time algorithm $B$ to verify the information), the $R_n$ of $n^{O(1)}$ size can be evaluated in $\Sigma_d \TIME[m^k]$ for a constant $d$ depending on the depth of $R_n$, and a constant $k$ depending on the algorithm $B$.
Since $\PTIME = \NP$, $f$ is computable in $\PTIME$, i.e.,  $f$ is in time at most $\alpha m^\alpha$ for some constant~$\alpha$ depending on $k$, $d$, and the polynomial-time $\SAT$ algorithm. Therefore $f$ has a circuit family of size at most $m^c$ for some fixed $c$, where $m = O(\log n)$.
Thus the output of such a family always has small circuits. 

The same argument applies if we replace $\AC^0$ by $\NC^1$ and replace $\PH$ by $\PSPACE$.
\end{proof}

\subsection{Other Related Work}
\label{sec:related-work}

Beyond the prior work on efficient refuters stated in the introduction (such as~\cite{GutfreundST07,Atserias06,DolevFG13}), other work on efficient methods for producing hard inputs includes~\cite{LiptonY94,Gutfreund06, BogdanovTW10,Vereshchagin13,OliveiraS18}).

As mentioned in the introduction, Kabanets~\cite{Kabanets00} defined and studied refuters in the context of derandomization. A primary result from that paper is that it is possible to simulate one-sided error polynomial time ($\RP$) in zero-error subexponential time ($\mathsf{ZPSUBEXP}$) on all inputs produced by refuters (efficient time algorithms that take $1^n$ and output strings of length $n$).\footnote{The exact statement involves an ``infinitely-often'' qualifier, which we omit here for simplicity. A version of the simulation that removes the restriction to refuters, with the addition of a small amount of advice, was given in~\cite{Williams16Derand}.} In other words, nontrivial derandomization is indeed possible when we only consider the outputs of refuters: there is {\bf no} constructive separation of $\RP \not\subset \ZPSUBEXP$. This result contrasts nicely with some of our own, which show that if we could prove (for example) $\EXP = \ZPP$ holds with respect to refuters, then $\EXP = \ZPP$ holds unconditionally. (Of course this is a contrapositive way of stating our results; we don't believe that $\EXP = \ZPP$ holds!) Kabanets' work effectively shows that if $\RP \not\subseteq \ZPSUBEXP$ implied a \emph{constructive separation} of $\RP \not\subseteq \ZPSUBEXP$, then $\RP \subseteq \ZPSUBEXP$ holds unconditionally (because there is no constructive separation of $\RP$ from $\ZPSUBEXP$). Other works in this direction include~\cite{ImpagliazzoW01,TrevisanV07,Lu01,GutfreundST03,ShaltielU09,ChenT21b,ChenT23}.

Chen, Jin, and Williams~\cite{cjw20} studied a notion of constructive proof they called \emph{explicit obstructions}. Roughly speaking, an explicit obstruction against a circuit class $\mathcal{C}$ is a (deterministic) polynomial-time algorithm $A$ outputting a list $L_n$ of input/output pairs $\{(x_i,y_i)\}$ with distinct~$x_i$, such that all circuits in $\caC$ fail to be consistent on at least one input/output pair. Chen, Jin, and Williams show several ``sharp threshold'' results for explicit obstructions, demonstrating (for example) that explicit obstructions unconditionally exist for $n^{2-\eps}$-size DeMorgan formulas, but if they existed for $n^{2+\eps}$-size formulas then one could prove the breakthrough lower bound $\EXP \not\subset \NC^1$. In this work, we are considering a ``uniform'' version of this concept: instead of outputting a list of bad input/output pairs (that do not depend on the algorithm), here we only have to output one bad instance that depends on the algorithm given.

An additional motivation for studying constructive proofs comes from proof complexity and bounded arithmetic. A circuit lower bound for a language $L \in \PTIME$ can naturally be expressed by a $\Pi_2$ statement $S_n$ that says: "For all circuits $C$ of a certain type, there exists $x$ of length~$n$ such that $C(x) \neq L(x)$". In systems of bounded arithmetic such as Cook's theory $PV_1$~\cite{Cook75} (formalizing poly-time reasoning) or Je\v r\'abek's theory $APC_1$~\cite{Jerabek07} (formalizing probabilistic poly-time reasoning), a proof of $S_n$ for infinitely many $n$ immediately implies a constructive separation. The reason is that these theories have efficient witnessing: informally, any proof of a ``$\forall\exists$-statement'' $\forall x \exists y R(x,y)$ (for polynomial-time computable $R$) in these theories constructs an efficiently computable function $f$ such that $R(x, f(x))$ holds. Here the function $f$ plays the role of the refuter in a constructive separation. Therefore, situations in which constructive separations are unlikely to exist may provide clues about whether complexity lower bounds could be independent of feasible theories. Conversely, the constructiveness of a separation is a precondition for the provability of that separation in these feasible theories.\footnote{We note, however, that these connections depend on the complexity classes being separated. A circuit lower bound for an $\NP$ problem does not have an obvious $\Pi_2$ formulation, so the efficient witnessing results mentioned above do not directly apply. More complicated witnessing theorems might still be relevant; we refer to \cite{Pich15}, \cite{MullerPich20}, and the recent book on Proof Complexity by Kraj\'i\v cek~\cite{krajivcek2019proof} for a more detailed discussion of these matters.}

\paragraph{Hardness Magnification.} Another related line of work is hardness magnification~\cite{OliveiraS18,McKayMW19,OliveiraPS18,ChenHOPRS20}. This line of work shows how very minor-looking lower bounds actually hide the whole difficulty of $\PTIME$ vs $\NP$ and related problems. However, one might say that those results simply illuminate large holes in our intuition: those minor-looking lower bounds are far more difficult to prove than previously believed. One has to be skeptical in considering hardness magnification as a viable lower bounds approach, because we really don't understand how difficult the ``minor-looking'' lower bounds actually are.

In this paper, in contrast, we are mainly focused on situations where we already \emph{know} the lower bound holds (and can prove that in multiple ways), but we are striving to prove the known lower bound in a more constructive, algorithmic way. This sort of situation comes up routinely in applications of the probabilistic method, where an object we want can be constructed with randomness, but it is a major open problem to construct it deterministically. Our results indicate that there is a deep technical gap between the major complexity class separation problems, versus many lower bounds we know how to prove. The former type of lower bound problem automatically has constructive aspects built into it, while the latter type of lower bound requires a breakthrough in derandomization in order to be made constructive.

\section{Constructive Separations for Streaming and Query Algorithms imply Breakthrough Lower Bounds}
\label{sec:refuter-for-streaming-algorithms}

Streaming lower bounds and query complexity lower bounds are often regarded as well-understood, and certain lower bounds against one-tape Turing machines have been known for 50 years. In this section we show that surprisingly, making these separations constructive would imply breakthrough separations such as $\EXP^\NP \ne \BPP$ or even $\PTIME \ne \NP$.

\subsection{Making Most Streaming Lower Bounds Constructive Implies Breakthrough Separations}

We show that if randomized streaming lower bounds for \emph{any} language $L$ in $\NP$ can be made constructive, even with a $\PTIME^\NP$ refuter, then $\EXP^{\NP} \ne \BPP$.

\theoPNP*

\begin{remark}
    Let $V(x,y)$ be a verifier for $L$, and assume that the witness length $|y|$ is at most~$|x|$.\footnote{That is, $x \in L$ if and only if there exists $y \in \{0,1\}^*$ such that $|y| \le |x|$ and $V(x,y) = 1$.} Then the randomized streaming algorithms $A$ considered in Theorem~\ref{theo:PNP-constructive-to-EXPNP-lowb} can be further assumed to solve the search-version of $L$ with one-sided error in the following sense: (1) $A$ is also required to output a witness $y$ when it decides $x \in L$ (2) whenever $A$ outputs a witness~$y$, we have $V(x,y)=1$. 
\end{remark}

We need the following lemma for solving $\sMCSP$, which adapts an oracle algorithm from~\cite{McKayMW19}. The original algorithm of~\cite{McKayMW19} has two-sided error: that is, when $x \notin \MCSP[s(n)]$, there is a small probability that the algorithm outputs an incorrect circuit. We modify their approach with a carefully designed checking approach so that the algorithm has only one-sided error.

\begin{lemma}[{\cite[Theorem~1.2]{McKayMW19}}, adapted]\label{lm:McKayMW19-search-MCSP}
	Assuming $\NP \subseteq \BPP$, for a time-constructive $s \colon \mathbb{N} \to \mathbb{N}$, there is a randomized streaming algorithm for $\sMCSP[s(n)]$ on $N$-bit instances (where $N=2^n$) with $O(N \cdot s(n)^c)$ time and $O(s(n)^c)$ space for a constant $c$ such that the following holds.
	\begin{itemize}
		\item If the input $x \in \MCSP[s(n)]$, the algorithm outputs a circuit $C$ of size at most $s(n)$ computing $x$ with probability at least $1 - 1/N$.
		\item If the input $x \notin \MCSP[s(n)]$, the algorithm always outputs NO.
	\end{itemize}

	Alternatively, if we assume $\NP = \PTIME$ instead, the above randomized streaming algorithm can be made deterministic.
\end{lemma}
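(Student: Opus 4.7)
The plan is to take the two-sided-error $\BPP$-streaming algorithm for $\sMCSP$ implicit in \cite{McKayMW19} and append a streaming \emph{verification pass} at the end, turning two-sided error into one-sided error.

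Concretely, the underlying algorithm can be organized as: (i) a streaming pass that reduces $T \in \{0,1\}^N$ to a compact sketch of $\poly(s(n))$ bits --- morally, a random hash chosen so that, by a union bound over the $\le 2^{O(s(n)\log s(n))}$ truth tables of size-$\le s(n)$ circuits, it is collision-free on the relevant set with high probability; (ii) an offline stage that, using $\NP \subseteq \BPP$ plus the standard bit-by-bit search-to-decision reduction for $\NP$, finds a candidate circuit $C^{\ast}$ of size $\le s(n)$ matching the sketch in $\poly(s(n))$ time and space; and (iii) a standard amplification to failure probability $\le 1/N$, which only requires storing a $\poly(s(n))$-bit candidate between repetitions. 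Our modification adds stage (iv): re-stream $T$, and for each $i \in \{0,1\}^n$ evaluate $C^{\ast}(i)$ in $O(s(n))$ time and compare with the incoming bit $T_i$; output $C^{\ast}$ iff every comparison succeeds, otherwise output NO. This pass costs $O(N \cdot s(n))$ time and $O(s(n))$ space, so the overall bounds remain $O(N \cdot s(n)^c)$ time and $O(s(n)^c)$ space.

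Stage (iv) converts the two-sided error guarantee from stages (i)--(iii) into one-sided error: if $T \in \MCSP[s(n)]$, then with probability $\ge 1 - 1/N$ the candidate satisfies $\text{tt}(C^{\ast}) = T$ and stage (iv) outputs it; if $T \notin \MCSP[s(n)]$, then \emph{no} circuit of size $\le s(n)$ has truth table $T$, so whatever candidate is proposed will be detected as wrong and we output NO, unconditionally on the internal randomness. For the deterministic variant under $\NP = \PTIME$ we have $\BPP \subseteq \PH = \PTIME$, so stages (i)--(iii) derandomize: the hash in stage (i) is replaced by a deterministic hash from the same family, obtained by a $\PTIME$ search (using $\NP = \PTIME$) for a member that is collision-free on all size-$\le s(n)$ circuit truth tables, which is an $\NP$-predicate on descriptions of length $\poly(s(n))$; stage (iv) is already deterministic.

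The main obstacle to the plan is realizing stages (i) and (ii) at the claimed budget, since the natural $\NP$-predicate in (ii) --- ``does there exist a size-$\le s(n)$ circuit whose truth table hashes to the stored sketch?'' --- has a $\poly(s(n))$-bit witness but would naively require a $\poly(N)$-time verifier (to re-evaluate the candidate on all $2^n$ inputs and recompute the hash). The fix, following \cite{McKayMW19}, is to choose a hash family for which this predicate admits a verifier running at the circuit-description scale $\poly(s(n))$ rather than the truth-table scale $\poly(N)$. Once this scale issue is handled, the one-sided error conversion via stage (iv) is essentially immediate.
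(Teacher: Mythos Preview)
Your high-level instinct --- get a candidate circuit, then \emph{check} it against the actual stream so that a wrong candidate is always caught --- is the right one, and is exactly what the paper does. But your realization of it has two real problems.

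First, and most importantly, ``re-stream $T$'' is a second pass. A streaming algorithm here is one-pass: once the $i$-th input bit goes by, you cannot see it again, and the whole point of the lemma is that the space is $\poly(s(n))$ rather than $N$, so you cannot buffer the input either. Your stage (iv) is therefore not available. The paper's contribution is precisely to do the verification \emph{online, inside the single pass}. It maintains not one partial circuit but up to $n$ of them, indexed by the binary expansion of the current position $p$, and after each merge of two partial circuits via the oracle it immediately re-evaluates the merged circuit on the just-covered interval $[\alpha,\gamma]$ and aborts with NO on any discrepancy. The binary-tree organization is what makes the total verification cost $\sum_k (N/2^k)\cdot 2^k \cdot \poly(s(n)) = N\cdot\poly(s(n))$ rather than quadratic; a linear chain of merges (buffer $B$ bits, merge, verify, repeat) would force $\Theta(N^2/B)$ verification time. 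This incremental-verification-via-binary-counter idea is the substance of the adaptation, and your proposal does not contain it.

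Second, your description of the underlying \cite{McKayMW19} algorithm is not what that paper does. There is no hash/sketch stage; the algorithm works by repeatedly invoking a $\Sigma_3\PTIME$ problem \textsc{Circuit-Min-Merge} --- given two small circuits $C_1,C_2$ and indices $\alpha<\beta\le\gamma$, output the lex-first size-$\le s(n)$ circuit agreeing with $C_1$ on $[\alpha,\beta)$ and with $C_2$ on $[\beta,\gamma]$ --- which lands in $\BPP$ under $\NP\subseteq\BPP$ and has $\poly(s(n))$-size instances. Your acknowledged ``main obstacle'' (getting the $\NP$ verifier to run at scale $\poly(s(n))$ rather than $\poly(N)$ for a hash predicate) is real, and you do not resolve it; you attribute a fix to \cite{McKayMW19} that is not there. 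The Circuit-Min-Merge formulation sidesteps the obstacle entirely because both inputs to the oracle are already small circuits, never $N$-bit strings.
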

\begin{proof}
	We first recall the $\Sigma_3 \PTIME$ problem Circuit-Min-Merge introduced in~\cite{McKayMW19}; here, we will only consider the version with two given input circuits. In the following we identify the integer~$i$ from $[2^n]$ with the $i$-th string from $\{0,1\}^n$ (ordered lexicographically).
	
	\begin{figure}[ht]
	\renewcommand{\figurename}{Problem} 
	 \caption{Circuit-Min-Merge}
	 \label{algo:merge}
	 \centering
		$\text{Circuit-Min-Merge}[s(n)]$
	\flushleft	
		\textbf{Input:} Given two circuits $C_1,C_2$ on $n = \log N$ input bits and three integers $\alpha < \beta \le \gamma \in [2^n]$.
		
		\textbf{Output:} The lexicographically first circuit $C'$ of size at most $s(n)$ such that for all $\alpha \le z \le \beta-1$, $C'(z) = C_1(z)$, and for all $\beta \le z \le \gamma$, $C'(z) = C_2(z)$. If there are no such circuits, it outputs an all-zero string.
	\end{figure}

	Note that since $\NP \subseteq \BPP$, it follows that $\text{Circuit-Min-Merge}$ is also in $\BPP$. We can without loss of generality assume we have a $\BPP$ algorithm for $\text{Circuit-Min-Merge}$ with error at most $1/N^3$.

We give a brief overview of the proof idea. In the proof of the original two-sided error version \cite[Theorem 1.2]{McKayMW19}, they designed a streaming algorithm which maintains a circuit $C$ such that $C(z) = x_z$ for all the processed input bits $x_z$ so far, where $C$ is periodically updated as more input bits arrive, with the help of the $\BPP$ algorithm for Circuit-Min-Merge. 
 In our one-sided error case, we need to verify that the circuit returned by the $\BPP$ algorithm is indeed correct. In order to perform this verification efficiently, our streaming algorithm proceeds in a binary-tree-like structure (in contrast to the linear structure in \cite[Theorem 1.2]{McKayMW19}), so that we can reduce the total time spent on verification by performing expensive checks less frequently.

	After processing the first $p \in [2^n]$ bits of the input $x$, our streaming algorithm maintains a list of at most $n$ circuits. Specifically, let $p = \sum_{k=0}^{n} a_k \cdot 2^{k}$ be the binary representation of~$p$. For each $k \in [n]$, we maintain a circuit $C_k$ that is intended to satisfy $C_k(z) = x_z$ for all $\sum_{\ell > k} a_\ell \cdot 2^{\ell} < z \le \sum_{\ell \ge k} a_\ell \cdot 2^{\ell}$. Note that when $a_k = 0$, there is indeed no requirement on the circuit $C_k$ and we can simply set it to a trivial circuit.
	
	Now, suppose we get the $p + 1$ bit of the input $x$. We update the circuit list via the following algorithm.
	
	\begin{itemize}
		\item We initialize $D$ to be the linear-size circuit which outputs $x_{p+1}$ on the input $p+1$, and outputs $0$ on all other inputs.
		
		\item For $k $ from $0$ to $n$: %
		\begin{itemize}
			\item If $a_k = 1$, we set $D = \text{Circuit-Min-Merge}(C_k,D,\alpha,\beta,\gamma)$ with suitable $\alpha,\beta,\gamma$, and set $a_k = 0$ and $C_k$ to be a trivial circuit. We next check whether $D$ is indeed the correct output of $\text{Circuit-Min-Merge}(C_k,D,\alpha,\beta,\gamma)$ by going through all inputs in $[\alpha,\gamma]$. We output NO and halt the algorithm immediately if we found $D$ is not the correct output (if $\text{Circuit-Min-Merge}(C_k,D,\alpha,\beta,\gamma)$ outputs the all-zero string, we also output NO and halt the algorithm).
			
			\item If $a_k = 0$, we set $C_k = D$, and set $a_k\gets 1, a_{k-1},a_{k-2},\dots,a_0 \gets 0$ (in this way the binary counter $\sum_{k=0}^n a_k 2^k$ is incremented by $1$), and halt the update procedure.
		\end{itemize}
	\end{itemize}

	After we have processed the $2^n$-bit of $x$, we simply output $C_n$. If $x \in \MCSP[s(n)]$, then by a simple union bound, with probability at least $1-1/N$, all calls to our $\BPP$ algorithm for $\text{Circuit-Min-Merge}$ are answered correctly. In this case $C_n$ is a correct algorithm computing the input $x$. If $x \notin \MCSP[s(n)]$, since we have indeed checked the output of all $\text{Circuit-Min-Merge}$ calls, our algorithm will only output the circuit $C_n$ if it is indeed of size at most $s(n)$ and computes~$x$ exactly. Since $x \notin \MCSP[s(n)]$ implies there is no such circuit $C_n$, our algorithm always outputs NO in this case.
	
	For the running time, note that the above algorithm calls $\text{Circuit-Min-Merge}$ at most $N \cdot \log N \le O(N\cdot s(n))$ times on input of length $\widetilde{O}(s(n))$. Therefore calling $\text{Circuit-Min-Merge}$ only takes $N \cdot \poly(s(n))$ time in total. Note that merging $C_k$ and $D$ takes $2^k \cdot \poly(s(n))$ time to verify the resulting circuit, but this only happens at most $N/2^{k}$ times. So the entire algorithm runs in $N \cdot \poly(s(n))$ time and $\poly(s(n))$ space as stated.
\end{proof}

Now we are ready to prove Theorem~\ref{theo:PNP-constructive-to-EXPNP-lowb}.

\begin{proofof}{Theorem~\ref{theo:PNP-constructive-to-EXPNP-lowb}} The idea is to show that if $\EXP^{\NP} = \BPP$ then we can construct a randomized streaming algorithm for $L \in \NP$ that ``fools'' all possible $\PTIME^{\NP}$ refuters. Interestingly, the assumption is used in three different ways: (1) to bound the circuit complexity of the outputs of $\PTIME^{\NP}$ algorithms, (2) to obtain a randomized streaming algorithm that finds a small circuit encoding the input, and (3) to get an efficient algorithm to find a small circuit encoding a correct witness when it exists.

Let $L \in \NP$, and $V(x,y)$ be a polynomial-time verifier for $L$. Assuming $\EXP^{\NP} = \BPP$, we are going to construct a randomized streaming algorithm $A$, such that it solves $L$ correctly on all possible instances which can be generated by a $\PTIME^\NP$ refuter. 

Let $B$ be an arbitrary $\PTIME^\NP$ refuter. First, by Corollary~\ref{cor:small-circuit-special-case}, $\EXP^{\NP}= \BPP \subset \Ppoly$ implies that for all $n \in \mathbb{N}$, the length-$n$ string $B(1^n)$ has a circuit complexity of $w(n) = \polylog(n)$.

Second, note that $\EXP^{\NP} = \BPP$ also implies that $\NP \subseteq \BPP$. Let $f(n) \ge \omega(1)$ be time-constructive and for $n=2^m$ let $s(m) = (\log n)^{f(n) / c_1}$ for a sufficiently large constant $c_1 > 1$. By Lemma~\ref{lm:McKayMW19-search-MCSP}, we have a one-sided error randomized streaming algorithm $A_{\MCSP}$ for $\sMCSP[s(m)]$ with running time $n \cdot s(m)^{O(1)}$ and space $s(m)^{O(1)}$. Since $w(n) \le s(m)$, we apply $A_{\MCSP}$ to find an $s(m)$-size circuit $C$ encoding $B(1^n)$.

Now, we have an $s(m)$-size circuit encoding the $n$-bit input $B(1^n)$, and we wish to solve the Succinct-$L$ problem\footnote{Here, we define ``Succinct-$L$'' to be: given a circuit $C$ with $\ell$ input bits, decide whether $\tt(C) \in L$, where $\tt(C)$ is the truth table of $C$.} on this circuit. Note that Succinct-$L$ is a problem in $\NEXP$.

$\EXP^{\NP}=\BPP$ implies $\NEXP \subset \Ppoly$, so every Succinct-$L$ instance has a succinct witness with respect to the verifier $V$: this follows from the easy witness lemma of~\cite{ImpagliazzoKW02}. Formally, there exists a universal constant $k \in \N$ such that, for every $s(m)$-size circuit $D$ such that $\tt(D) \in L$, there exists an $s(m)^k$-size circuit $E$ such that $V(\tt(D),\tt(E)) = 1$.

We consider the following problem: 

\begin{quote}
Given an $s(m)$-size circuit $D$ with truth-table length $n=2^m$ and an integer $i \in [\log ( s(m)^k )]$, exhaustively try all circuits of size at most $s(m)^k$, find the first circuit $E$ such that $V(\tt(D),\tt(E)) = 1$, and output the $i$-th bit of the description of $E$.
\end{quote}

Note that the above algorithm runs in $2^{\poly(s(m))}$-time on $\poly(s(m))$-bit inputs, hence it is in $\EXP$. Since $\EXP = \BPP$, this problem is also in $\BPP$. Therefore there is a $\BPP$ algorithm which, given a Succinct-$L$ instance $D$ of size $s(m)$, outputs a description of a canonical circuit of size $s(m)^k$ which encodes a witness for input $\tt(D)$ with respect to verifier $V$.

Thus we obtain a randomized algorithm for $L$ on all instances with $s(m)$-size circuits. When the witness for $x$ has length at most $|x| = n$, the algorithm can take $n \cdot \poly(s(m))$ time to output the found witness, by outputting the truth-table of the circuit encoding the witness.

Setting $c_1$ to be large enough and putting everything together, we get the desired randomized streaming algorithm which solves all instances generated by $\PTIME^\NP$ refuters, which is a contradiction to our assumption. Therefore, it follows that $\EXP^\NP \ne \BPP$.
\end{proofof}

\subsection{Separating P and NP via Uniform-AC0-Constructive Separations}

Now we discuss a different setting, in which the existence of particular refuters would even imply $\PTIME \ne \NP$.

It is well-known (via communication complexity arguments) that $\DISJ$ does not have efficient streaming algorithms; in fact, any streaming algorithm must give incorrect answers on many inputs. So it is clear that counterexamples to $\DISJ$ exist, for every candidate streaming algorithm. But how efficiently can they be constructed? We show that the ability to construct counterexamples in uniform $\AC^0$ would actually imply $\PTIME \ne \NP$.

\theouniformAC*


\begin{proof} 
We prove the contrapositive. Assuming $\PTIME = \NP$, we will show that there is an efficient streaming algorithm that solves all disjointness instances that are generated by polylogtime-uniform $\AC^0$ circuit families. 

From \cref{lem:pnpac0}, we know that the output string of any polylogtime-uniform $\AC^0$ circuit family has circuit size complexity at most $c(\log n)^c$ for some constant $c$.

Next, by Lemma~\ref{lm:McKayMW19-search-MCSP} we know that $\PTIME = \NP$ implies that $\sMCSP$ on input strings with circuits of size $c (\log n)^c$ can be solved by a streaming algorithm in $n\cdot (\log n)^{kc}$ time and $O(\log n)^{kc}$ space for some $k$. Also assuming $\PTIME=\NP$, $\DISJ$ on any $n$-bit input represented by a $c (\log n)^c$-size circuit can be solved in $ck (\log n)^{ck}$ time for some $k$; indeed, the ``Succinct-$\DISJ$'' problem \emph{given a circuit $C$ on $n+1$ inputs, does its truth table on $2^{n+1}$ inputs encode two $2^n$-bit strings which are disjoint?} is a $\co\NP$ problem.

For every function $f(n)\ge\omega(1)$, we can therefore design a streaming algorithm for $\DISJ$ as follows. First, on an input $x$, the algorithm solves $\sMCSP$ using $n \cdot (\log n)^{f(n)}$ time and $(\log n)^{f(n)}$ space to get an $O((\log n)^{f(n)})$ size circuit $C$ encoding $x$ (we abort the algorithm if it ever uses more than this time complexity or space complexity).
Then, we run a $(\log n)^{O(f(n))}$-time algorithm for Succinct-$\DISJ$ on the circuit $C$ (in the theorem statement we omit the big-O on the exponent because we can use $f(n)/c'$ instead of $f(n)$ for a sufficiently large constant $c'$). This will correctly decide disjointness on all inputs $x$ that are generated by a polylogtime-uniform $\AC^0$ circuit family. 
\end{proof}
	 
\subsubsection{Constructive Separations in Query Complexity}
	Finally we show certain uniform-$\AC^0$-constructive separations in query complexity would imply $\PTIME \ne \NP$. 
	
	\theorefuterqc*
	
	\begin{proof}
Assuming $\PTIME = \NP$, we will show that there is an efficient query algorithm that solves all $\GapMAJ_{n,\eps}$ instances that are generated by polylogtime-uniform $\AC^0$ circuit families. 

At the beginning, our query algorithm first computes the value of $\eps$ in $\poly(1/\eps)$ time.

From \cref{lem:pnpac0},	if $\PTIME=\NP$, then for every polylogtime-uniform $\AC^0$ circuit family $\{C_n\}$, the $n$-bit output of $C_n(1^n)$ has circuit size $(c\log n)^c$ for some constant $c$. (The same size bound also holds for polylogtime-uniform $\NC^1$ circuits, under the stronger assumption $\PTIME=\PSPACE$.)
By the assumption that $\eps = \eps(n) \le 1/(\log n)^{\omega(1)}$, this circuit size is at most $(c\log n)^c \le 1/\eps^{0.9}$ for sufficiently large $n$, and the number of circuits of size at most $1/\eps^{0.9}$ is $2^{O(\eps^{-0.9} \log \eps^{-1})}$. Hence,
such a circuit can be PAC-learned with error $\eps/2$ and failure probability $\delta=1/10$ using $O(\eps^{-1}\cdot (\eps^{-0.9} \log \eps^{-1} + \log \frac{1}{\delta})) \le O(\eps^{-1.91})$ samples (random queries) (see e.g.\ \cite[Theorem 2.5]{mohri2018foundations} on learning a finite class of functions). The learning algorithm achieving this sample complexity simply computes a minimum-size circuit that is consistent with all the observed samples. 
	
	Under the assumption of $\PTIME = \NP$, this learning algorithm can be executed in $\poly(1/\eps)$ time. Indeed, the following problem is in the polynomial-time hierarchy:
	\begin{quote}
	\emph{Given a set $L=\{(x_i,y_i)\} \subset \{0,1\}^n \times \{0,1\}$, a positive integer $s$, and an index $j$, output the $j$-th bit of the lexicographically first circuit $C$ of size at most $s$ such that $C(x_i)=y_i$ for all $i$.}
	\end{quote}
	Assuming $\PTIME = \NP$ (or $\PTIME = \PSPACE$) the above problem is in $\PTIME$ and hence can be solved in $\poly(|L|,s,\log n)$ time. Here $s = 1/\eps^{0.9}, |L|\le O(\eps^{-1.91})$, so we can find a minimum-size circuit consistent with any given input/output sample in $\poly(1/\eps,\log n) = \poly(1/\eps)$ time. Let $D$ be the circuit we have learned.
	
	Next, we decide $\GapMAJ_{n,\eps/2}$ on the truth table of $D$, by computing its average output value on $\Theta(1/\eps^2)$ uniform random inputs. This process takes $\poly(1/\eps)$ time, and makes no queries to the original input string. 
	Since the learned circuit $D$ only has error $\eps/2$ compared with the original input string, we can simply return the result as our answer to the original $\GapMAJ_{n,\eps}$ problem.
 The overall algorithm has success probability $2/3$, time complexity at most $\poly(1/\eps)$, and sample complexity $O(\eps^{-1.91}) = o(1/\eps^{2})$, because we do not need further samples from the original input string after we already learned the circuit $D$. 
	\end{proof}

	\subsection{Constructive Separations for One-Tape Turing Machines imply Breakthrough Lower Bounds}

	Maass~\cite{pal} showed that a one-tape nondeterministic Turing machine takes at least $\Omega(n^2)$ time to decide the language of palindromes $\PAL=\{x_n\cdots x_1x_1\cdots x_n \mid x_1,\dots,x_n \in \{0,1\}^n$, $n \in \N\}$. This is a very basic lower bound that is often cited as a canonical application of communication complexity. In this subsection, we show that a constructive proof of this lower bound would imply a breakthrough circuit lower bound. 
	
	In fact, we will prove a much more general statement. We will also generalize the proof to show that for every language $L$ computable by nondeterministic $n^{1 + o(1)}$-time RAMs, a constructive proof that ``$L$ cannot be decided by $n^{1.1}$-size nondeterministic one-tape Turing machines'' would yield uniformly-computable functions with exponential circuit complexity. That is, we would obtain major circuit lower bounds even from the task of distinguishing RAMs from one-tape Turing machines in a constructive way.
	
	We begin by a simple lemma showing  that nondeterministic one-tape Turing machines can solve $\PAL$ on inputs that have small circuits.
	
	\begin{lemma} \label{lemma:fast-NBP}
	For every constant $\delta \in (0,1]$, there is  a nondeterministic $n^{1+O(\delta)}$-time one-tape Turing machines solving $\PAL$ on every $x$ with circuit complexity at most $|x|^{\delta}$.
	\end{lemma}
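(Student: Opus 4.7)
The plan is to use nondeterminism to guess a size-$s$ circuit $C$, with $s = n^\delta$, that is purported to encode $x$, and then verify (i) $C(i) = x_i$ for every $i \in [n]$ and (ii) $C(i) = C(n+1-i)$ for every $i \in [n/2]$. Once both pass, $\tt(C) = x$ is a palindrome; conversely, if $x \in \PAL$ has circuit complexity at most $s$, then a valid $C$ exists and can be guessed. The whole computation fits in $n \cdot \poly(s) = n^{1+O(\delta)}$ steps on a one-tape NTM.

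First the machine walks to cell $n+1$ and nondeterministically writes the guessed circuit into cells $[n+1, n+s]$ at cost $O(n+s)$. The delicate step is then checking $C(i)=x_i$ for every $i$ without paying the $\Omega(n^2)$ shuttling between cell $i$ and a fixed circuit region that a naive one-tape algorithm would pay. My idea is to \emph{carry the circuit alongside the head}: scanning right-to-left ($i = n$ down to $i = 1$), maintain the invariant that the circuit occupies cells $[i+1, i+s]$. At position $i$ the machine reads $x_i$, evaluates $C(i)$ locally using the adjacent circuit description together with an $O(s)$-cell scratch area just to the right (initially blank, later filled with overwriteable junk from earlier iterations), compares the result with $x_i$ and rejects on mismatch, then shifts the circuit one cell to the left in $O(s)$ time via a single sweep that carries one bit at a time in finite state, and finally moves to cell $i-1$. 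Evaluating a size-$s$ circuit entirely within an $O(s)$-radius window of the head takes $\poly(s)$ time, so step (i) costs $O(n\cdot\poly(s)) = n^{1+O(\delta)}$ in total.

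After step (i), the input $x$ has been consumed and the circuit now sits near the left end of the tape. Step (ii) evaluates $C(j)$ and $C(n+1-j)$ for each $j = 1,\ldots,n/2$ in $\poly(s)$ time each and rejects on any disagreement; the total cost is again $n\cdot\poly(s) = n^{1+O(\delta)}$, and no access to the original input is required because correctness of step (i) has already guaranteed $\tt(C) = x$.

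The main obstacle is making step (i) subquadratic: the naive scheme of shuttling between cell $i$ and a fixed circuit region costs $\Omega(n)$ per bit and hence $\Omega(n^2)$ overall, ruining the target bound. The sliding-window invariant is what defeats this; rather than $\Omega(n)$ head motion per input bit, we pay $O(s)$ shifting plus $\poly(s)$ local evaluation per bit, matching the $n^{1+O(\delta)}$ budget. The remaining details---encoding the binary representation of $i$ adjacent to the circuit so it can serve as circuit input, and implementing both the left-shift and the circuit evaluator so they touch only cells within an $O(s)$-neighborhood of the head---are routine one-tape bookkeeping.
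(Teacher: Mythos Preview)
Your proposal is correct and follows essentially the same approach as the paper: guess a small circuit, verify it against the input by carrying the circuit description in a sliding window adjacent to the head (so each bit costs $\poly(s)$ rather than $\Theta(n)$), and then check the palindrome property by evaluating the circuit alone. Your write-up is in fact more explicit than the paper's about why the one-tape implementation avoids the quadratic shuttling cost.
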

	\begin{proof}
	Let $\delta \in (0,1]$.	Our nondeterministic (one-tape) Turing Machine $M$ runs as follows: 
	\begin{quote}
	$M$ guesses a circuit $C$ of size $n^\delta$, and checks that $C(i)$ equals the $i$-th input bit for all $1\le i\le n$, which can be done in $n\cdot n^{O(\delta)}$ time by moving the head on the tape from the first input bit to the last, while storing the $n^\delta$-size circuit $C$ in the cells close to the current position of the head.  Finally $M$ checks that the string $C(1)C(2)\cdots C(n)$ is a palindrome by evaluating $C$ on every $i$ and $n-i$, in $n\cdot n^{O(\delta)}$ total time. $M$ accepts the input on a guess $C$ if and only if all checks are passed.
	\end{quote}
	Observe that $M$ recognizes $\PAL$ correctly on every string $x$ with circuit complexity at most $n^{\delta}$, and its running time is bounded by $n^{1 + O(\delta)}$.
	\end{proof}
	
	Now we show that breakthrough separations follow from constructive proofs of lower bounds for $\PAL$.
	
	\begin{theorem} \label{thm:pal}
	The following hold:
	\begin{itemize}
	    \item A $\PTIME^{\NP}$-constructive separation of $\PAL$ from nondeterministic $O(n^{1.1})$ time one-tape Turing machines implies $\E^{\NP} \not \subset \SIZE[2^{\delta n}]$ for some constant $\delta >0$.
	    
        \item A $\PTIME$-constructive separation of $\PAL$ from nondeterministic $O(n^{1.1})$ time one-tape Turing machines implies $\E \not \subset \SIZE[2^{\delta n}]$ for some constant $\delta >0$.
        
        \item A LOGSPACE-constructive separation of $\PAL$ from nondeterministic $O(n^{1.1})$ time one-tape Turing machines implies $\PSPACE \not \subset \SIZE[2^{\delta n}]$ for some constant $\delta >0$.
	\end{itemize}
	\end{theorem}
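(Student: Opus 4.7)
The plan is to prove all three items by a single contrapositive argument that combines Lemma~\ref{lemma:fast-NBP} (fast nondeterministic one-tape TMs for palindromes on low circuit-complexity inputs) with Lemma~\ref{lemma:P-NP-small-circuit} (circuit-complexity bounds on refuter outputs under a circuit upper bound assumption). Concretely, I would fix a small constant $\delta' > 0$ so that the $n^{1+O(\delta')}$-time machine from Lemma~\ref{lemma:fast-NBP} runs in time $O(n^{1.1})$, and then show that if the corresponding class (respectively $\E^{\NP}$, $\E$, $\PSPACE$) has $2^{\delta n}$-size circuits for a sufficiently small constant $\delta > 0$, then no refuter of the claimed type can exist.

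First I would handle the $\PTIME^{\NP}$ case in detail. Suppose for contradiction that $\E^{\NP} \subset \SIZE[2^{\delta n}]$ for every $\delta > 0$. Let $R$ be any $\PTIME^{\NP}$ algorithm with $|R(1^n)| = n$. By Lemma~\ref{lemma:P-NP-small-circuit}(1), the string $R(1^n)$ has circuit complexity at most $2^{\delta \cdot O(\log n)} = n^{O(\delta)}$. Choosing $\delta$ small enough relative to the constants in Lemma~\ref{lemma:fast-NBP}, this is bounded by $n^{\delta'}$. Applying Lemma~\ref{lemma:fast-NBP} with parameter $\delta'$, there is a single nondeterministic one-tape Turing machine $M$ running in time $n^{1+O(\delta')} = O(n^{1.1})$ that decides $\PAL$ correctly on every input with circuit complexity at most $n^{\delta'}$, in particular on every string of the form $R(1^n)$. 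Hence $M$ decides $\PAL$ correctly on every output of every $\PTIME^{\NP}$ refuter, contradicting the assumed $\PTIME^{\NP}$-constructive separation. Therefore $\E^{\NP} \not\subset \SIZE[2^{\delta n}]$ for some $\delta > 0$.

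The second and third items follow from the identical argument, substituting the appropriate parts of Lemma~\ref{lemma:P-NP-small-circuit}. For the second item, assume for contradiction that $\E \subset \SIZE[2^{\delta n}]$ for every $\delta > 0$; then by Lemma~\ref{lemma:P-NP-small-circuit}(2) every $\PTIME$ algorithm $R$ outputs a string of circuit complexity at most $n^{O(\delta)}$, and Lemma~\ref{lemma:fast-NBP} again gives a single $O(n^{1.1})$-time nondeterministic one-tape machine solving $\PAL$ on all such inputs. For the third item, assume $\PSPACE \subset \SIZE[2^{\delta n}]$ for every $\delta > 0$; since $\SPACE[O(n)] \subseteq \PSPACE$, Lemma~\ref{lemma:P-NP-small-circuit}(3) applies and the outputs of every LOGSPACE refuter have circuit complexity $n^{O(\delta)}$, and the same Lemma~\ref{lemma:fast-NBP} machine finishes the argument.

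There is no real obstacle here: the heavy lifting is already done by Lemma~\ref{lemma:fast-NBP} (which shows that small-circuit palindrome inputs can be checked by nondeterministic one-tape machines in near-linear time, by guessing and verifying the circuit in place) and Lemma~\ref{lemma:P-NP-small-circuit} (which translates circuit upper bounds on uniform classes into circuit upper bounds on the outputs of refuters in those classes). The only care needed is the quantitative choice of constants: one picks the circuit-lower-bound exponent $\delta$ small enough that $n^{O(\delta)}$ falls below the threshold $n^{\delta'}$ tolerated by Lemma~\ref{lemma:fast-NBP}, and $\delta'$ small enough that the resulting $n^{1+O(\delta')}$ bound fits inside $O(n^{1.1})$.
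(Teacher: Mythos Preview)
Your proposal is correct and follows essentially the same approach as the paper: the paper also fixes the constant for Lemma~\ref{lemma:fast-NBP} to get an $O(n^{1.1})$-time machine $M$, then assumes the relevant class has $2^{\delta_1 n}$-size circuits for sufficiently small $\delta_1$ and invokes Lemma~\ref{lemma:P-NP-small-circuit} to conclude that any refuter's output has circuit complexity at most $n^{\delta}$, so $M$ is correct on it. The paper only spells out the $\PTIME^{\NP}$ case and declares the other two ``straightforward''; your explicit observation that $\SPACE[O(n)]\subseteq\PSPACE$ is needed to invoke Lemma~\ref{lemma:P-NP-small-circuit}(3) for the third item is exactly the right bridge.
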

	\begin{proof}
	We will only prove the first item; it is straightforward to generalize to the other two items. Let $\delta > 0$ be a small enough constant such that, by Lemma~\ref{lemma:fast-NBP}, there is a nondeterministic $O(n^{1.1})$-time one-tape Turing machine solving $\PAL$ correctly on inputs $x$ with circuit complexity at most $n^\delta$.
	
	Now suppose there is a $\PTIME^{\NP}$ refuter for $M$: a polynomial-time algorithm $A$ with an $\NP$ oracle, which on input $1^n$ outputs an $n$-bit string. Assuming that $\E^{\NP} \subset \SIZE[2^{\delta_1 n}]$ for a constant $\delta_1 > 0$ that is small enough compared to $\delta$, by Lemma~\ref{lemma:P-NP-small-circuit}	there is a circuit $C$ of size at most $n^{O(\delta_1)} \le n^{\delta}$ that on input $(n,i)$ computes the $i$-th bit of $A(1^n)$. 
	That is, the output of any such $A$ on $1^n$ has circuit complexity at most $n^{\delta}$. By construction, $M$ will always decide $A(1^n)$ correctly, contradicting the assumption that $A$ is a refuter.
	Hence, there must exist a constant $\delta > 0$ such that $\E^{\NP} \not \subset \SIZE[2^{\delta n}]$.
\end{proof}

We say a family of $3$-$\SAT$ formulas $\{C_n\}_{n \in \N}$ such that $C_n$ has $S(n)$ clauses is \emph{strongly explicit}, if there is an algorithm $A$ such that $A(n,i)$ outputs the $i$-th clause of $C_n$ in $\polylog(S(n))$ time. We need the following efficient reduction from nondeterministic $T(n)$-time RAMs to $T(n) \cdot \polylog(T(n))$-size $3$-$\SAT$ instances. 

\begin{lemma}[\cite{Tourlakis01,FortnowLMV05}]\label{lemma:red}
Let $M$ be a $T(n)$-time nondeterministic RAM. There exists a strongly explicit family of $3$-SAT formulas $\{C_n\}_{n \in \N}$ of $T \cdot \polylog(T)$ size, such that for every $x \in \{0,1\}^n$, $M(x) = 1$ if and only if there exists $y$ such that $C_n(x,y) = 1$.
\end{lemma}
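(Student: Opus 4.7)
The plan is to prove Lemma~\ref{lemma:red} as a Cook–Levin-style reduction optimized for nondeterministic RAMs, using an oblivious simulation to handle random access within a $\polylog(T)$ overhead. The key difficulty compared with the classical reduction from nondeterministic Turing machines is that a RAM gets unit-cost random access to memory of size $\poly(T)$, so a naive tableau of $\text{time}\times\text{space}$ variables would already be too large, and a naive local consistency check between tableau rows would miss the genuine dependency structure of RAM steps.

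First, I would fix a RAM model (say with $O(\log T)$-bit words and standard arithmetic/load/store instructions in unit time) and replace the $T(n)$-time computation of $M$ by an \emph{oblivious circuit simulation} of size $T \cdot \polylog(T)$. Concretely, guess the nondeterministic bits $y$ and then run $M(x,y)$ step by step, but instead of maintaining the whole memory explicitly, emit for each step $t\in[T]$ a record of the form $(t,\text{addr}_t,\text{op}_t,\text{val}_t)$ describing the memory access at that step. The local state of the processor (registers, program counter) is updated by a constant-size Boolean sub-circuit per step, using only the value returned on the previous load, so these updates contribute $O(T)$ gates. The memory is then validated globally by sorting all $T$ records by $(\text{addr},t)$ with an $O(T\log T)$-size sorting network and checking, by a constant-size sub-circuit at each adjacent pair of sorted records, that every load returns the most recent store to the same address (and that initial loads return the input bits or zero as appropriate). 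This sorting-network consistency check is the standard tool (going back to Pippenger–Fischer-style oblivious simulations and used in Tourlakis/Fortnow–Lipton–van Melkebeek–Viglas) and it yields a nondeterministic circuit of size $T\cdot\polylog(T)$ computing $M(x,y)$.

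Second, I would convert this circuit to a $3$-CNF $C_n(x,y,z)$ by the Tseitin transformation, introducing one auxiliary variable $z_g$ per internal gate and $O(1)$ clauses of size $\le 3$ asserting $z_g = g(\text{inputs of }g)$, together with a single clause forcing the output gate to $1$. Since the transformation is purely local and preserves size up to a constant factor, the resulting formula has $T\cdot\polylog(T)$ clauses, and the satisfiability of $C_n(x,y,z)$ over $(y,z)$ is equivalent to the existence of an accepting nondeterministic computation of $M$ on $x$. The nondeterministic witness for the lemma statement is the pair $(y,z)$, absorbed into the single existentially quantified $y$.

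Third, for strong explicitness, I would trace through the uniformity of the construction: given an index $i$ of a clause in $\polylog(T)$ bits, the algorithm $A(n,i)$ must decide which sub-circuit the clause came from (processor update at some step $t$, sorting-network comparator at some level/position, or Tseitin gadget) and then output the three literals. Each of these lookups is a $\polylog(T)$-time arithmetic computation on the index, since the sorting network (e.g.\ AKS or batcher) has a uniform $\polylog$-time routing function and the step-by-step transition circuitry is identical across $t$. The main obstacle of the proof is packaging the oblivious memory simulation so that both the size bound $T\cdot\polylog(T)$ and the strong explicitness survive; once the sorting-network layer is in place, the rest is routine Cook–Levin bookkeeping, which is why I would ultimately appeal to \cite{Tourlakis01,FortnowLMV05} for the detailed constants.
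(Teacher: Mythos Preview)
The paper does not give its own proof of this lemma; it is stated with a citation to \cite{Tourlakis01,FortnowLMV05} and used as a black box in the proof of Theorem~\ref{thm:general}. Your sketch is a faithful outline of the standard argument in those references---the sorting-network-based oblivious memory simulation followed by a Tseitin encoding is exactly the route taken there, and your remarks on strong explicitness are on target---so there is nothing to compare against and nothing to correct.
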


Now we are ready to generalize Theorem~\ref{thm:pal} to other problems.

\theogeneral*


\begin{proof}
    Let $M_{\sf RAM}$ be a nondeterministic $n^{1+o(1)}$-time RAM for $L$. We apply Lemma~\ref{lemma:red} to obtain a strongly explicit family of $3$-SAT formulas $\{C_n\}_{n \in \N}$ with $n^{1+o(1)}$ size and $s = n^{1+o(1)}$ variables.

    Let $\delta_1 > 0$ be a small enough constant, and consider the following nondeterministic (one-tape) Turing machine $M$:
    \begin{quote}
	$M$ guesses a circuit $D$ of size $n^{\delta_1}$, and checks that $D(i)$ equals the $i$-th input bit for all $1\le i\le n$, which can be done in $n\cdot n^{O(\delta_1)}$ time by moving the head on the tape from the first input bit to the last, while storing the $n^{\delta_1}$-size circuit $D$ in the cells close to the current position of the head.
	
	Next, $M$ guesses a circuit $E$ of size $n^{\delta_1}$, and accepts if and only if \[D(1),\dotsc,D(n),E(1),\dotsc,E(s-n)\] satisfies $C_n$. Note that this can be checked in $n^{1+O(\delta_1)}$ time by enumerating all $n^{1+o(1)}$ clauses in~$C_n$ and evaluating $D$ and $E$ to obtain the assignments to the corresponding variables.
	\end{quote}
	
	We take $\delta_1$ to be small enough so that the above machine $M$ runs in $O(n^{1.1})$ time. Suppose there is a $\PTIME^\NP$ refuter $B$ for $L$ against $M$, and we further assume towards a contradiction that $\ETIME^\NP \subset \SIZE(2^{\delta n})$ for all $\delta >0$.
	
	By Lemma~\ref{lemma:P-NP-small-circuit}, it follows that $B(1^n)$ has an $n^{\delta_1}$-size circuit. It also follows that if $B(1^n) \in L$, then the lexicographically first string $y_n \in \{0,1\}^{s-n} $ such that $C_n(B(1^n),y_n)$ has an $n^{\delta_1}$-size circuit. By Lemma~\ref{lemma:red}, this means that $M$ solves $B(1^n)$ correctly, a contradiction. Hence, we have that $\ETIME^\NP \not\subset \SIZE(2^{\delta n})$ for some $\delta >0$.\footnote{As noted by an anonymous reviewer, if in the statement of~\autoref{thm:general} we instead have a $\PTIME$-constructive separation, then we would get $\NEXP \not \subset\PTIME/\poly$. The idea is that one can use both $\NEXP \subset \PTIME/\poly$ and the easy witness lemma~\cite{ImpagliazzoKW02} to argue that there exists a string $y_n$ of small circuit complexity such that $C_n(B(1^n),y_n)$ holds.}
\end{proof}

We conclude this section with a remark on the proofs. In the proofs of Lemma~\ref{lemma:fast-NBP} and Theorem~\ref{thm:general}, we can naturally view our constructions as \emph{nondeterministic streaming algorithms} with total time $n^{1+O(\delta)}$ and space $n^{O(\delta)}$. Hence, both results apply to low-space nondeterministic algorithms equally well. We only state the generalization of Theorem~\ref{thm:general} below.

\begin{remark}
	For every language $L$ computable by a nondeterministic $n^{1+o(1)}$-time RAM, a $\PTIME^{\NP}$-constructive separation of $L$ from nondeterministic $O(n^{1.1})$-time $n^{0.1}$-space streaming algorithms implies $\E^{\NP} \not \subset \SIZE[2^{\delta n}]$ for some constant $\delta >0$.
\end{remark}

This remark is stronger than Theorem~\ref{thm:general}, as any $(n\cdot t)$-time $t$-space nondeterministic streaming algorithm can be simulated by an $n \cdot \poly(t)$ time nondeterministic one-tape Turing machine (see, e.g.,~\cite[Lemma~9]{CheraghchiHMY21}). However, we have chosen not to emphasize it because the model of ``nondeterministic streaming'' is less common.

\section{Constructive Separations for MCSP Imply Breakthrough Lower Bounds}\label{sec:constructive-sep-MCSP}

In this section we show that constructive separations for $\MCSP$ against uniform $\AC^0$ imply breakthrough lower bounds. In particular, we prove Theorem~\ref{thm:mag_refuter} (restated below for convenience).
Recall that a circuit of size $S$ is said to be polylogtime-uniform, if there is a $\polylog(S)$-time algorithm that decides the type of a gate $g$ given its $O(\log S)$-bit index,  and decides whether there is a wire from gate $g_1$ to gate $g_2$ given their indices.

\theomag*

Throughout this section, we use $N$ to refer to the size of a truth table of a Boolean function on $n = \log(N)$ bits.

To prove Theorem~\ref{thm:mag_refuter}, we will heavily use known results about pseudo-random generators against $\AC^0$.

\begin{theorem} [\cite{Nisan91, Viola05}]\label{thm:Nisan}  Let $d,c$ be any positive integers. There is a pseudo-random generator $G =\{G_N\}, G_N\colon \{0,1\}^{\log(N)^{O(d)}} \rightarrow \{0,1\}^N$, such that for each $N$, the PRG $G$ $1/N$-fools depth-$d$ $\AC^0$ circuits of size $N^c$. 
Moreover, $G$ is computable by polylogtime-uniform-$\AC^0$ circuits of size $\poly(N)$, and $G_N(z)$ has circuit complexity $\polylog(N)$ for each seed $z$ of length $\log(N)^{O(d)}$.
\end{theorem}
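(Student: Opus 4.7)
The plan is to follow Nisan's classical Nisan--Wigderson-style construction and then verify the uniformity and per-seed circuit complexity properties that form Viola's refinement. Fix parameters $d,c$. Take the parity function on $m = (\log N)^{O(d)}$ bits as the hard function; by H{\aa}stad's switching lemma, parity on $m$ bits cannot be computed by depth-$(d+O(1))$ $\AC^0$ circuits of size $N^{O(c)}$. Choose an NW design $\{S_1,\dotsc,S_N\}$ with $S_i\subseteq[s]$, $|S_i|=m$, $|S_i\cap S_j|\le \log N$ for $i\ne j$, and total seed length $s=(\log N)^{O(d)}$. Define $G_N(z)_i := \bigoplus_{j\in S_i} z_j$; this gives seed length $s$ and output length $N$.

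The fooling property is the standard Nisan--Wigderson reconstruction argument. If some depth-$d$ size-$N^c$ $\AC^0$ circuit $C$ distinguishes $G_N(U)$ from uniform with advantage $1/N$, a Yao-style hybrid argument produces a next-bit predictor. The small-intersection property of the design lets one hard-wire the output bits depending on coordinates outside $S_i$ as depth-$O(1)$ size-$N^{O(1)}$ circuits (each such bit is a parity on at most $\log N$ variables, which is computable by brute-force DNF of size $N$), and one derives a depth-$(d+O(1))$, size-$\poly(N)$ $\AC^0$ circuit for parity on $m$ bits, contradicting H{\aa}stad's lower bound.

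Step two establishes uniformity. I would instantiate the design algebraically, for example by letting subsets come from graphs of low-degree polynomials over a small finite field, so that the predicate ``$j\in S_i$'' is decidable in $\polylog(N)$ time given $i,j$ in binary. Then each output bit $G_N(z)_i$ is the parity of the at most $m=(\log N)^{O(d)}$ bits of $z$ picked out by $S_i$. Parity on $m$ bits admits a depth-$O(d)$ $\AC^0$ circuit of $\poly(N)$ size by recursive divide-and-conquer into $O(\log N)$-bit sub-parities, each implemented as a size-$N$ DNF. Composing this parity gadget with the explicit design lookup yields polylogtime-uniform $\AC^0$ circuits of size $\poly(N)$ computing $G$.

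For the per-seed circuit complexity bound, fix any seed $z \in \{0,1\}^s$ with $s=\polylog(N)$. The truth table $G_N(z)$, regarded as a function on $\log N$-bit inputs $i$, is computed by a circuit that first decodes $S_i$ via the design lookup (of $\polylog(N)$ size) and then XORs the at most $\polylog(N)$ selected bits of $z$, which are constants since $z$ is fixed. The total size is $\polylog(N)$, as required. The main obstacle will be choosing the design so that uniformity of the whole family, the reconstruction argument for the fooling claim, and the per-seed circuit complexity bound are all compatible at once; the algebraic design together with the recursive parity gadget is precisely what makes all three properties go through simultaneously.
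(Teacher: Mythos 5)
The paper cites this theorem to Nisan and Viola without reproducing a proof, so there is no in-paper argument to compare against; your proposal reconstructs exactly the standard route taken in those references: the Nisan--Wigderson generator with parity as the hard function, an algebraically explicit combinatorial design, the recursive parity gadget for $\AC^0$-computability, and the hard-wiring observation for per-seed circuit complexity. The parameters all check out ($m=(\log N)^{O(d)}$, design universe $(\log N)^{O(d)}$, intersections at most $\log N$), and the uniformity and per-seed bounds are argued correctly.

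One small imprecision worth fixing: the NW reconstruction does not produce a depth-$(d+O(1))$, size-$\poly(N)$ circuit that \emph{computes} parity on $m$ bits, but only one that \emph{correlates} with it with advantage roughly $1/N^2$ (the $1/N$ distinguishing advantage is divided across the $N$ hybrids). What you actually need, and what H{\aa}stad's work provides, is the average-case bound that depth-$d'$, size-$s$ $\AC^0$ circuits have correlation at most $2^{-\Omega(m/(\log s)^{d'-1})}$ with parity; plugging in $s=\poly(N)$ and $d'=d+O(1)$ shows this correlation is $\ll 1/N^2$ once $m = (\log N)^{O(d)}$ with a suitably large constant. Replacing the appeal to the worst-case parity lower bound by this correlation bound closes the argument.
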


\begin{corollary}[\cite{Allender01}] \label{cor:MCSPhardAC0}  
For $N=2^n$, let $s(n) \ge n^{\omega(1)}$ be any time-constructive function such that $s(n) \le o(2^n/n)$. Then $\MCSP[s(n)]$ is not in $\AC^0$.
\end{corollary}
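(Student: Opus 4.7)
\medskip

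The plan is to derive a contradiction by assuming $\MCSP[s(n)] \in \AC^0$ and using the PRG of \Cref{thm:Nisan} to exhibit a distribution the supposed $\AC^0$ decider cannot handle.

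First, suppose for contradiction that $\MCSP[s(n)]$ is decided by a family of $\AC^0$ circuits $\{C_N\}$ of depth $d$ and size $N^c$ for some fixed $d, c$. Apply \Cref{thm:Nisan} with these parameters to obtain a PRG $G_N\colon \{0,1\}^{\log(N)^{O(d)}} \to \{0,1\}^N$ that $1/N$-fools depth-$d$, size-$N^c$ $\AC^0$ circuits, and such that every output $G_N(z)$ has circuit complexity at most $\polylog(N) = \poly(n)$.

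Second, observe the two sides of the distinguishing argument. On the pseudorandom side: since $s(n) \ge n^{\omega(1)}$, for all sufficiently large $n$ every seed $z$ produces a string $G_N(z)$ of circuit complexity at most $\poly(n) \le s(n)$, so $G_N(z) \in \MCSP[s(n)]$ and hence
\[
\Pr_z\!\left[ C_N(G_N(z)) = 1 \right] = 1.
\]
On the uniform side: a standard counting argument shows the number of $N$-bit truth tables computable by circuits of size at most $s(n)$ is at most $2^{O(s(n) \log s(n))}$. Since $s(n) \le o(2^n/n) = o(N/\log N)$, this count is $2^{o(N)}$, which is a negligible fraction of $2^N$. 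Therefore
\[
\Pr_{x \in \{0,1\}^N}\!\left[ C_N(x) = 1 \right] \le 2^{-\Omega(N)} = o(1/N).
\]

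The two probabilities differ by more than $1/N$, contradicting the PRG guarantee of \Cref{thm:Nisan}. Hence $\MCSP[s(n)] \notin \AC^0$. I do not foresee a real obstacle: the only subtlety is matching the PRG parameters to the (a priori unknown) depth and size of the hypothetical $\AC^0$ decider, which is handled by the fact that \Cref{thm:Nisan} gives a PRG for every fixed choice of $d$ and $c$; the upper bound $s(n) \le o(2^n/n)$ is exactly what is needed to keep the counting estimate on the uniform side below the PRG's distinguishing threshold.
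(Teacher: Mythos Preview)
Your proposal is correct and takes essentially the same approach as the paper: the paper's brief justification observes that $\MCSP[s(n)]$ distinguishes the uniform distribution on $N$ bits from the output of $G_N$, since every output of $G_N$ is a YES instance (circuit complexity $\polylog(N) \le s(n)$) while a uniformly random $N$-bit string is a NO instance with high probability (by counting). You have simply spelled out the same distinguishing argument in more detail.
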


Corollary \ref{cor:MCSPhardAC0} follows from Theorem \ref{thm:Nisan} by observing that $\MCSP[s(n)]$ distinguishes the uniform distribution on $N=2^n$ bits from the output of $G_N$, since every output of $G_N$ is a YES instance of $\MCSP[s(n)]$, while a random string of length $N$ is a NO instance with high probability. In fact, it follows that for $s(n)$ quite close to maximum, the $\AC^0$ lower bounds are exponential (but with an inverse dependence in the exponent on the circuit depth), similar to known lower bounds for $\Parity$.

First, we show that uniform $\AC^0$ refuters for separations of $\MCSP$ from uniform $\AC^0$ would solve the main open problem in complexity theory. This establishes the first item of Theorem~\ref{thm:mag_refuter}. We find it more convenient here to state the size bound $s(n)$ for $\MCSP$ in terms of the input size $N = 2^n$ than in terms of $n$, so we usually write $\MCSP[f(N)]$ (where $f(N)=s(n)$). Since $s(n)$ is required to be a time-constructive function in the statement of \cref{thm:mag_refuter}, $f(N)$ should be computable in $\poly(f(N))$ time given $N$ represented in binary.

The following theorem implies Item~(1) of \cref{thm:mag_refuter}, since the constant $1$ function can be trivially implemented by a polylogtime-uniform $\AC^0$ algorithm.
\begin{proposition}[Item~(1) of Theorem~\ref{thm:mag_refuter}]\label{thm:MCSPConstructivePNP}
	Let $f(N) \ge 2^{\log \log(N)^{\omega(1)}}$ be 
 a function computable in $\poly(f(N))$ time.
 If there exists a polylogtime-uniform-$\AC^0[\quasipoly]$ refuter for $\MCSP[f(N)]$ against 
 the constant $1$ function,
  then $\PTIME \ne \NP$.
\end{proposition}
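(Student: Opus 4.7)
The plan is to prove the contrapositive: assuming $\PTIME = \NP$, no polylogtime-uniform-$\AC^0[\quasipoly]$ refuter for $\MCSP[f(N)]$ against the constant $1$ function can exist. Since the constant $1$ function always answers YES, any refuter $R$ must, on infinitely many $N$, output a string $x = R(1^N) \in \{0,1\}^N$ that is a NO instance of $\MCSP[f(N)]$, i.e.\ a truth table of circuit complexity strictly greater than $f(N)$. So it suffices to show that under $\PTIME = \NP$, every such $R$ outputs a string of circuit complexity at most $f(N)$.

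The engine is the same bootstrapping argument used in \cref{lem:pnpac0}, adapted to quasipolynomial size. Let $\{R_N\}$ be a polylogtime-uniform $\AC^0$ family of size $S(N) = 2^{\polylog(N)}$ and constant depth $d$, computing the refuter. I would define $g(N,i)$ to be the $i$-th output bit of $R_N(1^N)$, treating $(N,i)$ as an $O(\log N)$-bit input; the polylogtime uniformity algorithm inspects gate/wire information in time $\polylog(S(N)) = \polylog(N)$, and evaluating an $\AC^0$ circuit of depth $d$ amounts to a $\Sigma_d$ guess of the gates on an accepting/rejecting path together with local verification, each guess having length $\log S(N) = \polylog(N)$. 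Hence $g \in \Sigma_d \TIME[\polylog(N)^{O(1)}]$ on inputs of length $O(\log N)$. Under $\PTIME = \NP$, the polynomial-hierarchy collapses to $\PTIME$ with only a polynomial overhead depending on $d$, so $g$ is computable in time $\polylog(N)^{O(1)}$, and thus, fixing the first coordinate to $N$, the string $R_N(1^N)$ (viewed as a truth table of a function on $\log N$ bits) has circuit complexity $\polylog(N)^{O(1)} = 2^{O(\log\log N)}$.

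Now I would compare this to the size threshold $f(N)$. By hypothesis $f(N) \ge 2^{\log\log(N)^{\omega(1)}}$; since $\log\log(N)^{\omega(1)}$ eventually exceeds $c \log\log N$ for every constant $c$, we have $f(N) \ge \omega\!\left(\polylog(N)\right)$, so for all sufficiently large $N$ the circuit complexity of $R_N(1^N)$ is at most $f(N)$. Thus $R_N(1^N)$ is in fact a YES instance of $\MCSP[f(N)]$, on which the constant $1$ function is correct. This holds for all but finitely many $N$, so $R$ fails to refute the constant $1$ function on all but finitely many input lengths, contradicting the definition of a refuter. This establishes Proposition~\ref{thm:MCSPConstructivePNP}.

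The main obstacle, and the only step requiring care, is the quantitative bookkeeping: one must verify that a polylogtime-uniform $\AC^0$ circuit of quasipolynomial size still admits the $\PH$-then-$\PTIME$ collapse trick with only a $\polylog(N)^{O(1)}$ time blowup on inputs of length $O(\log N)$, and that the resulting circuit complexity bound $2^{O(\log\log N)}$ is genuinely dominated by the threshold $2^{\log\log(N)^{\omega(1)}}$. Both points go through cleanly because the depth $d$ of the $\AC^0$ refuter is a constant and the $\omega(1)$ gap in the exponent is comfortably wider than any fixed constant.
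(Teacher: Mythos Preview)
Your proposal is correct and follows essentially the same approach as the paper's proof: assume $\PTIME=\NP$, invoke the argument of \cref{lem:pnpac0} (adapted to quasipolynomial size) to show the refuter's output has circuit complexity $\polylog(N)$, and observe this is below the threshold $f(N)\ge 2^{(\log\log N)^{\omega(1)}}$, so the output is a YES instance and the constant-$1$ algorithm is never refuted. The paper states the weaker circuit-complexity bound $2^{(\log\log N)^{O(1)}}$ rather than your $2^{O(\log\log N)}$, but both suffice and the argument is the same.
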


\begin{proof}
Assume that $\PTIME = \NP$ and that there is a polylogtime-uniform-$\AC^0$ refuter $R$ for $\MCSP[f(N)]$ against the constant $1$ function. We derive a contradiction. 
Using the same argument as in the proof of Theorem~\ref{theo:uniformAC0-constructive-to-PNP-lowb}, the refuter $R$ always outputs a string $x$ of circuit complexity $2^{\log \log(N)^{O(1)}}$. But such a string is a YES instance of $\MCSP[f(N)]$ since $f(N) \ge 2^{\log \log(N)^{\omega(1)}}$. This contradicts the assumption that $R$ refutes the algorithm that always outputs YES.
\end{proof}

By inspecting the proof carefully, it can be seen that the conclusion above holds even if the hypothesis is that there is a quasipolynomial-size uniform $\AC^0$ list-refuter running in quasi-polynomial time.

Next, we show that if a certain natural circuit lower bound assumption holds for the Polynomial Hierarchy, we do get the strongly constructive separations we seek. We obtain these separations by using a win-win argument: for any uniform $\AC^0$ algorithm, either the algorithm outputs NO with noticeable probability, in which case the refuter exploits a PRG whose range is supported on strings of low circuit complexity, or it outputs YES with noticeable probability, in which case the refuter exploits a PRG (obtained using our assumption) whose range is supported on strings of high circuit complexity. This establishes the second item of Theorem~\ref{thm:mag_refuter}.

\begin{proposition}[Item~(2) of Theorem~\ref{thm:mag_refuter}] \label{thm:MCSPConstructiveConditional}
      Let $f(N) \ge 2^{\log \log(N)^{\omega(1)}}$ be 
 a function computable in $\poly(f(N))$ time.
      If $\PH \not \subseteq \SIZE(f(N)^2)$, then there exists a polylogtime-uniform-$\AC^0[\quasipoly]$ refuter for $\MCSP[f(N)]$ against every polylogtime-uniform $\AC^0$ algorithm.
\end{proposition}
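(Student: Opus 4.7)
\begin{proofsketch}
The plan is a win-win argument built on two PRGs against $\AC^0$. Given a polylogtime-uniform $\AC^0$ algorithm $A$ of polynomial size and constant depth, I will construct a YES-PRG whose range supports only YES instances of $\MCSP[f(N)]$ and a NO-PRG whose range supports only NO instances, both fooling $A$ with error $1/N$. For the YES-PRG I use Nisan's generator $G_Y := G_1$ from \cref{thm:Nisan}, instantiated to fool $\AC^0$ circuits of slightly larger size and depth; its seed length is $\polylog(N)$ and every output has circuit complexity $\polylog(N) \ll f(N)$, so $G_Y(z)$ is always a YES instance. For the NO-PRG I invoke the hypothesis to pick $L \in \PH$ and infinitely many $n$ for which the $N$-bit truth table $T^n_L$ (with $N = 2^n$) has circuit complexity exceeding $f(N)^2$; since $L \in \PH$ on $n = \log N$ bits of input, $T^n_L$ is computable by polylogtime-uniform $\AC^0$ circuits of $\quasipoly(N)$ size. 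Define $G_N(z) := G_Y(z) \oplus T^n_L$.

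The two key properties are: (i) each $G_N(z)$ is a NO instance of $\MCSP[f(N)]$, because from $T^n_L = G_Y(z) \oplus G_N(z)$ one can compose a circuit for $T^n_L$ of size at most $\mathrm{size}(G_N(z)) + \polylog(N) + 1$, and if $G_N(z)$ had a circuit of size $f(N)$ then $T^n_L$ would have one of size at most $f(N)^2$, contradicting the choice of $n$; (ii) $G_N$ fools $A$, because $A_L(x) := A(x \oplus T^n_L)$ is obtained from $A$ by negating input wires at positions where $T^n_L$ is $1$, so $A_L$ has the same size and depth as $A$ and is fooled by $G_Y$, and XOR with a fixed string preserves the uniform distribution.

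The refuter $R$, given $A$, enumerates in parallel all $\quasipoly(N)$ seeds of $G_Y$; it first searches for $z$ with $A(G_Y(z)) = 0$ (outputting $G_Y(z)$ if found), and otherwise searches for $z$ with $A(G_N(z)) = 1$ (outputting $G_N(z)$). Correctness follows from the win-win: if no $G_Y(z)$ is rejected, then $\Pr_x[A(x) = 1] \ge 1 - 1/N$ by the PRG property of $G_Y$, so by (ii) $\Pr_z[A(G_N(z)) = 1] \ge 1 - 2/N$, and the second search succeeds. Since $G_Y$, $T^n_L$, and $A$ are each computable by polylogtime-uniform $\AC^0$ of $\quasipoly(N)$ size, and the ``first-match'' selection across $\quasipoly(N)$ candidates is a standard constant-depth gadget, the refuter is polylogtime-uniform $\AC^0[\quasipoly]$, and produces a counterexample on the infinite set of $N$'s where $T^n_L$ is hard.

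The main point requiring care is property (ii): the PRG fooling argument must survive even though $T^n_L$ is a ``hard'' string. It does because XOR with a fixed string is free for an $\AC^0$ distinguisher — it is absorbed as input-wire negations — so the circuit complexity of $T^n_L$ is irrelevant to fooling, and only $|A|$ matters when instantiating $G_Y$.
\end{proofsketch}
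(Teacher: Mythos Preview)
Your proposal is correct and follows essentially the same approach as the paper: both use Nisan's PRG $G$ as the YES-generator and its XOR-shift by the truth table of a hard $\PH$ function as the NO-generator, with the same win-win argument and the same observation that XOR with a fixed string preserves the fooling property. If anything, you are slightly more explicit than the paper on two points—the ``infinitely many $n$'' restriction from the hypothesis, and why the shifted PRG still fools $A$ (input-wire negations)—but the construction and analysis are the same.
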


\begin{proof}
 Let $f$ be as in the statement of the theorem, $F \in \PH$ be such that $F \not \in \SIZE(f(N)^2)$, and $A$ be a polylogtime-uniform-$\AC^0$ algorithm. We construct a polylogtime-uniform-$\AC^0[\quasipoly]$ refuter $R$ against $A$.

Let $G$ be the PRG from Theorem~\ref{thm:Nisan} where $d$ is the depth of the uniform $\AC^0$ algorithm $A$, and let $G' =\{G'_N\}$ be the generator from $\log(N)^{O(d)}$ bits to $N$ bits defined by $G'_N(z) = G_N(z) \oplus y_N$ for each seed $z$, where $y_N$ is the truth table of $F$ on $\log(N)$ input bits (recall $N$ is a power of two). The refuter $R$ outputs the lexicographically first string $x$ in the range of $G$ such that $A(x) = 0$, or in case such a string does not exist, the lexicographically first string $x'$ in the range of $G'$ such that $A(x') = 1$. We will show that either $x$ or $x'$ exists. Note that $R$ can be implemented by polylogtime-uniform quasipolynomial-size $\AC^0$ circuits, since both $G$ and $G'$ have quasi-polynomial sized range and can be computed in uniform $\AC^0$ - this is true for $G$ by Theorem~\ref{thm:Nisan} and it is true for $G'$ because the truth table of any $\PH$ function on $\log(N)$ bits can be computed by uniform $\AC^0$ circuits of size $\poly(N)$.

Since $G$ $1/N$-fools depth-$d$ $\AC^0$ circuits and $G'$ is a linear translate of the range of $G$, $G'$ also $1/N$-fools depth-$d$ $\AC^0$ circuits. We show that there is either a string $x$ in the range of $G$ such that $A(x) = 0$ or a string $x'$ in the range of $G'$ such that $A(x') = 1$. $A$ either outputs NO with probability at least $1/2$ on randomly chosen input of length $N$, or it outputs YES with probability at least $1/2$. In the first case, since $G$ $1/N$-fools $A$, there is a string $x$ in the range of $G$ such that $A(x) = 0$. Moreover, since every string in the range of $G$ is a YES instance of $\MCSP[f(N)]$ by Theorem~\ref{thm:Nisan}, we have that $x$ refutes that $A$ solves $\MCSP[f(N)]$ correctly. In the second case, since $G'$ $1/N$-fools $A$, there is a string $x'$ in the range of $G'$ such that $A(x') = 1$. Moreover, since $F \not \in \SIZE(f(N)^2)$ and every string in the range of $G$ has $\polylog(N)$ size circuits, it follows that every string in the range of $G'$ is a NO instance of $\MCSP[f(N)]$. Thus $A$ makes a mistake on $x'$, implying that $R$ is a correct refuter.
\end{proof}

We also show that slightly weaker constructive separations than desired do hold unconditionally. The argument is similar to the argument in the proof of Theorem~\ref{thm:MCSPConstructiveConditional}, but since we do not use an assumption, we need to argue differently in the case where the algorithm we are refuting outputs YES with high probability. We do so by exploiting the sparsity of the language against which we are showing a lower bound. This establishes the third item of Theorem~\ref{thm:mag_refuter}.

\begin{proposition}[Item~(3) of Theorem~\ref{thm:mag_refuter}] \label{thm:MCSPConstructiveUnconditional}
      Let $f(N) \ge \log(N)^{\omega(1)}$ be
 a function computable in $\poly(f(N))$ time,
      such that $f(N) \le o(N/\log(N))$. There is a polylogtime-uniform-$\AC^0[2^{\poly(f(N))}]$ %
      refuter for $\MCSP[f(N)]$ against every polylogtime-uniform $\AC^0$ algorithm.
\end{proposition}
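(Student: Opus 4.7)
The plan is to generalize Proposition~\ref{thm:MCSPConstructiveConditional} to the unconditional setting, replacing its use of a hard $\PH$ function with a counting/sparsity argument based on the fact that $\MCSP[f(N)]$ has at most $T_{f(N)} \le 2^{O(f(N)\log f(N))}$ YES instances, a $2^{-\Omega(N)}$-fraction of $\{0,1\}^N$ when $f(N) \le o(N/\log N)$. Let $A$ be the polylogtime-uniform $\AC^0$ algorithm of depth $d$ to refute, and let $G$ be the Nisan PRG from Theorem~\ref{thm:Nisan} that $1/N$-fools depth-$(d+O(1))$ $\AC^0$ circuits of size $\poly(N)$. Every $G(z)$ has circuit complexity $\polylog(N) \le f(N)$ and is thus a YES instance.

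The refuter proceeds in two tiers. First it enumerates seeds $z$ and outputs the lex first $G(z)$ with $A(G(z))=0$; if such a $z$ exists, this $G(z)$ is a YES instance rejected by $A$, a refutation as in the conditional result. Otherwise $A$ accepts every element of $\mathrm{range}(G)$, which by the PRG property forces $A$ to accept at least a $(1-1/N)$-fraction of uniform. In this second case the refuter enumerates the larger set
\[
S \;=\; \{\,G(z) \oplus y : z \in \{0,1\}^m,\ y \in Y\,\},
\]
where $Y$ is the set of truth tables of circuits of size $g(N) = \poly(f(N))$ with $g(N)$ large enough that the number of Boolean functions of complexity at most $g(N)$ greatly exceeds the number of complexity at most $f(N)+\polylog(N)$. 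Applying the PRG guarantee to $x \mapsto A(x \oplus y)$ (still $\AC^0$ of depth $d$ and size $\poly(N)$) for each fixed $y$, $A$ accepts at least a $(1-2/N)$-fraction of every shifted slice $\{G(z)\oplus y : z\}$. By counting, most $y \in Y$ have circuit complexity exceeding $f(N)+\polylog(N)$; for any such $y$ the entire slice consists of NO instances, because XORing with the $\polylog(N)$-complexity string $G(z)$ can perturb circuit complexity by at most $\polylog(N)$. Hence exponentially many $(z,y)$ pairs give $A$-accepted NO instances, and the refuter outputs the lex first such pair after explicitly verifying the NO-instance property.

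The key $\AC^0$ implementation detail is the NO-instance verification: checking that ``$x$ is not the truth table of any circuit $C$ of size $\le f(N)$''. For each candidate $C$ and each input $i$, the value $C(i)$ is computed in constant depth by enumerating in parallel all $2^{f(N)}$ possible gate-value assignments for $C$ and $\AND$-ing the local gate-consistency constraints, then $\OR$-ing over satisfying assignments. This gives size $2^{f(N)} \cdot \poly(f(N))$ per evaluation, and after parallelizing over $N$ inputs and $\AND$-ing over all $2^{O(f(N)\log f(N))}$ candidate circuits $C$, total size $2^{\poly(f(N))}$ at constant depth (using $f(N) \ge \log(N)^{\omega(1)}$, so $\log N$ is absorbed into $f(N)$). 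Uniformity follows from the uniformity of the Nisan PRG, the straightforward enumeration of circuit encodings, and the uniformity of $A$. I expect the main obstacle to be balancing all parameters — the PRG depth, the choice of $g(N)$, and the gate-value enumeration — so that the assembled refuter is a single constant-depth polylogtime-uniform $\AC^0$ circuit of size $2^{\poly(f(N))}$, while checking that the counting argument tightly guarantees existence of an $A$-accepted NO-instance candidate despite the $1/N$ PRG error.
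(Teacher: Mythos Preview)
Your approach is correct and genuinely different from the paper's in the second tier. Both proofs share the first tier exactly: enumerate the range of the small Nisan PRG $G$ and output any YES instance $G(z)$ that $A$ rejects. They diverge when this fails. The paper instantiates a \emph{larger} Nisan PRG $G'$ with seed length $\poly(f(N))$, designed to $1/N$-fool circuits of size $2^{\poly(f(N))}$; it then defines an auxiliary circuit $A'$ that outputs NO on every YES instance of $\MCSP[f(N)]$ (by brute-force enumeration of small circuits) and otherwise mimics $A$, observes that $A'$ still accepts a $0.49$ fraction of uniform inputs by the sparsity bound $|\MCSP[f(N)]_N|\le 2^{O(f(N)\log N)}\ll 2^N$, and concludes that some $x'$ in the range of $G'$ is an $A$-accepted NO instance. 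Your route instead keeps the \emph{small} PRG and XOR-shifts it by truth tables $y$ of $g(N)$-size circuits, directly extending the conditional argument of Item~(2): you use that $G$ fools each $A(\cdot\oplus y)$ (same depth and size as $A$), together with the circuit size hierarchy to guarantee a $y\in Y$ of complexity above $f(N)+\polylog(N)$, hence a slice of NO instances most of which $A$ accepts.

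Two minor remarks. First, your claim that ``most $y\in Y$'' are hard is stronger than you need; existence of a single such $y$ suffices, and that is exactly the non-uniform size hierarchy (for $g(N)=O(f(N))$ already). Second, what your approach buys is that you never invoke a PRG against $2^{\poly(f(N))}$-size circuits; the only pseudorandomness needed is against $\poly(N)$-size depth-$d$ circuits. The paper's approach is arguably more streamlined in that existence in tier two is a one-line PRG application to $A'$, whereas you combine two separate ingredients (shift-invariance of fooling plus size hierarchy). Both yield polylogtime-uniform $\AC^0$ refuters of size $2^{\poly(f(N))}$, and your $\AC^0$ implementation of the NO-instance check via gate-value enumeration is the same mechanism the paper uses implicitly when it asserts that YES instances can be enumerated in uniform $\AC^0$ of that size.
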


\begin{proof}
Given a polylogtime-uniform $\AC^0$ algorithm $A$, we define a uniform $\AC^0$ refuter $R$ of size $2^{\poly(f(N))}$. For any $d$, let $G^d$ be the PRG from Theorem~\ref{thm:Nisan} corresponding to depth $d$, and let $G = G^d$ where $d$ is the depth of the uniform $\AC^0$ algorithm $A$, so that $G$ $1/N$-fools $A$ on input length $N$. 
Let $G'$ be the generator with seed length $\poly(f(N))$ obtained by truncating the output of $G^{d'}_{2^{f(N)^c}}$ to $N$ bits (where $d'$ and $c$ are to be specified later), so that $G'$ $1/N$-fools depth-$d'$ $\AC^0$ circuits of $2^{\poly(f(N))}$ size.
$R$ works as follows. It outputs the lexicographically first string $x$ in the range of $G$ for which $A(x) = 0$, and if such an $x$ does not exist, it outputs the lexicographically first string $x'$ in the range of $G'$ that is not a YES instance of $\MCSP[f(N)]$ for which $A(x') = 1$. We show that such an $x'$ always exists in the case that $x$ does not, and that moroeover $A$ is a correct refuter.
Since $G$ and $G'$ can be computed by uniform $\AC^0$ circuits of size exponential in $\poly(f(N))$ and moreover the YES instances of $\MCSP[f(N)]$ can be enumerated by uniform $\AC^0$ circuits of size exponential in $\poly(f(N))$, we have that the refuter can be implemented by uniform $\AC^0$ circuits of size exponential in $\poly(f(N))$. %

Either $A$ outputs NO with probability greater than $1/2$ on a uniformly chosen input of length $N$, or it does not. In the first case, since $G$ $1/N$-fools $A$, there must be a string $x$ in the range of $G$ for which $A(x) = 0$. Moreover, since every string in the range of $G$ has circuit complexity $\polylog(N) \ll f(N)$, we have that $x$ is a YES instance of $\MCSP[f(N)]$, and hence the refuter correctly outputs an input on which $A$ makes a mistake in this case.

Suppose $A$ outputs YES with probability at least $1/2$. We define a uniform $\AC^0$ algorithm $A'$ of size $2^{\poly(f(N))}$ as follows. $A'$ first enumerates all YES instances of $\MCSP[f(N)]$. Note that there are at most $2^{O(f(N)\log N)}$ YES instances, and they can be enumerated by an $\AC^0$ algorithm of size $2^{\poly(f(N))}$ by running over all circuits of size at most $f(N)$ and guessing and checking their computations. $A'$ checks if its input $x'$ is in the list of YES instances of $\MCSP[f(N)]$ or not. If it is, it outputs NO, otherwise it runs $A$ on $x'$ and outputs the answer. $A'$ can be implemented by polylogtime-uniform $\AC^0$ circuits of size $2^{\poly(f(N))}$ and constant depth. 
Note that $A'$ outputs YES with probability at least $1/2 - \frac{2^{O(f(N)\log N)}}{2^N}>0.49$ (where we used $f(N)\le  o(N/\log N)$).
Now, by choosing the parameters $c$ and $d'$ in the first paragraph large enough so that $G'$ $1/N$-fools $A'$, we have that at least a $0.49-1/N$ fraction of outputs $x'$ of $G'$ have $A'(x') = 1$, and hence there is a lexicographically first such output. Moreover, since $A'$ outputs NO on all YES instances of $\MCSP[f(N)]$, it must be the case that $x'$ is a NO instance of $\MCSP[f(N)]$. By definition of $A'$ we know $A(x')=A'(x')=1$, so $A$ makes a mistake on $x'$ when trying to solve $\MCSP[f(N)]$.
\end{proof}

Finally, we observe that the strongly constructive separations we seek do hold in the case of the well-known lower bound for $\Parity$ against $\AC^0$. Indeed, in this case we actually get an oblivious list-refuter (a.k.a. an explicit obstruction), meaning that the list-refuter does not need to depend on the algorithm being refuted. This establishes the fourth item of Theorem~\ref{thm:mag_refuter}.

\begin{theorem}[\cite{Ajtai83, FSS84, Yao85, Hastad86}] \label{thm:hastad} 
For each integer $d$, $\Parity$ does not have depth-$(d+1)$ $\AC^0$ circuits of size $2^{O(N^{1/d})}$.
\end{theorem}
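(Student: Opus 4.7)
The plan is to prove this classical lower bound via Hastad's switching lemma combined with iterated random restrictions, following the standard approach from Hastad's thesis.

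First I would establish Hastad's switching lemma in the following form: if $F$ is a $t$-DNF on $N$ variables, and $\rho$ is a random restriction that independently keeps each variable alive with probability $p$ (otherwise assigning it uniformly at random to $0$ or $1$), then the probability that $F|_\rho$ requires decision-tree depth at least $s$ is at most $(5pt)^s$. In particular, with failure probability at most $(5pt)^s$, the restricted function $F|_\rho$ can be rewritten as an $s$-CNF. I would prove this via Razborov's encoding argument: each restriction for which the minimum decision tree has depth $\ge s$ can be injectively encoded into a restriction that kills more variables plus a short advice string, and a counting argument then yields the claimed probability bound.

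Next, suppose for contradiction that $C$ is a depth-$(d+1)$ $\AC^0$ circuit of size $S = 2^{c N^{1/d}}$ computing $\Parity_N$, for a sufficiently small constant $c = c(d)$. After standard normalization (duplicating inputs and splitting bottom gates of large fan-in), assume the bottom layer has fan-in at most $t := c' N^{1/d}$. I would then iteratively apply $d$ rounds of random restrictions with parameters $p_1,\dotsc,p_d$, each satisfying $5 p_i t \le 1/2$ (so each $p_i = \Theta(N^{-1/d})$). In each round, the switching lemma with $s = c''N^{1/d}$ and a union bound over the at most $S$ gates of the second-to-bottom layer yields a single restriction under which every such gate collapses to a depth-$s$ decision tree. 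The resulting trees can be flipped between CNF and DNF form and merged with the layer above, reducing the total depth by one while keeping the bottom fan-in at most $t$.

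Third, I would track the alive variables: after $d$ rounds the expected number alive is $N \cdot \prod_i p_i$, which by tuning the constants stays at $\omega(1)$ and is $\ge 1$ with positive probability (a Chernoff bound makes this rigorous). The surviving circuit now has depth $1$ with bottom fan-in $t$, so by another application of the switching lemma it collapses to a constant. However, $\Parity|_\rho$ equals $\Parity$ on the alive variables, possibly XORed with a constant, and is therefore non-constant whenever at least one variable survives, giving the contradiction. The main obstacle is proving the switching lemma with the sharp constant $5$: careful bookkeeping of the encoding-decoding argument is needed so that the bound $(5pt)^s$ is small enough to survive the union bound over all $S$ second-layer gates in every one of the $d$ rounds simultaneously, while still leaving a surviving variable at the end.
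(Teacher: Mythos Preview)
The paper does not prove this theorem at all; it is stated as a classical result with citations to \cite{Ajtai83, FSS84, Yao85, Hastad86} and then used as a black box in the proof of Item~(4) of Theorem~\ref{thm:mag_refuter}. Your proposal is essentially the standard proof from those references (in particular H{\aa}stad's), so there is nothing to compare against in the paper itself.

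One minor imprecision in your sketch: the ``standard normalization'' by which you assume bottom fan-in at most $t = c'N^{1/d}$ is not achieved by literally splitting wide bottom gates (splitting an $\AND$ into an $\AND$ of $\AND$s and then merging with the layer above just recreates the original gate). The usual fix is to observe that under the first random restriction, every bottom $\AND$ of width exceeding $\Theta(\log S)$ is fixed to $0$ with probability $1 - o(1/S)$, so a union bound kills all wide bottom gates; alternatively one uses a first round with constant $p_1$ dedicated to bounding the bottom fan-in before the $d-1$ genuine switching rounds. This also affects your variable-survival count: with all $d$ rounds at $p_i = \Theta(N^{-1/d})$ you are left with only $\Theta(1)$ live variables in expectation, whereas the standard parameterization (first round with constant $p$, remaining $d-1$ rounds with $p = \Theta(N^{-1/d})$) leaves $\Theta(N^{1/d})$ live variables, comfortably exceeding the final bottom fan-in. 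None of this changes the overall strategy, which is correct.
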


\begin{proposition}[Item~(4) of Theorem~\ref{thm:mag_refuter}]
   There is a  polylogtime-uniform-$\AC^0[\quasipoly]$-list-refuter for $\Parity$ against every polylogtime-uniform $\AC^0$ algorithm.
\end{proposition}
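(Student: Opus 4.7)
The plan is to produce a quasipolynomial-sized list $L\subseteq \{0,1\}^N$, computable by a polylogtime-uniform $\AC^0$ circuit of quasipolynomial size, such that no polylogtime-uniform $\AC^0$ algorithm $A$ can correctly output $\Parity(x)$ for every $x\in L$. The construction is
\[
L \;=\; \bigl\{\,G(z)\oplus e_i \,:\, z\in\{0,1\}^{\polylog(N)},\ i\in\{0,1,\dots,N\}\bigr\},
\]
where $G$ is the Nisan PRG from Theorem~\ref{thm:Nisan}, instantiated with error parameter $1/N^3$ and sufficiently large (but fixed) depth, $e_0:=0^N$, and $e_i$ for $i\ge 1$ is the $i$-th standard basis vector. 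Then $|L|\le(N+1)\cdot 2^{\polylog(N)}=\quasipoly(N)$, and since $G$ is itself computable in polylogtime-uniform $\AC^0$, the list $L$ is produced by a polylogtime-uniform $\AC^0$ circuit of quasipolynomial size that simply enumerates all pairs $(z,i)$.

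Suppose for contradiction that some polylogtime-uniform $\AC^0$ algorithm $A$ satisfies $A(x)=\Parity(x)$ for every $x\in L$. Since $\Parity$ flips under a single-bit flip, this forces $A(G(z))\ne A(G(z)\oplus e_i)$ for every seed $z$ and every $i\in[N]$. Equivalently, the $\AC^0$ circuit $D_iA(x):=A(x)\oplus A(x\oplus e_i)$, of size $O(|A|)$ and depth one more than $A$, is identically $1$ on the range of $G$. Applying the PRG property of $G$ to each $D_iA$ yields $\Pr_{x\sim U_N}[D_iA(x)=1]\ge 1-1/N^3$ for every $i\in[N]$.

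Set $f(x):=A(x)\oplus \Parity(x)$; a direct computation gives $D_iA(x)=1$ if and only if $f(x)=f(x\oplus e_i)$. Averaging over random edges $\{x, x\oplus e_i\}$ of the Boolean hypercube, this yields $\Pr_{x,i}[f(x)\ne f(x\oplus e_i)]\le 1/N^3$, so $f$ has edge boundary of relative size at most $1/N^3$. The edge-isoperimetric inequality for the Boolean hypercube ($|\partial U|\ge |U|$ whenever $|U|\le 2^{N-1}$) then gives $\min(|f^{-1}(0)|,|f^{-1}(1)|)\le 2^{N-1}/N^2$, so $A$ agrees with either $\Parity$ or its negation on a $(1-1/N^2)$-fraction of all inputs. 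This contradicts the standard correlation consequence of Håstad's theorem (Theorem~\ref{thm:hastad}), namely that any polynomial-size constant-depth $\AC^0$ circuit has correlation at most $2^{-N^{\Omega(1)}}$ with $\Parity$.

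The main obstacle is the depth management: each polylogtime-uniform $\AC^0$ algorithm $A$ has its own constant depth $d_A$, and the argument needs the PRG to fool the depth-$(d_A+1)$ derivative circuits $D_iA$. For a single oblivious refuter, one fixes the depth parameter of Theorem~\ref{thm:Nisan} to a suitably large constant, so that the argument goes through for every specific $A$ at all sufficiently large $N$; this satisfies the ``for infinitely many $N$'' requirement in the definition of a list-refuter.
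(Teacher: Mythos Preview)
Your argument is correct, but it takes a substantially more elaborate route than the paper's.

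The paper's proof is a one-line padding trick: the list is simply $\{\,y\,0^{N-\log(N)^d} : y\in\{0,1\}^{\log(N)^d}\}$. If a depth-$(d{+}1)$ algorithm $A$ solves $\Parity$ on every string in this list, then fixing the trailing zeros shows that $\Parity$ on $m=\log(N)^d$ bits has a depth-$(d{+}1)$ circuit of size $\poly(N)=2^{O(m^{1/d})}$, contradicting Theorem~\ref{thm:hastad} directly. No PRG, no isoperimetry, no correlation bound.

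Your construction instead uses Nisan's PRG plus single-bit flips, then a discrete-derivative argument to transfer ``$D_iA\equiv 1$ on the range of $G$'' to ``$D_iA\approx 1$ under the uniform distribution'', then the hypercube edge-isoperimetric inequality to conclude $A$ is $(1-1/N^2)$-close to $\Parity$ or its negation, and finally H\aa stad's \emph{correlation} bound. Every step checks out (the computation $D_iA(x)=1 \Leftrightarrow f(x)=f(x\oplus e_i)$ is correct, and the edge bound $|\partial U|\ge|U|$ for $|U|\le 2^{N-1}$ follows from Harper's theorem). What your approach buys is a list supported on ``generic-looking'' strings rather than heavily padded ones, at the cost of invoking three additional tools beyond the worst-case lower bound in Theorem~\ref{thm:hastad}.

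One correction to your final paragraph: the sentence ``For a single oblivious refuter, one fixes the depth parameter \ldots\ to a suitably large constant, so that the argument goes through for every specific $A$ at all sufficiently large $N$'' is not right as stated. A PRG instantiated for a fixed depth $d_0$ does not fool $D_iA$ when $A$ has depth larger than $d_0-1$, regardless of how large $N$ is. The correct resolution---which both your construction and the paper's share---is that the list-refuter is allowed to depend on $A$, and in fact only needs to depend on the depth of $A$; you then choose the PRG (or, in the paper, the padding amount $\log(N)^d$) accordingly.
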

\begin{proof}
In fact, we show that for all $d$ there is an oblivious list-refuter $R$ that refutes depth-$(d+1)$ $\AC^0$ algorithms by outputting a $\quasipoly$-size set of strings of length $N$. The list-refuter $R$ simply outputs the set of all strings of the form $y0^{N-\log(N)^d}$ where $y \in \{0,1\}^{\log(N)^d}$. Suppose, for the sake of contradiction, that there is a uniform depth-$(d+1)$ $\AC^0$ algorithm $A$ that correctly solves $\Parity$ on all strings output by $R$. Then we can compute $\Parity$ by circuits of size $2^{O(m^{1/d})}$ on input $y$ of length $m$ as follows: pad $y$ to length $2^{m^{1/d}}$ by suffixing it with zeroes, then run $A$ on the padded string. This contradicts the lower bound of Theorem~\ref{thm:hastad}.
\end{proof}

\section{Most Conjectured Uniform Separations Can Be Made Constructive}\label{sec:uniform-can-constructive}

In this section we show many uniform separations imply corresponding refuters. We will prove Theorem~\ref{theo:refuter-uniform-sep} (restated below).

\renewcommand{\footnoterefuteruniformsep}{}
\theorefuteruniformsep*

	

We will prove the case of $\caD \in \{\PSPACE, \EXP, \NEXP, \EXP^\NP\}$ in \cref{sec:refuter-nexp}, $\caD\in \{\Sigma_k \PTIME\}_{k \ge 1}$ in \cref{sec:refuter-np}, and $\caD\in \{\PP, \ParP\}$ in \cref{sec:refuter-pp}.

In the proofs of this section we frequently use the following notion of list-refuters, which is a relaxation of refuters (\cref{defn:refuter}) that allows outputting a \emph{constant} number of strings with possibly \emph{different lengths} (instead of a single string), and only requires at least one of the strings is a counterexample: 
\begin{definition}[Constant-size list-refuters]
\label{defn:constsizelistref}
		For a  function $f \colon \{0,1\}^{\star} \rightarrow \{0,1\}$ and an algorithm $A$, a \emph{constant-size} $\PTIME$-\emph{list-refuter} for $f$ against $A$ is a deterministic polynomial time algorithm $R$ that, given input $1^n$, prints a list of $c$ strings $x_n^{(1)},x_n^{(2)},\dots,x_n^{(c)}\in \{0,1\}^*$ (for a constant~$c$ independent of $n$),  such that for infinitely many $n$,  there exists $i\in [c]$ for which $A(x_n^{(i)})\neq f(x_n^{(i)})$.
	Moreover, for every $i\in [c]$, there is a strictly increasing polynomial $\ell^{(i)}\colon \N \to \N$ such that $|x_n^{(i)}|=\ell^{(i)}(n) \ge n$ for all integers $n$.

  This definition can be generalized to constant-size $\BPP$-list-refuters and constant-size $\ZPP$-list-refuters similarly to \cref{defn:refuter}.

\end{definition}

The following simple lemma says that a constant-size list-refuter as defined in \cref{defn:constsizelistref} can be converted to a refuter as defined in \cref{defn:refuter}.
\begin{lemma}
	\label{rem:list}
		For function $f \colon \{0,1\}^{\star} \rightarrow \{0,1\}$ and algorithm $A$,
  \begin{itemize}
      \item 
  A constant-size $\PTIME$-list-refuter for $f$ against $A$ implies that there exists a $\PTIME$-refuter for $f$ against~$A$.
  \item 
	For $\caD \in \{\BPP,\ZPP\}$,	 a pseudo-deterministic constant-size $\caD$-list-refuter (i.e., for each input length $n$ there is a canonical list such that the refuter outputs the canonical list with $1-o(1)$ probability given input $1^n$) for $f$ against $A$ implies that there exists a $\caD$-refuter for $f$ against~$A$.
  \end{itemize}
  
\end{lemma}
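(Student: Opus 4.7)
The plan is to apply pigeonhole over the constant index set $[c]$ and then invert the length polynomial to turn a constant-size list-refuter into a single-output refuter in the sense of \cref{defn:refuter}.

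First I would fix a good index. By hypothesis, for infinitely many $n$ at least one of the $c$ output strings $x_n^{(1)},\ldots,x_n^{(c)}$ is a counterexample against $A$. Because $c$ is a constant independent of $n$, pigeonhole produces an index $i^{\star}\in[c]$ for which $x_n^{(i^{\star})}$ is a counterexample for infinitely many $n$. In the pseudo-deterministic $\BPP$ or $\ZPP$ case, I would apply this argument to the canonical list $L_n$ guaranteed by pseudo-determinism, so the choice of $i^{\star}$ is purely combinatorial and independent of the coin tosses of $R$.

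Next I would construct the single-output refuter $R'$ using this $i^{\star}$. On input $1^m$, $R'$ first tests whether $m$ lies in the image of $\ell^{(i^{\star})}$ by binary searching for $n\in\{1,\ldots,m\}$ with $\ell^{(i^{\star})}(n)=m$; this takes $\poly(\log m)$ evaluations of the polynomial $\ell^{(i^{\star})}$ since $\ell^{(i^{\star})}$ is strictly increasing and $\ell^{(i^{\star})}(n)\ge n$ forces $n\le m$. If no such $n$ exists, $R'$ simply outputs $0^m$ (these lengths contribute nothing to correctness). Otherwise $R'$ runs $R(1^n)$ and returns the $i^{\star}$-th string of the resulting list, which has length exactly $\ell^{(i^{\star})}(n)=m$. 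Since $n\le m$ and $R$ is polynomial time, $R'$ runs in $\poly(m)$ time.

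For correctness, each of the infinitely many $n$ on which $x_n^{(i^{\star})}$ is a counterexample yields a value $m=\ell^{(i^{\star})}(n)$ on which $R'(1^m)$ outputs a counterexample; the $m$'s are all distinct by strict monotonicity of $\ell^{(i^{\star})}$, so there are infinitely many good $m$. In the $\BPP$ case, pseudo-determinism ensures $R(1^n)$ returns the canonical list with probability $1-o(1)$, which is at least $2/3$ for all sufficiently large $n$, and hence $R'(1^m)$ returns the canonical $i^{\star}$-th string with the same probability. In the $\ZPP$ case, $R'$ passes the $\perp$ symbol of $R$ through transparently and otherwise inherits the $\ge 2/3$ success probability. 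The only conceptual obstacle is the necessity of pseudo-determinism in the randomized cases: without a canonical list the witnessing index could vary with the coins of $R$, breaking the pigeonhole step; pseudo-determinism pins down a deterministic object to which the pigeonhole argument applies, thereby reducing the randomized cases to the deterministic one.
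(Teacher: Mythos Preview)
Your proof is correct and follows essentially the same approach as the paper: fix a winning index $i^\star$ by pigeonhole over the constant-size list, then invert the strictly increasing length polynomial $\ell^{(i^\star)}$ to convert input length $m$ back to $n$ and output the $i^\star$-th string. The paper phrases it slightly differently---it defines all $c$ candidate refuters $B^{(1)},\ldots,B^{(c)}$ first and then observes at least one must succeed infinitely often---but the content is identical, and your version actually spells out a detail (how to recover $n$ from $m$) that the paper leaves implicit.
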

\begin{proof}
Suppose we have a constant-size $\PTIME$-list-refuter (\cref{defn:constsizelistref}) which outputs the list $x_n^{(1)}$, $x_n^{(2)}, \dots,x_n^{(c)}\in \{0,1\}^*$ given input $1^n$, where $|x_n^{(i)}| = \ell^{(i)}(n)$. 
Then, for every $i\in [c]$, define $B^{(i)}$ to be the algorithm that prints $x^{(i)}_n$ on input $1^{\ell^{(i)}(n)}$. Note that algorithm $B^{(i)}$ is well-defined because $\ell^{(i)}(n)$ is strictly increasing (and hence injective), and it runs in $\poly(n)\le \poly(\ell^{(i)}(n))$ time.
By \cref{defn:constsizelistref}, observe that at least one of $B^{(1)},B^{(2)},\dots, B^{(c)}$ prints valid counterexamples for infinitely many $n$. Hence, there exists at least one $i\in [c]$ such that $B^{(i)}$ is a $\PTIME$-refuter (\cref{defn:refuter}).
	
Similarly, if a constant-size $\BPP$ (or $\ZPP$) list-refuter is pseudo-deterministic, then the same argument also applies, and we can obtain a $\BPP$ (or $\ZPP$) refuter.
\end{proof}

\subsection{Refuters for PSPACE, EXP, and NEXP}
\label{sec:refuter-nexp}

We first consider the case when $\caD$ is a complexity class from $\{\PSPACE, \EXP, \NEXP\}$. Our proof below generalizes the refuter construction of~\cite{DolevFG13} which only discussed the case of $\calD = \NEXP$.

Let $(\exists \poly(n))\caD$ denote the complexity class that contains languages $L$ satisfying the following property: there exists a polynomial $p(n)$ and a language $L' \in \caD$ such that for all strings $x$, $L(x)=1$ if and only if there exists $y\in \{0,1\}^{p(|x|)}$ such that $L'(x,y)=1$. Similarly, we define the complexity class $(\forall \poly(n))\caD$.

\begin{theorem}
\label{thm:list-nexp}
	Let $\caC\in \{\PTIME,\BPP,\ZPP\}$, and $\caD$ be a complexity class such that $\caC\subseteq \caD$.
	Suppose~$\caD$ satisfies $(\exists \poly(n)) \caD \subseteq \caD$ and  $(\forall \poly(n)) \caD \subseteq \caD$.

	If $\caD \nsubseteq \caC$, then for every paddable $\caD$-complete language $L$, there is a $\caC$-constructive separation of $L\notin \caC$. 
\end{theorem}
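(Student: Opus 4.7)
The plan is to prove the theorem by contradiction. I would assume $\caD \nsubseteq \caC$, fix a paddable $\caD$-complete language $L$, and suppose some $A \in \caC$ admits no $\caC$-refuter for $L$ against $A$. Since $L$ is $\caD$-complete and $\caC \in \{\PTIME, \BPP, \ZPP\}$ is closed under polynomial-time many-one reductions, $L \notin \caC$. The strategy is to use the ``no refuter'' assumption together with the closure of $\caD$ under $\exists\poly$ and $\forall\poly$ to force $L \in \caC$, obtaining a contradiction.

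The starting point is a Gutfreund--Shaltiel--Ta-Shma style bit-by-bit construction, in the spirit of \cite{DolevFG13} for $\caD = \NEXP$. Closure of $\caD$ under $\exists\poly$ implies that, for each prefix $p \in \{0,1\}^{<n}$, the predicate ``there exists $y \in \{0,1\}^n$ extending $p$ with $A(y) \neq L(y)$'' lies in $\caD$, and by paddable $\caD$-completeness of $L$, it reduces to an $L$-instance which we can pad to any prescribed polynomial length. Substituting $A$ for $L$ in the oracle calls yields a $\caC$-computable candidate string $u_n$ along with polynomially many intermediate $L$-instances $q_1, \ldots, q_{n+1}$ queried during the construction. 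The standard analysis then gives a dichotomy: either $A$ answered every $q_i$ correctly, in which case $u_n$ is the lex-smallest length-$n$ counterexample to $A$; or $A$ errs on some $q_i$, in which case that $q_i$ itself is a counterexample. Hence, for every $n$ at which $A$ disagrees with $L$ on some length-$n$ input (an infinite set by hypothesis), the polynomial-size list $\{q_1, \ldots, q_{n+1}, u_n\}$ is guaranteed to contain a counterexample.

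To invoke \cref{rem:list}, I must compress this polynomial-size list down to a \emph{constant}-size list. The plan is to exploit $\forall\poly$-closure of $\caD$ to aggregate the correctness of all intermediate queries into a single $\caD$-statement, then reduce it to $L$ via paddable $\caD$-completeness and pad to the required length. The refuter then outputs a constant-size list consisting of $u_n$, an aggregate ``all intermediate queries are answered correctly by $A$'' instance $v_n$, and (if needed) an auxiliary aggregate instance encoding ``$u_n$ is itself a counterexample''. A branching case analysis on whether $A$ errs on $v_n$ or on the auxiliary aggregate shows that at least one list element is always a genuine counterexample. With the output lengths chosen to be strictly increasing polynomials in $n$ of value at least $n$, the hypothesis of \cref{rem:list} is met, and we obtain a single-string $\caC$-refuter. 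For $\caC \in \{\BPP, \ZPP\}$, the pseudo-determinism condition of \cref{rem:list} is automatic because the constructed list-refuter is deterministic given $A$ as a subroutine.

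The main obstacle is precisely this compression step. A naive aggregate ``$\forall i, A(q_i) = L(q_i)$'' only tells us \emph{whether} some $q_i$ fails, not which; pinpointing a failing $q_i$ by binary search gives an $\Theta(\log n)$-size list rather than a constant-size one. The technical heart of the proof is therefore to engineer a constant number of aggregate padded $L$-instances whose joint responses by $A$ always expose a counterexample, no matter which failure mode occurs. This is exactly where the full strength of closure under \emph{both} polynomial quantifiers is essential, and where the $\NEXP$-specific construction of \cite{DolevFG13} must be recast abstractly in the quantifier-closure setting so that it applies uniformly to $\PSPACE$, $\EXP$, $\NEXP$, and $\EXP^\NP$.
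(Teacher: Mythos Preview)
Your plan correctly starts from the Gutfreund--Shaltiel--Ta-Shma bit-by-bit search, but it misplaces the role of the $(\forall\poly)$-closure hypothesis and, as a consequence, manufactures a ``compression'' obstacle that does not actually arise.

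\textbf{Where the $\forall$-closure is really used.} Your assertion that the predicate ``there exists $y\in\{0,1\}^n$ extending $p$ with $A(y)\ne L(y)$'' lies in $\caD$ is not justified by $(\exists\poly)\caD\subseteq\caD$ alone. The disagreement splits as $(A(y)=0\wedge y\in L)\ \vee\ (A(y)=1\wedge y\notin L)$. The first disjunct, existentially quantified, is in $(\exists\poly)\caD\subseteq\caD$; the second is in $(\exists\poly)\co\caD$, which equals $\co\caD$ \emph{precisely because} $(\forall\poly)\caD\subseteq\caD$. There is no reason to expect their union to land in $\caD$ (think $\caD=\NEXP$). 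The paper handles this by keeping the two halves separate and reducing each to $L$, so that each call $R^L(1^n,x)$ makes exactly \emph{two} padded $L$-queries of a fixed length $\ell(n)$. This is the sole use of the $\forall$-closure hypothesis.

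\textbf{Why no compression is needed.} With $R^A$ making two queries per prefix, the list-refuter is already constant-size: you never output the full trace $q_1,\ldots,q_{n+1}$. If at some step both $R^A(1^n,x\circ 0)$ and $R^A(1^n,x\circ 1)$ return $0$, then the three answers for $x,\,x\circ 0,\,x\circ 1$ are jointly inconsistent with the true values of $G_A$ (since $G_A(1^n,x)=1$ forces one child to be $1$), so one of the at most six $L$-queries made in those three calls is already a counterexample. If instead the search reaches $|x|=n$, then $R^A(1^n,x)=1$ was just returned; assuming $A$ answered its two queries correctly, $x$ itself satisfies $A(x)\ne L(x)$. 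Either way the output list has at most six strings, of lengths in $\{n,\ell(n)\}$, and \cref{rem:list} applies immediately. Your proposed aggregation of ``all intermediate queries are correct'' via $\forall$-closure is unnecessary, and the binary-search worry about an $\Theta(\log n)$ list never materializes.
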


\begin{proof}
	We first consider the case of $\caC=\PTIME$. Let $A$ be a polynomial-time algorithm. 
 Let $n$ be an input length such that $A$ does not correctly solve $L$ on all $n$-bit inputs; since $\caD\nsubseteq \PTIME$, we know there are infinitely many such input lengths. %
	For $b\in \{0,1\}$, define the language 
	$$G_A^{(b)}:= \{  (1^n, x)  :  \text{there exists $y\in \{0,1\}^n$ with prefix $x$, such that $L(y)=b, A(y)=1-b$}\}.$$
	Observe that $G_A^{(1)}\in (\exists \poly(n)) \caD\subseteq \caD$, $G_A^{(0)}\in (\exists \poly(n)) \co \caD \subseteq \co \caD$. Define \[G_A:= G_A^{(0)}\cup G_A^{(1)} =\{  (1^n, x)  :  \text{there exists $y\in \{0,1\}^n$ with prefix $x$, such that $L(y)\neq A(y)$}\}.\]
	
	Since $L$ is $\caD$-complete, there is a polynomial-time procedure $R^L$ that can decide  $G_A$ by making two queries to an oracle for $L$. Since $L$ is paddable, we may assume the queries to the $L$-oracle always have length exactly $\ell(n)$, for some strictly increasing polynomial $\ell\colon \N\to \N$.
	If we let $R$ query the algorithm $A$ instead of the $L$-oracle, then on any $(1^n,x)$, either $R^A$ solves $G_A(1^n,x)$ correctly, or $A$ gives the incorrect answer on at least one of the queries. 
	
	Our list-refuter performs a search-to-decision reduction which repeatedly calls $R^A(1^n,x)$ and extends the prefix $x$ one bit at a time. It either eventually finds a string $y\in \{0,1\}^n$ such that $L(y)\neq A(y)$, or detects the inconsistency of $A$'s answers.
	The pseudocode of this list-refuter is presented in Algorithm~\ref{algo:list}.
	\begin{figure}[ht]
	\renewcommand{\figurename}{Algorithm} 
	 \caption{The list-refuter against $A$}
	\label{algo:list}
	\begin{itemize}[noitemsep,parsep=0pt,partopsep=0pt,topsep=0pt]
		\item  Initialize $x$ as an empty string
		\item For $i \gets 1,2,\dots,n$:
		\begin{itemize}[noitemsep,parsep=0pt,topsep=0pt,partopsep=0pt]
			\item If $R^A(1^n, x\circ 1 ) = 1$:
			\begin{itemize}[noitemsep,parsep=0pt,topsep=0pt,partopsep=0pt]
				\item $x\gets x\circ 1$
			\end{itemize}
			\item Else if $R^A(1^n,x\circ 0)=1$:
			\begin{itemize}[noitemsep,parsep=0pt,topsep=0pt,partopsep=0pt]
				\item $x\gets x\circ 0$
			\end{itemize}
			\item Else:
			\begin{itemize}[noitemsep,parsep=0pt,topsep=0pt,partopsep=0pt]
				\item  Return all the queries sent to $A$ by  $R^A(1^n, x), R^A(1^n, x\circ 0),$ and  $R^A(1^n, x\circ 1)$
			\end{itemize}
		\end{itemize}
		\item Return $x$ and all the queries sent to $A$ by $R^A(1^n,x)$
	\end{itemize}
\end{figure}
	
	To prove the correctness of this list-refuter, we suppose for contradiction that $A$ could correctly solve every string in the list.
	Consider three cases according to the final length of $x$ when the refuter terminates:
	\begin{enumerate}[label=(\arabic*)]
		\item $|x|=0$. Then $(1^n, 1)$ and $(1^n, 0)$ are not in $G_A$, which is impossible, since $A$ cannot solve $L$ correctly on every $n$-bit input. %
		\item $1\le |x|<n$. Then $(1^n, x)\in G_A$, but $(1^n, x\circ 1)$ and $(1^n, x\circ 0)$ are not in $G_A$. This is also impossible.
		\item $|x|=n$. Then $(1^n, x)\in G_A$, meaning that $L(x)\neq A(x)$.
		But $x$ is also in the list and $A$ should solve $x$ correctly, a contradiction.
	\end{enumerate}
	Hence, $A$ answers incorrectly on at least one string in the list returned by Algorithm~\ref{algo:list}.
	
	The list contains at most six strings, each of which has length $n$ or $\ell(n)$.
	By  \cref{rem:list}, this constant-size list-refuter can be converted into a refuter.

Now we consider the case of $\caC = \BPP$.	Since $A\in \BPP$, by standard amplification\footnote{We remark that \cite{GutfreundST07} also studied the case where $A$ does not have bounded probability gap, which we do not consider here.}, there is  another $\BPP$ algorithm $A'$ which decides the same language as $A$ and has success probability $1-2^{-2n}$. Then, for a uniformly chosen random seed $r$, with $1-2^{-n}$ probability, $A'(\cdot,r)$ decides the same language as $A$ on input length $n$. From this point, we may apply the same proof of the $\caC=\PTIME$ case to $A'(\cdot,r)$.  Hence we have a $\BPP$-refuter against $A$.  If we further assume $A\in \ZPP$, then the refuter also has zero error. Note that our randomized refuters are pseudo-deterministic.
\end{proof}
\begin{corollary}
	Let $(\caC, \caD)$ be a pair of complexity classes from
	\[
	\{ \PTIME, \ZPP,\BPP \} \times \{\PSPACE, \EXP,\NEXP,\EXP^\NP\}.
	\]
	Assuming $\caD \nsubseteq \caC$, for every paddable $\caD$-complete language $L$, there is a $\caC$-constructive separation of $L \notin \caC$.
\end{corollary}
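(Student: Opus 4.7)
The plan is to derive the corollary as a direct application of Theorem~\ref{thm:list-nexp} after verifying, for each pair $(\caC, \caD)$ in the statement, the two hypotheses: (i) $\caC \subseteq \caD$ and (ii) $\caD$ is closed under polynomial existential and universal quantifiers, i.e.\ $(\exists \poly(n))\caD \subseteq \caD$ and $(\forall \poly(n))\caD \subseteq \caD$. Hypothesis (i) is standard: $\BPP \subseteq \EXP$ by enumerating random strings, so $\PTIME, \ZPP, \BPP \subseteq \EXP \subseteq \PSPACE$ and $\EXP \subseteq \NEXP \subseteq \EXP^\NP$.

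For the quantifier-closure condition (ii), the cases $\caD \in \{\PSPACE, \EXP, \EXP^\NP\}$ are direct. A deterministic machine with either polynomial space or exponential time can iterate over all $2^{\poly(n)}$ polynomial-length strings $y$ and invoke the inner $\caD$ computation on each $(x, y)$, reusing space (for $\PSPACE$) or absorbing the enumeration into the exponential time budget (for $\EXP$, and for $\EXP^\NP$ with the $\NP$ oracle carried through unchanged). Hence both $(\exists \poly(n))$ and $(\forall \poly(n))$ preserve membership.

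The only subtle case is $\caD = \NEXP$. Here $(\exists \poly(n))\NEXP \subseteq \NEXP$ follows immediately by prepending the poly-length witness $y$ to the nondeterministic witness for the inner $\NEXP$ predicate. For $(\forall \poly(n))\NEXP \subseteq \NEXP$, which superficially may look as though it would require $\coNEXP \subseteq \NEXP$ (and thus be unknown), I would argue directly as follows. To decide $L(x) = \forall y \in \{0,1\}^{p(n)}\,[(x,y) \in L']$ for $L' \in \NEXP$, the $\NEXP$ machine nondeterministically guesses a table $W = (w_y)_{y \in \{0,1\}^{p(n)}}$ assigning to each poly-length $y$ a purported exp-length $\NEXP$-witness for $(x,y) \in L'$. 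The total table length is $2^{p(n)} \cdot 2^{\poly(n)} = 2^{\poly(n)}$, so it fits within the nondeterministic budget of $\NEXP$. The machine then iterates over all $2^{p(n)}$ values of $y$ and deterministically verifies the witness $w_y$ in exp time, for $2^{\poly(n)}$ total verification time. The crucial point (and the main conceptual issue to flag in the write-up) is that the swap $\forall y\,\exists w \leadsto \exists W\,\forall y$ is valid precisely because $y$ ranges over \emph{polynomially}-long strings, so the guessed table has only exponentially many entries; the corresponding statement for exp-length $y$ would indeed amount to $\coNEXP \subseteq \NEXP$.

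With both closure conditions verified in all four cases, Theorem~\ref{thm:list-nexp} applies uniformly and yields the claimed $\caC$-constructive separation of $L \notin \caC$ for every paddable $\caD$-complete language $L$, whenever $\caD \nsubseteq \caC$. I expect no further obstacle beyond articulating the $\NEXP$ universal closure cleanly.
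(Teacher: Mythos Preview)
Your approach is essentially the paper's: apply Theorem~\ref{thm:list-nexp} after checking the quantifier-closure hypotheses, with the only nontrivial point being $(\forall\poly(n))\NEXP\subseteq\NEXP$, which you (and the paper) handle by concatenating the $\NEXP$-witnesses over all $2^{p(n)}$ values of $y$ into a single exponential-length witness.

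One slip to fix in the write-up: you assert ``$\PTIME,\ZPP,\BPP\subseteq\EXP\subseteq\PSPACE$,'' but $\EXP\subseteq\PSPACE$ is not known (and would imply $\EXP=\PSPACE$). The containment you actually need, $\BPP\subseteq\PSPACE$, holds directly by enumerating all polynomial-length random strings and counting acceptances in polynomial space; do not route it through $\EXP$.
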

\begin{proof}
Note that all pairs $(\caC,\caD)$ satisfy the requirements in \cref{thm:list-nexp} (where the inclusion $(\forall \poly(n)) \NEXP \subseteq \NEXP$ follows from concatenating the witnesses for every possibility in the universal quantifier).
\end{proof}

\subsection{Refuters for NP and the Polynomial Hierarchy} 

Now we move to the case that $ \calD = \Sigma_k \PTIME$ for an integer $k$.

\label{sec:refuter-np}
\begin{theorem}[Adaptation of \cite{GutfreundST07}]
\label{thm3.5}
		Let $\caC\in \{\PTIME,\BPP,\ZPP\}$.
	Suppose  $\NP \subseteq \caD$, and there is a  $\caD$-complete language $M$ which is downward self-reducible.
	
	If $\caD \nsubseteq \caC$, then for every paddable $\caD$-complete language $L$, there is a $\caC$-constructive separation of $L\notin \caC$. 
\end{theorem}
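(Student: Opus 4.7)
The plan is to adapt the Gutfreund-Shaltiel-Ta-Shma refuter construction by using the downward self-reducibility of $M$ in place of the standard search-to-decision reduction for $\SAT$. Fix a candidate algorithm $A \in \PTIME$ for $L$, and let $r \colon M \to L$ be a polynomial-time many-one reduction (which exists since $L$ is $\caD$-complete and $M \in \caD$). Set $B(x) := A(r(x))$, a $\PTIME$ candidate for $M$. Since $\caD \nsubseteq \PTIME$, we have $M \notin \PTIME$, so $B$ must disagree with $M$ on infinitely many inputs, as otherwise a finite hardcoded modification of $B$ would compute $M$ in $\PTIME$. Consider the $\NP$ language $\psi_n := \{x \in \{0,1\}^n : D(x)^{B|_{<n}} \neq B(x)\}$, verifiable by guessing $x$ and simulating $D$'s polynomial-time recursion with $\PTIME$ oracle calls to $B$. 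Letting $n^*$ be the smallest $n$ with $\psi_n \ne \emptyset$, an inductive argument using $M$'s self-reducibility $M(x) = D(x)^{M|_{<|x|}}$ and the fact that the length-zero base case is forced to $M(\epsilon) = D(\epsilon)$ yields $B|_{<n^*} = M|_{<n^*}$.

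Since $\psi_n \in \NP \subseteq \caD$ and $L$ is $\caD$-complete, the prefix-search variant $G_B := \{(1^n, y) : \exists x \in \{0,1\}^n \text{ extending } y,\ D(x)^{B|_{<n}} \neq B(x)\}$ reduces to $L$ in polynomial time via a many-one reduction $\rho$, whose image length I normalize to depend only on $n$ using $L$'s paddability. I then run the bit-by-bit prefix search exactly as in \cref{algo:list}, with $A$ playing the role of the $L$-oracle: extend the prefix $y$ by the first $b \in \{0,1\}$ for which $A(\rho(1^n, y \circ b)) = 1$; if neither extension succeeds, halt and return the triple $\rho(1^n, y), \rho(1^n, y \circ 0), \rho(1^n, y \circ 1)$; on completing the loop with $|y| = n$, return $r(y)$ together with $\rho(1^n, y)$. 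At $n = n^*$, since $B|_{<n^*} = M|_{<n^*}$, either the returned $r(y)$ satisfies $A(r(y)) = B(y) \neq D(y)^{B|_{<n^*}} = M(y) = L(r(y))$ and is a counterexample for $A$, or (by the analysis in \Cref{thm:list-nexp}) one of the three returned queries witnesses an incorrect $A$-answer.

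The main obstacle is that the above only yields a counterexample at the single length $n^*$, while \cref{defn:refuter} requires counterexamples for infinitely many $n$. I resolve this via $L$'s paddability by applying the construction in parallel to the parametric family of padded candidates $A^{(N)}(x) := A(p_L(x, 1^{Q(N)}))$ for a polynomial $Q$ and $N \ge 1$: each $A^{(N)}$ is itself a $\PTIME$ candidate for $L$, and if $A^{(N)}$ agreed with $L$ for all but finitely many $N$ then setting $C(z) := A^{(N(|z|))}(z)$ for a suitable polynomial $N(\cdot)$ would place $L \in \PTIME$, contradicting $\caD \nsubseteq \PTIME$. Hence $A^{(N)}$ fails on $L$ for infinitely many $N$, and the GST-style construction applied to $A^{(N)}$ produces a list of strings that, after one further padding via $p_L$ to the canonical length $Q(N)$, are counterexamples for $A$ itself at length $Q(N)$. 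Packaging these as a constant-size list-refuter in the sense of \cref{defn:constsizelistref} (with strictly-increasing polynomial output lengths $\ell^{(i)}(n) \ge n$) and invoking \cref{rem:list} yields a genuine $\PTIME$ refuter for $L$ against $A$. For $\caC \in \{\BPP, \ZPP\}$, boosting the candidate algorithm's success probability to $1 - 2^{-\poly(n)}$ via standard amplification makes the polynomially many adaptive $A$-queries simultaneously correct with overwhelming probability, reducing to the deterministic analysis above and producing a pseudo-deterministic list-refuter convertible via \cref{rem:list} to the desired randomized refuter.
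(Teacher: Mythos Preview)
Your core idea matches the paper's: use the downward self-reduction of $M$ to turn ``find a counterexample'' into an $\NP$ search, then run that search through $A$ via the $\NP \subseteq \caD$ reduction. But there is a genuine gap in producing a \emph{constant-size} list. Your prefix search at length $n$ is guaranteed to work only when $n$ equals the minimal failure length $n^*$, as you note; for $n < n^*$ the search may return strings on which $A$ is in fact correct, and for $n > n^*$ a found $y \in \psi_n$ need not satisfy $B(y) \ne M(y)$ since $B|_{<n} \ne M|_{<n}$. Your paragraph-3 fix via $A^{(N)}$ does not close this: on input $1^N$ the refuter still must locate $n^*(N)$, and nothing forces $n^*(N) = N$ (one only gets $n^*(N)\le N$ for infinitely many $N$). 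Running the prefix search at every $m \le N$ would yield $\Theta(N)$ outputs, so \cref{rem:list} does not apply and you do not obtain a refuter in the sense of \cref{defn:refuter}.

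The paper fills exactly this gap. Before the prefix search it asks, via a separate $\NP$ query $R_0^A(1^m,1^n)$ (``does a violating $x$ of length at most $m$ exist?''), for the smallest $m \le n$ at which the condition is satisfiable, and then runs the prefix search only at that $m$. Crucially, it adds the single query made by $R_0^A(1^{m-1},1^n)$ to the output list: if $A$ lied and $m$ is not truly minimal, that query is itself a counterexample. This keeps the list constant-size while guaranteeing that the found $x^*$ lies at the minimal length, so that $B|_{<|x^*|}=M|_{<|x^*|}$ and $p_n(x^*)$ is a genuine counterexample. Your $A^{(N)}$ family is also unnecessary once this is in place: the paper parameterizes the reduction itself as $p_n\colon\{0,1\}^{\le n}\to\{0,1\}^{q(n)}$ (using paddability of $L$), which directly yields counterexamples of length $q(n)$ for infinitely many $n$ without introducing an auxiliary sequence of algorithms.
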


\begin{proof}[Proof Sketch]
 Let $A$ be any algorithm in $\caC$. We will construct a refuter for $L$ against $A$.  Here we only prove the case of $\caC= \PTIME$. (For $\caC \in \{\BPP,\ZPP\}$, we use the same proof as the $\caC=\PTIME$ case, and apply the amplification argument described at the end of the proof of \cref{thm:list-nexp}.)

	Since $M$ is downward self-reducible, there is a polynomial-time procedure $D$ such that for every $x\in \{0,1\}^m$, $M(x) = D^{M_{\le m-1}}(x)$.
 
	Since $L$ is $\caD$-complete and $M\in \caD$, there is a $\poly(n)$-time reduction $p_n\colon \{0,1\}^{n} \to \{0,1\}^{q(n)}$ such that $M(x) = L(p_n(x))$, where $q(n)\colon \N \to \N$ is some strictly increasing polynomial.
 For convenience of later proof, we extend the domain of $p_n$ to $p_n\colon \{0,1\}^{\le n} \to \{0,1\}^{q(n)}$ so that $M(x) = L(p_n(x))$ holds for $|x|<n$ as well. This can be done by first mapping $x$ to $p_{|x|}(x)$, and then use the paddability of $L$  to pad $p_{|x|}(x)$ to a string of length $q(n)$.
	
	For large enough $n$, there must exist an $x \in \{0,1\}^{\le n}$ such that $A(p_n(x)) \neq M(x)$, since otherwise we would have a $\caC$ algorithm that decides $M$, contradicting $\caD \nsubseteq \caC$.
	Then we argue that there must be a string $x$ of length $m\le n$, such that
	\begin{equation}
	\label{eq:condition}
	A(p_n(x)) \neq D^{O_{m-1}}(x), \text{ where } O_{m-1}:= \{x \in \{0,1\}^{\le m-1}: A(p_n(x)) = 1\},
	\end{equation}
	since otherwise the definition of $D$ (by downward self-reducibility of $M$) together with an induction on $m$ would imply $A(p_n(x)) = M(x)$ for all $x\in \{0,1\}^{\le n}$, contradicting $A(p_n(x))\neq M(x)$. 
 Let $x^*$ be the shortest string $x$ satisfying condition (\ref{eq:condition}).
 Then the minimality of $|x^*|$ implies $D^{O_{|x^*|-1}}(x^*) = M(x^*)$, and hence  $A(p_n(x^*)) \neq M(x^*)$ by (\ref{eq:condition}). 
 Then from $M(x^*)=L(p_n(x^*))$ we know $A(p_n(x^*)) \neq L(p_n(x^*))$, so $p_n(x^*)$ is a counterexample showing $A$ does not solve $L$.
	
Observe that condition (\ref{eq:condition}) can be checked in polynomial time, so such $x^*$ can be found if we had an $\NP$ machine. Since $A$ claims to decide an $\NP$-hard language, we can try to find $x^*$ by a search-to-decision reduction using $A$, similarly to what we did in the proof of \cref{thm:list-nexp}.
More specifically, our list-refuter does the following:
\begin{itemize}
    \item First consider the oracle algorithm
$R_0^A(1^{m'},1^n)$ which claims to solve the following $\NP$ question by making one query to $A$: ``does there exist a string $x$ of length $|x|=m\le m'$ such that condition (\ref{eq:condition}) is satisfied by $n,x$, and $m$?'' 
The answer to $R_0^A(1^{n},1^n)$ is supposed to be YES; if it returns NO, then we know the query made to $A$ by $R_0^A(1^{n},1^n)$ is a counterexample, and we are done. Otherwise, we find the smallest length $m\le n$ such that $R_0^A(1^{m},1^n)$ returns YES but $R_0^A(1^{m-1},1^n)$ returns NO.  Then $m$ is supposed to be the length of the shortest $x^*$.
\item Then, consider the oracle algorithm $R^A(y,1^m,1^n)$ which claims to solve the following $\NP$ question: ``does there exist $x\in \{0,1\}^m$ whose prefix is $y$ such that condition (\ref{eq:condition}) is satisfied by $n,x$, and $m$?''
We gradually extend the prefix $y$ in the same way as in \cref{thm:list-nexp}, until either we find inconsistency between the answers for $y, y\circ 0, y\circ 1$, or we eventually extend it to full length $|y|=m$ and obtain $x^*=y$. 
In the former case, we add the queries made to $A$ by $R^A(y,1^m,1^n),R^A(y\circ 0,1^m,1^n),R^A(y\circ 1,1^m,1^n)$ into our list of counterexamples.
In the latter case, if $R^A(x^*,1^m,1^n)$ returns YES but condition (\ref{eq:condition}) is not satisfied, then we know $R^A(x^*,1^m,1^n)$ made a mistake and we also get a counterexample.

The remaining case is where we obtained an $x^*$ of length $m$ that indeed satisfies condition (\ref{eq:condition}). In this case,
we add $p_n(x^*)$ to the list of counter examples. We also need to add the query made to $A$ by $R_0^A(1^{m-1},1^n)$ to the list of counterexamples. By our discussion earlier, if $x^*$ is the shortest string satisfying condition (\ref{eq:condition}), then $p_n(x^*)$ is a counterexample. If $x^*$ is not the shortest, then the answer NO returned by $R_0^A(1^{m-1},1^n)$ is a mistake, and we also get a counterexample.
\end{itemize} 
Hence we have designed a constant size list-refuter for $L$ against $A$, and the rest of the proof follows in the same way as \cref{thm:list-nexp}.
\end{proof}
We can compare this proof with  the earlier proof of \cref{thm:list-nexp}. In \cref{thm:list-nexp}, we used an $(\exists \poly(n))\caD$ machine to find counterexamples for a $\caD$ problem, so we needed the assumption that $\caD$ is closed under $\exists$ (and $\forall$).  Here in \cref{thm3.5}, we side-step this $\exists$-closure assumption by using the downward self-reducibility of $\caD$ instead. In this way, we get a polynomial-time checkable condition (\ref{eq:condition}), which allows us to find a counterexample using only an $\NP$ machine.

The following corollary follows immediately from Theorem~\ref{thm3.5} and the fact that $\Sigma_k \PTIME$ has a downward self-reducible complete language $\Sigma_k \SAT$.

\begin{corollary}
	Let $(\caC, \caD)$ be a pair of complexity classes from the following list 
	\[
	\{ \PTIME, \ZPP,\BPP \} \times \{\Sigma_k \PTIME\}_{k \ge 1}.
	\]
	If $\caD \nsubseteq \caC$, then for every paddable $\caD$-complete language $L$, there is a $\caC$-constructive separation of $L\notin \caC$. 
\end{corollary}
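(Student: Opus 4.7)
The plan is to invoke \cref{thm3.5} directly, so the entire task reduces to verifying its two hypotheses for $\caD = \Sigma_k\PTIME$. First, $\NP = \Sigma_1\PTIME \subseteq \Sigma_k\PTIME$ for every $k \ge 1$, which is immediate. Second, I need to exhibit a $\Sigma_k\PTIME$-complete language that is downward self-reducible. The natural candidate is $\Sigma_k\SAT$, the language of true quantified Boolean formulas $\exists x_1\cdots\exists x_{n_1}\,\forall y_1\cdots\forall y_{n_2}\cdots Q_k z_1\cdots z_{n_k}\,\varphi(\vec{x},\vec{y},\ldots,\vec{z})$ with $k$ alternating quantifier blocks (starting with $\exists$) and a quantifier-free Boolean matrix $\varphi$. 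This is well-known to be $\Sigma_k\PTIME$-complete under polynomial-time many-one reductions.

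The downward self-reduction is the standard one: given an instance $\Phi = \exists x_1\,\Psi(x_1,\ldots)$, compute both restrictions $\Psi(0,\ldots)$ and $\Psi(1,\ldots)$ (still instances of $\Sigma_k\SAT$ on fewer variables, and strictly shorter under any reasonable encoding), query the oracle on each, and return the OR. When the leading block of existential quantifiers is exhausted and the next quantifier is universal, one instead takes the AND of the two shorter queries, which corresponds to querying instances of $\Pi_{k-1}\SAT \subseteq \Sigma_k\SAT$ (via trivial padding with a dummy existential). Since the output length strictly decreases with each substitution, the reduction queries only strings of length $m-1$ or less when given an input of length $m$, fitting the definition $M(x) = D^{M_{\le m-1}}(x)$ from the proof of \cref{thm3.5}.

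The only minor issue is encoding: one must ensure that the natural binary encoding of $\Sigma_k\SAT$ instances strictly shortens upon substituting a variable by a constant; this is handled by using an encoding in which eliminating a variable also removes its quantifier from the prefix, or alternatively by appropriate padding. Once downward self-reducibility is confirmed, the hypotheses of \cref{thm3.5} are met, and for each $\caC \in \{\PTIME,\ZPP,\BPP\}$ and each $k \ge 1$, the conclusion follows: $\Sigma_k\PTIME \nsubseteq \caC$ implies that every paddable $\Sigma_k\PTIME$-complete language $L$ admits a $\caC$-constructive separation from $\caC$. I do not expect any serious obstacle here; the substance is entirely in \cref{thm3.5}, and this corollary is a routine instantiation.
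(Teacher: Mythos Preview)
Your proposal is correct and matches the paper's approach exactly: the paper's proof is a one-line remark that the corollary follows immediately from \cref{thm3.5} together with the fact that $\Sigma_k\SAT$ is a downward self-reducible $\Sigma_k\PTIME$-complete language. Your additional discussion of the self-reduction and encoding details is accurate but more than the paper itself provides.
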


\subsection{Refuters for PP and Parity-P} 
\label{sec:refuter-pp}

Finally we prove Theorem~\ref{theo:refuter-uniform-sep} for the case $\calD \in \{ \PP,\ParP \}$.

\begin{theorem}
\label{thm:pp}
Let $\caC\in \{\PTIME,\BPP,\ZPP\}$.
If $\PP \nsubseteq \caC$,  then for every paddable $\PP$-complete language~$L$, there is a $\caC$-constructive separation of $L \notin \caC$.
\end{theorem}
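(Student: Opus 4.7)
My plan is to derive Theorem~\ref{thm:pp} as a direct corollary of Theorem~\ref{thm3.5}. Recall that Theorem~\ref{thm3.5} requires two ingredients: the inclusion $\NP \subseteq \caD$, and the existence of a downward self-reducible $\caD$-complete language. For $\caD = \PP$ the inclusion $\NP \subseteq \PP$ is classical, so the only real task is to exhibit a downward self-reducible $\PP$-complete language $M$.

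The natural candidate is the thresholded counting-decision problem
\[
M := \{(\phi, k) : \phi \text{ is a Boolean formula on } n \text{ variables with at least } k \text{ satisfying assignments}\},
\]
encoded so that the threshold $k$ is always padded to exactly $n+1$ bits (so $n$ is recoverable from the length of the second component). This language is in $\PP$ via the standard probabilistic machine that picks a uniformly random assignment and accepts iff it satisfies $\phi$, together with a threshold shift; and $M$ is $\PP$-hard by the usual many-one reduction from $\mathsf{MajSAT}$.

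For downward self-reducibility, given $(\phi,k)$ with $\phi$ on $n \geq 1$ variables, set $\phi_b := \phi|_{x_n = b}$ for $b \in \{0,1\}$. Each $\phi_b$ has one fewer variable. A deterministic polynomial-time procedure $D$ computes $\#_{\phi_b}$ exactly via binary search on the threshold: $\#_{\phi_b}$ is the largest $j \in \{0,1,\dots,2^{n-1}\}$ for which $M(\phi_b, j) = 1$, which uses only $O(n)$ oracle queries. Then $M(\phi,k) = 1$ iff $\#_{\phi_0} + \#_{\phi_1} \geq k$. In the chosen encoding every queried instance $(\phi_b, j)$ is strictly shorter than $(\phi,k)$: eliminating a variable strictly shrinks the formula part, and the threshold drops from $n+1$ bits to $n$ bits, so the queries really go to $M_{\leq m - 1}$ where $m = |(\phi,k)|$, as required by the definition of downward self-reducibility used in Theorem~\ref{thm3.5}.

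With $M$ in hand, applying Theorem~\ref{thm3.5} to $\caD = \PP$ yields, for every paddable $\PP$-complete language $L$ and every $\caC \in \{\PTIME, \BPP, \ZPP\}$, a $\caC$-constructive separation of $L \notin \caC$, which is exactly the conclusion of Theorem~\ref{thm:pp}. The only delicate point in the whole argument is pinning down an encoding that guarantees the strict length decrease in the downward self-reduction; once the encoding is fixed (padding $k$ to $n+1$ bits handles this cleanly), the rest is a black-box invocation of the previous theorem. I expect no further obstacles.
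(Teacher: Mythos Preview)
Your approach is correct and takes a genuinely different route from the paper. The paper does \emph{not} invoke Theorem~\ref{thm3.5} for $\PP$; instead it gives a direct, problem-specific argument. It works with $\ssat$ and the standard bit-by-bit algorithm $D^{\PP}$ that computes $\ssat(\phi)$ via $n+1$ threshold queries to a $\PP$ oracle. Replacing the oracle by $A$, the paper searches (using $\NP\subseteq\PP$ and a search-to-decision reduction through $A$) for a formula $\phi$ with $D^A(\phi)\neq D^A(\phi_0)+D^A(\phi_1)$, which yields $3m+1$ candidate counterexamples. It then spends most of the proof on a careful case analysis showing that three specific queries among these $3m+1$ must already contain a mistake of $A$, so as to land in the constant-size list-refuter framework of Lemma~\ref{rem:list}.

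Your proof avoids all of that bookkeeping by observing that the threshold-$\SAT$ language is downward self-reducible and $\PP$-complete, which lets Theorem~\ref{thm3.5} do the work black-box. This is shorter and more modular; the paper's argument is more concrete and self-contained, and the $3$-query extraction is a nice structural statement in its own right (and mirrors the style later reused for the $\ParP$ case, where Theorem~\ref{thm3.5} cannot be invoked directly because $\NP\subseteq\ParP$ is not known).

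One small inaccuracy worth tightening in your write-up: the claim ``eliminating a variable strictly shrinks the formula part'' fails when $x_n$ does not occur in $\phi$. This does not break your argument, since the threshold component still drops from $n+1$ bits to $n$ bits and the formula part cannot grow under substitution, so the total length strictly decreases; but you should phrase the length-decrease justification accordingly rather than asserting both parts shrink.
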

\begin{proof}
 Let $A$ be any $\caC$-algorithm. We will construct a refuter for $L$ against $A$. 
Here we only prove the case of $\caC= \PTIME$. (For $\caC \in \{\BPP,\ZPP\}$, we use the same proof as the $\caC=\PTIME$ case, and apply the amplification argument described at the end of the proof of \cref{thm:list-nexp}.)

We first review the well-known polynomial-time algorithm $D^{\PP}$ that solves $\ssat$ with the help of a $\PP$ oracle.
Given a 3-CNF formula $\phi$ with $n$ variables, let $c_{n}c_{n-1}\cdots c_0$ denote the number of satisfiable assignments of $\phi$ in binary, i.e.,
 \[\ssat(\phi) = \sum_{0\le i\le n}c_i \cdot 2^i,\] where $c_i \in \{0,1\}$ for $i=0,\ldots,n$.
The algorithm computes the values of $c_i$ in decreasing order of $i$: 
after $c_n,c_{n-1},\dots,c_{i+1}$ are determined, $c_i$ is the truth value of the statement \[\ssat(\phi) \ge 2^i+\sum_{i+1\le j\le n}c_j\cdot 2^j,\] which can be determined by the $\PP$ oracle. Hence $D^{\PP}$ can compute $\ssat(\phi)$ using $n+1$ queries to a $\PP$ oracle. 
Observe that this query algorithm $D^{\PP}$ must have asked the queries \[\ssat(\phi) \ge  \sum_{i\le j\le n} c_j\cdot 2^j\] for all $0\le i<n$,  and the oracle answers $1$ to these queries. (For example, if $n=4$ and $c_4c_3c_2c_1c_0=01011$, then $D^{\PP}$ asked queries ``$\ssat(\phi)\ge x$'' for $x\in \{10000,01000,01100,$ $01010,01011\}$.) Similarly, observe that 
$D^{\PP}$ must have asked the query
 \[\ssat(\phi) \ge  1+\sum_{0\le j\le n} c_j\cdot 2^j,\] to which the oracle answered $0$.

Since $A$ claims to decide a $\PP$-complete language, we replace the $\PP$ oracle by $A$ and try to use $D^{A}$ to solve  $\ssat$ on $n$ variables.
By padding, we assume the input strings received by $A$ have length exactly $\ell(n)$, for some strictly increasing polynomial $\ell\colon \N \to \N$.
The polynomial-time algorithm $D^A$ cannot correctly solve $\ssat$ on all possible $\phi$, since otherwise it would contradict the assumption that $\PP\nsubseteq \PTIME$.
Hence there exists a formula $\phi$ such that $D^A(\phi) \neq D^A(\phi_0)+D^A(\phi_1)$, where 
 $\phi_b$ denotes the formula obtained by setting the first variable in $\phi$ to $b$.  Since $\NP \subseteq \PP$, we can try to find such a $\phi$ by a search-to-decision reduction using $A$, analogously to the proof of \cref{thm3.5} and \cref{thm:list-nexp}.
We either find such a $\phi$, or detect inconsistency during the search and find a constant-size list that contains a counterexample.

Now suppose we have found such a $\phi$ with $m$ variables satisfying $D^A(\phi) \neq D^A(\phi_0) + D^A(\phi_1)$. Then we know that $A$ answered incorrectly on one of the $(m+1)+m+m=3m+1$ queries 
asked by $D$. In the following we show how to reduce the size of this list to $O(1)$.

Let $a_{m}\cdots a_0$, $b_m\cdots b_0$, $c_{m}\cdots c_0$ be the binary representation of $D^A(\phi_0),D^A(\phi_1)$, and $D^A(\phi)$, respectively (where $a_m=b_m=0$).
Since $D^A(\phi) \neq D^A(\phi_0) + D^A(\phi_1)$, we know
\[\sum_{0\le j\le m}(c_j-a_j-b_j)\cdot 2^j\ne 0.\]  
We assume $D^A(\phi)\le 2^m$ (and similarly, $D^A(\phi_0),D^A(\phi_1) \le 2^{m-1}$); otherwise $A$ must have answered $1$ to the query ``$\ssat(\phi)\ge S$'' for some $S\ge 2^m+1$, which is an obvious counterexample.  Now consider two cases:
\begin{enumerate}[label=(\arabic*)]
    \item  $\sum_{0\le j\le m}(c_j-a_j-b_j)\cdot 2^j \le -1$. 
    We know that $A$ answered 0 to the query ``$\ssat(\phi)\ge 1+ \sum_{0\le j\le m}c_j\cdot 2^j$'', and answered 1 to the queries ``$\ssat(\phi_0)\ge \sum_{0\le j\le m}a_j\cdot 2^j$'' and ``$\ssat(\phi_1)\ge \sum_{0\le j\le m}b_j\cdot 2^j$''. Assuming that all three answers are correct, we have
    \begin{align*}
        0 &= \ssat(\phi) - \ssat(\phi_0)-\ssat(\phi_1)\\
        & < (1+ \sum_{0\le j\le m}c_j\cdot 2^j) -(\sum_{0\le j\le m}a_j\cdot 2^j) - (\sum_{0\le j\le m}b_j\cdot 2^j )\\
        & = 1 + \sum_{0\le j\le m}(c_j-a_j-b_j)\cdot 2^j\\
        & \le 0,
    \end{align*}
    a contradiction.
    
    \item  
    $\sum_{0\le j\le m}(c_j-a_j-b_j)\cdot 2^j >0$.
        We know that $A$ answered 1 to the query ``$\ssat(\phi)\ge \sum_{0\le j\le m}c_j\cdot 2^j$'', and answered 0 to the queries ``$\ssat(\phi_0)\ge 1+\sum_{0\le j\le m}a_j\cdot 2^j$'' and ``$\ssat(\phi_1)\ge 1+ \sum_{0\le j\le m}b_j\cdot 2^j$''. Assuming that all three answers are correct, we have
    \begin{align*}
        0 &= \ssat(\phi) - \ssat(\phi_0)-\ssat(\phi_1)\\
        & \ge  (\sum_{0\le j\le m}c_j\cdot 2^j) -(\sum_{0\le j\le m}a_j\cdot 2^j) - (\sum_{0\le j\le m}b_j\cdot 2^j )\\
        & = \sum_{k\le j\le m}(c_j-a_j-b_j)\cdot 2^j\\
        & > 0,
    \end{align*}
    a contradiction.
\end{enumerate}
In either of the two cases, we obtain a list of three strings that contains at least one counterexample. This finishes our construction of the constant-size list-refuter, which can be converted into a refuter by applying \cref{rem:list}.
\end{proof}

\begin{theorem}
Let $\caC\in \{\PTIME,\BPP,\ZPP\}$.
If $\ParP \nsubseteq \caC$,  then for every paddable $\ParP$-complete language~$L$, there is a $\BPP$-constructive separation of $L \notin \caC$.
\end{theorem}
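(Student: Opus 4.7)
The plan is to imitate the proof of Theorem~\ref{thm3.5} using the $\ParP$-complete, downward self-reducible language $\psat$, which satisfies $\psat(\phi) = \psat(\phi_0) \oplus \psat(\phi_1)$ (i.e., the trivial one-variable branching gives the downward self-reduction). Given a paddable $\ParP$-complete language $L$, an algorithm $A\in\caC$, and the polynomial-time reduction $p_n$ from $\psat$ to $L$, the target is the shortest $x^* \in \{0,1\}^{\le n}$ satisfying
\[
A(p_n(x^*)) \neq A(p_n(x^*_0)) \oplus A(p_n(x^*_1)),
\]
which is the exact analogue of condition~\eqref{eq:condition} for the self-reduction above; once such an $x^*$ is found, $p_n(x^*)$ is guaranteed to be a counterexample to $A$ deciding $L$, by the same inductive argument used in Theorem~\ref{thm3.5}.

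The main obstacle, relative to Theorem~\ref{thm3.5}, is that its proof crucially uses the containment $\NP \subseteq \caD$ to drive a search-to-decision procedure with $A$ as an $\NP$-oracle, and we do not have $\NP \subseteq \ParP$ unconditionally. I would instead invoke the Valiant--Vazirani isolation lemma, which yields the randomized inclusion $\NP \subseteq \BPP^{\ParP}$: each $\NP$-oracle query in the Theorem~\ref{thm3.5} search is simulated by running VV with $A$ as the claimed $\ParP$-oracle, amplified so that the per-query error is at most $2^{-n}$. By a union bound over the polynomially many $\NP$ queries in the search, the entire simulation faithfully mimics a true $\NP$-oracle with probability $1-o(1)$, provided $A$ answers correctly on every VV-generated $\psat$-instance; this is why the refuter is only $\BPP$ rather than $\PTIME$.

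The hardest part will be extracting a \emph{$\BPP$-refuter} from this randomized simulation, which via Lemma~\ref{rem:list} requires both constant list size \emph{and} pseudo-determinism. I plan a win/win: either (a) $A$ is correct on essentially all VV-generated $\psat$-instances encountered during the simulation, in which case the search-to-decision terminates at the canonical shortest $x^*$ (determined uniquely by $A$), and the output---$p_n(x^*)$ together with the $O(1)$ ``consistency check'' strings inherited verbatim from the Theorem~\ref{thm3.5} analysis---is a deterministic function of $A$; or (b) $A$ is incorrect on some VV-generated instance, which can be detected by a final explicit check verifying that the returned $x^*$ actually satisfies the above condition, in which case the refuter outputs the canonical offending VV-query, made canonical by taking, e.g., the lexicographically first VV-hash on which the derived instance received an $A$-answer inconsistent with the verified sub-computation. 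In both cases, the list has constant size and equals a fixed value (depending only on $A$) with probability $\geq 2/3$ over the VV randomness, so the list-refuter is pseudo-deterministic. Applying Lemma~\ref{rem:list} then produces the claimed $\BPP$-refuter, completing the proof.
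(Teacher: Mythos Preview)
Your proposal has the right ingredients—downward self-reducibility of $\psat$ and Valiant--Vazirani to simulate $\NP$ queries via $\ParP$—but case~(b) contains a real gap. When a simulated $\NP$ query whose true answer is YES comes back NO, this means $A$ returned $0$ on every one of the polynomially many amplified-VV $\psat$ instances attached to that query, even though (with high probability) at least one of them is a genuine YES-instance of $\psat$. To shrink the output to constant size you would have to pick out \emph{which} of these instances is the YES one, and that is itself a $\ParP$-hard question; your phrase ``$A$-answer inconsistent with the verified sub-computation'' does not name an efficiently checkable predicate in this situation. Consequently the list in case~(b) has polynomial, not constant, size, the pseudo-determinism claim for~(b) is unsupported, and Lemma~\ref{rem:list} does not apply. (One salvage you did not mention: use the closure $\PTIME^{\ParP}=\ParP$ to collapse each amplified-VV call into a \emph{single} $\psat$ instance; then every simulated $\NP$ query corresponds to exactly one $L$-query and the Theorem~\ref{thm3.5} bookkeeping can go through.)

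The paper's route avoids the issue altogether. Rather than looking for the \emph{shortest} witness to the self-reduction failure, it observes that for \emph{any} $\phi$ with $A(R(\phi)) \neq A(R(\phi_0)) \oplus A(R(\phi_1))$, one of the three strings $R(\phi),R(\phi_0),R(\phi_1)$ is already a counterexample. The existence of such a $\phi$ is a single $\SAT$ instance $\pi$; Valiant--Vazirani (together with $\PTIME^{\ParP}=\ParP$) is applied \emph{once} to produce one $\psat$ instance $f(\pi,r)$. If $A(R(f(\pi,r)))=0$, that single string is the output, a counterexample with probability $\geq 2/3$. If $A(R(f(\pi,r)))=1$, the refuter searches for a satisfying assignment of $f(\pi,r)$ using the downward self-reducibility of $\psat$ \emph{directly with $A$}—every local inconsistency along this search already has the form $A(R(\psi))\neq A(R(\psi_0))\oplus A(R(\psi_1))$ and so immediately yields exactly three candidate counterexamples; a successful search decodes back to $\phi$ and outputs $R(\phi),R(\phi_0),R(\phi_1)$. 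The key structural difference from your plan is that the search runs \emph{inside} $\psat$ rather than inside $\NP$, so every detected inconsistency is already witnessed by three $L$-instances instead of a polynomial-size cloud of VV-derived ones.
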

\begin{proof}[Proof Sketch]
 Let $A$ be any algorithm in $\caC$. We will construct a refuter for $L$ against $A$. 
Here we only prove the case of $\caC= \PTIME$. (For $\caC \in \{\BPP,\ZPP\}$, we use the same proof as the $\caC=\PTIME$ case, and apply the amplification argument described at the end of the proof of \cref{thm:list-nexp}.)

The proof is similar to that of \cref{thm:pp}. Let $R$ be a reduction from $\psat$ to $L$. Then there must exist a 3-CNF formula $\phi$ such that
$A(R(\phi)) \neq A(R(\phi_0)) \oplus A(R(\phi_1))$, where $\phi_b$ denotes the formula obtained by setting the first variable in $\phi$ to $b$. If we can find such $\phi$, then we immediately obtain three strings which contain a counterexample for $A$.

 Our requirement for $\phi$ can be encoded as a $\SAT$ instance $\pi$. By the Valiant-Vazirani theorem \cite{vv} and the fact that $\PTIME^{\oplus\PTIME}=\oplus\PTIME$~\cite{PapadimitriouZ83}, there is a polynomial-time reduction $f$ with random seed $r$ such that if $x\notin \SAT$, then $\Pr_r[f(x,r) \notin \psat] = 1$, and if $x\in \SAT$, then $\Pr_r[f(x,r)\in  \psat] \ge 2/3$.\footnote{In more detail, Valiant-Vazirani says that there is a randomized Turing reduction from $\SAT$ to $\oplus \SAT$ such that a given formula $x$ is reduced to a sequence of formulas $x_1,\ldots,x_{O(n)}$ which are called on $\oplus \SAT$. We take the entire Turing reduction from $\SAT$ to $\oplus\SAT$, with success probability increased to at least $2/3$, and apply the fact that $\PTIME^{\ParP} = \ParP$, to obtain a single $\oplus \SAT$ instance.}
We pick a random seed $r$, and consider two cases:
\begin{itemize}
    \item 
If $A(R(f(\pi,r))) = 1$, then we can use the downward self-reducibility of $\psat$ to perform a search-to-decision reduction using $A$ (similar to the proof of \cref{thm:list-nexp}). Either we find a satisfying assignment for $f(\pi,r)$, or we detect that $A$'s answers are inconsistent. 
In the first case, note that the reduction $f$ of \cite{vv} is simple enough so that we can efficiently convert any satisfying assignment for $f(\pi,r)$ to a satisfying assignment for $\pi$, which can then be converted to a formula $\phi$ that satisfies the desired property $A(R(\phi)) \neq A(R(\phi_0)) \oplus A(R(\phi_1))$.
\item 
If $A(R(f(\pi,r))) = 0$, then our refuter simply outputs $R(f(\pi,r))$. Observe that this string is indeed a counterexample for $A$ if $f(\pi,r) \in \psat$, which happens with probability at least $2/3$. \qedhere
\end{itemize}
\end{proof}

\paragraph*{Remark: These refuters are non-black-box.} Observe that all refuter constructions in this section do require access to the \emph{code} of the algorithm $A$ being refuted. (That is, our refuter constructions are not ``black-box'' in terms of the algorithm $A$.) Atserias~\cite{Atserias06} constructed a black-box refuter for the separation $\NP \not\subset \BPP$ (more strictly speaking, Atserias’ refuter is only ``grey-box'' in that it needs to know the running time of the $\BPP$ algorithm it fools). It may be possible to improve our refuter constructions to be black-box (or ``grey-box'') as well. 
However, it seems  challenging to use the techniques of~\cite{Atserias06} for this, because he crucially relies on the $\ZPP^{\NP}$ learning algorithm for polynomial-size circuits~\cite{BshoutyCGKT96}. It is unclear how one might prove $\PTIME$-constructive separations using such an algorithm.

\section{Hard Languages With No Constructive Separations}\label{sec:hard-language-no-constructive-sep}

In this section we show there are hard languages without constructive separation from any complexity class. We first observe there are no constructive separations for $\RKt$ unconditionally.

\propnorefuter*

\begin{proof}
A $\PTIME$ refuter for $\RKt$ against the constant zero function needs to output in $\poly(n)$ time an $n$-bit string $y_n$ with $\Ksupt$ complexity at least $n-1$, for infinitely many integers $n$. But by the definition of $\Ksupt$ complexity, all these $y_n$ can be computed in $\poly(n)$ time by a uniform algorithm given the input $n$ of $\log n$ bits, hence $\Ksupt(y_n) = O(\log n)$ for all $n$, a contradiction. 
\end{proof}

Next we show that, under plausible conjectures, there are languages in $\NP \setminus \PTIME$ with no constructive separations from any complexity class.

\renewcommand{\footnotenonpb}{\footnote{Recall we have defined $\RE$ to be one-sided randomized time $2^{O(n)}$.}}
\theononp*
\begin{proof}
Assume $\NE\neq \E$, and let $L' \in \NE \backslash \E$.
Suppose for some constant $c\ge 1$ there is a $2^{O(n)}$ time reduction $R\colon \{0,1\}^n \to \{0,1\}^{2^{cn}}$ such that $x\in L' \Leftrightarrow R(x) \in \SAT$.

We define a language $L$ as follows: 

\begin{itemize}
	\item For $m\in \N$, $L$ is given the concatenated string \[(t, w_0,w_1,\dots,w_{2^m-1}, s) \in \{0,1\}^{2^m} \times \big (\{0,1\}^{2^{cm}})^{2^m} \times \{0,1\}^{2^{c(m+1)}}\] as input. 
	
	Here, $m$ is intended as the input length to the language $L'$, $t$ is interpreted as a potential truth table of $L'$ on all $m$-bit inputs which needs to be verified, $w_0,\dotsc,w_{2^m-1}$ are interpreted as potential witnesses for every $m$-bit inputs to $L'$ to help the verification, and $s$ is intended as an input to $\SAT$.
	
	\item $L(t, w_0,w_1,\dots,w_{2^m-1}, s) = 1$ if and only if all of the following conditions hold:
	\begin{enumerate}[label=(\arabic*)]
		\item For every $i\in \{0,1\}^m$ with $t_i=1$, we have that  $w_i \in \{0,1\}^{2^{cm}}$ is a correct witness of $R(i)\in \SAT$ (in particular, $i\in L'$).
		\item For every $i\in \{0,1\}^m$ with $t_i=0$, we have $i\notin L'$.
		\item $s \notin \SAT$.
	\end{enumerate}
\end{itemize}

That is, $L$ accepts the input $L(t, w_0,w_1,\dots,w_{2^m-1}, s)$ if $t$ is the correct truth table of $L'$ on all $m$-bit inputs and all the $w_i$ are correct witnesses for the corresponding inputs to $L'$, and $s \notin \SAT$.  %

The conditions (1) and (2) above mean that every input accepted by $L$ \emph{must reveal the truth table of the language $L'$}, which helps us to design an $\ETIME$ (or $\RE$) algorithm for $L'$ given a $\PTIME$ (or $\BPP$) refuter for $L$.
Condition (3) allows us to argue that if $\PTIME \neq \NP$, then $L' \notin \PTIME$.

The concatenated string has length $2^{\Theta(m)}$. We can verify condition (1) in $2^{O(m)}$ time, and verify conditions (2) and (3) in $\co \NTIME[2^{O(m)}]$, so  $L\in \coNP$. 

\begin{claim}
\label{claim1}
If $L\in \PTIME$, then $\SAT \in \PTIME$.
\end{claim}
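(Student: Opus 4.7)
The plan is to prove the claim by exhibiting a polynomial-time algorithm for $\SAT$ under the assumption $L \in \PTIME$. The key structural observation is that condition~(3) in the definition of $L$ is exactly $s \notin \SAT$, so if we can engineer an $L$-instance whose $(t,w)$-part trivially satisfies conditions~(1) and~(2), then running the assumed $\PTIME$-algorithm $A_L$ on this instance reduces directly to deciding $[s \notin \SAT]$ on the $s$-coordinate.

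The simplest way to make conditions~(1) and~(2) trivial is to take $t = 0^{2^m}$ and every $w_i = 0^{2^{cm}}$: condition~(1) becomes vacuous (no $t_i = 1$) and condition~(2) reduces to the assertion $L' \cap \{0,1\}^m = \emptyset$. To ensure this emptiness holds for enough values of $m$, I would assume without loss of generality that $L'$ has empty slices on an infinite arithmetic progression of lengths. This is obtained by replacing $L'$ throughout the construction by the padded language $L'' = \{0^k x : x \in L' \text{ and } |x| = k\}$, which remains in $\NE \setminus \E$ and has $L'' \cap \{0,1\}^m = \emptyset$ for every odd $m$. I claim the rest of the theorem's proof is unaffected by this replacement, since all downstream uses of $L'$ depend only on membership/hardness properties that are preserved by the padding.

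Given a $\SAT$ instance $\phi$ of length $n$, the algorithm selects the smallest odd $m$ with $2^{c(m+1)} \geq n$ (so $m = O(\log n)$), pads $\phi$ to a $\SAT$-equivalent instance $\phi'$ of length exactly $2^{c(m+1)}$ (adding dummy unsatisfied literals), and outputs $A_L(0^{2^m},\,0^{2^m \cdot 2^{cm}},\,\phi')$ negated. By the choice of $m$ and the observations above, this evaluates to $[\phi' \notin \SAT] = [\phi \notin \SAT]$, and since $A_L$ runs in time polynomial in its input length $\Theta(2^{c(m+1)}) = \poly(n)$, we obtain a $\poly(n)$-time decision procedure for $\SAT$, as required.

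The main obstacle is justifying the WLOG sparsity replacement of $L'$ by $L''$, i.e., checking that the refuter portion of the proof (which derives $L' \in \E$ from a $\PTIME$-refuter against the constant-one function) continues to hold with $L''$ in place of $L'$. Since the padded $L''$ inherits the hardness of $L'$ (and a refuter for the $L''$-based version of $L$ can be decoded into an $\E$-algorithm for $L''$, hence for $L'$), I expect this verification to be routine. If one instead wished to avoid the padding and argue directly, a natural alternative is to recursively bootstrap a valid $(t^\star, w^\star)$ from $A_L$ by solving the smaller $\SAT$-instances $R(i)$ of length $2^{cm}$; this, however, yields at best a quasi-polynomial recurrence $T(N) = N^{1+1/c} \cdot T(N/2^c) + \poly(N)$, so obtaining the sharp polynomial bound asserted by Claim~\ref{claim1} genuinely seems to require the sparsity-based argument.
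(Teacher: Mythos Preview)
Your padding approach is correct and suffices for the theorem: once $L'$ is replaced by a version with empty odd-length slices, conditions~(1) and~(2) become vacuous on those slices, and a single call to $A_L$ on $(0^{2^m}, 0^{2^m \cdot 2^{cm}}, \phi')$ decides $\SAT$ in polynomial time. The caveat you flag --- that this modifies the construction rather than proving Claim~\ref{claim1} for the $L$ built from an arbitrary $L'$ --- is real but harmless, since the rest of the theorem's proof is indeed unaffected by which $L' \in \NE \setminus \E$ is chosen at the outset.

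However, your dismissal of the recursive bootstrap as ``at best quasi-polynomial'' is incorrect, and this is precisely the route the paper takes. The observation you are missing is that the subproblems at each level of the recursion are \emph{shared across all branches}: to solve \emph{any} instance at level $m'$ via the $L$-oracle, one needs exactly the same table $\{(t_{y}, w_{y}) : y \in \{0,1\}^{m'-1}\}$, independent of which particular instance is being solved. With memoization, the distinct subproblems are just $\{R(y) : y \in \{0,1\}^{m'},\ 1 \le m' < m\}$, and the non-recursive work at level $m'$ totals $2^{m'} \cdot 2^{O(m')} = 2^{O(m')}$, summing to $2^{O(m)} = \poly(|x|)$. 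So the recursive approach \emph{is} polynomial-time and proves Claim~\ref{claim1} for every $L' \in \NE \setminus \E$ without touching the construction. Your route is arguably cleaner once the WLOG is in place; the paper's route proves the claim exactly as stated and avoids having to revisit the downstream refuter argument.
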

From Claim \ref{claim1} we conclude $L\notin \PTIME$, since otherwise it would imply $\PTIME=\NP$ and consequently $\E = \NE$, contradicting our assumption.
Hence, $\overline{L} = \{0,1\}^* \backslash L$ is a language in $\NP \backslash \PTIME$.

We will show that $\overline{L}$ does not have $\PTIME$ refuters against the constant one function. 
If there is such a refuter, then it must output in $2^{O(m)}$ time a string $(t, w_0,w_1,\dots,w_{2^m-1}, s)\in L$. By the conditions (1) and (2) in the definition of $L$, we have $t_i = L'(i)$ for all $i\in \{0,1\}^m$. Hence, we can use this refuter to decide $L'$ on $m$-bit inputs in $2^{O(m)}$ time, contradicting $L' \notin \E$.

To prove the second statement of the theorem, we further assume $\NE \neq \RE$ and $L'\in \NE \backslash \RE$.
Suppose $\overline{L}$ has a $\BPP$ refuter against the constant one function, which prints a string $(t, w_0,w_1,\dots,w_{2^m-1}, s)$. With at least $2/3$ probability, the string is in $L$.
On a given input $i\in \{0,1\}^m$, if $t_i=1$ and $w_i$ is a correct witness of $i\in L'$, then we return $L'(i)=1$; otherwise, we return $L'(i)=0$. This yields a one-sided error randomized algorithm that decides  $L'$  on $m$-bit inputs in $2^{O(m)}$ time, contradicting $L' \notin \RE$.

It remains to prove Claim \ref{claim1}.
\begin{subproof}[Proof of Claim \ref{claim1}]

Recall that $L' \in \NE$ and  $R\colon \{0,1\}^n \to \{0,1\}^{2^{cn}}$ is a  $2^{O(n)}$ time reduction such that $y\in L' \Leftrightarrow R(y) \in \SAT$.
Assume $L\in \TIME[n^{d}]$.
The recursive algorithm \texttt{Solve-SAT} (described in Algorithm~\ref{algo:solvesat}) receives $m\in \N$ and $x\in \{0,1\}^{2^{cm}}$ as input, and outputs a pair $(\SAT(x), w)$, 
where $w\in \{0,1\}^{2^{cm}}$ is a correct witness if $\SAT(x)=1$.

	\begin{figure}[ht]
	\renewcommand{\figurename}{Algorithm} 
	 \caption{Solve-SAT}
	 \label{algo:solvesat}
\texttt{Solve-SAT}($m,x$) :
\begin{itemize}[noitemsep,parsep=0pt,topsep=0pt,partopsep=0pt]
    \item If $m\le O(1)$, then return the correct $(\SAT(x),w)$ in constant time
    \item For $y \in \{0,1\}^{m-1}$:
    \begin{itemize}[noitemsep,parsep=0pt,topsep=0pt,partopsep=0pt]
        \item Let $(t_y,w_y) := $ \texttt{Solve-SAT}$(m-1,R(y))$
    \end{itemize}
    \item Let $\texttt{answer}:=  \overline{L}(t,w_0,w_1,\dots, w_{2^{m-1}-1}, x)$ 
    \item If $\texttt{answer} = 1$, then find a correct witness $w$ of $x\in \SAT$ by a search-to-decision reduction which repeatedly calls $\overline{L}(t,w_0,\dots,w_{2^{m-1}-1},\cdot)$
    \item Return $(\texttt{answer},w)$
\end{itemize}
\end{figure}
The correctness of \texttt{Solve-SAT} easily follows from the definition of $L$ and an induction on $m$.
The overall idea of this algorithm is to use  $\overline{L}(t,w_0,\dots,w_{2^{m-1}-1},\cdot)$ as a $\SAT$ solver after we obtain the correct $t, w_0,\dots,w_{2^{m-1}-1}$, which themselves can be found by solving smaller $\SAT$ questions.

To improve the running time of the algorithm, we implement \texttt{Solve-SAT} with memoization. That is, if $(t_y,w_y)$ at the $m$-th level of the recursion is already computed, then later it can be directly accessed without recursively calling \texttt{Solve-SAT} again.
Then, the total time of \texttt{Solve-SAT} is at most $\sum_{m'\le m} 2^{m'-1}\cdot 2^{cm'}\cdot (2^{O(m')})^d \le 2^{O(m)}$. Hence, we can solve $\SAT(x)$ in $\poly(|x|)$ time.
\end{subproof}
This completes the proof of the overall theorem.
\end{proof}

It is interesting to contrast \cref{thm:no-np-refuter} with \cref{thm3.5}, which says $\PTIME\neq \NP$ implies that  every paddable $\NP$-complete language has a $\PTIME$-constructive separation of $L\notin \PTIME$. This means the language $\overline{L}\in \NP\setminus \PTIME$ in \cref{thm:no-np-refuter} is not $\NP$-complete.

\section{Conclusion}\label{sec:future-work}
Many interesting questions remain for future work. While we have given many examples of complexity separations that can automatically be made constructive, it is unclear how to extend our results to separations with complexity classes within $\PTIME$. For example, let $L$ be a $\PTIME$-complete language. If $L$ is not in uniform $\NC^1$, does a $\PTIME$-constructive separation of $L$ from uniform $\NC^1$ follow? How about separations of $\PTIME$ from ${\sf LOGSPACE}$? Would establishing constructive separations in these lower complexity classes have any interesting consequences?

Note that there is no $\PTIME$-constructive separation of $\MCSP[s] \notin \PTIME$ for super-polynomially large $s$, unless $\EXP$ requires super-polynomial size Boolean circuits. (A polynomial-time refuter for the trivial algorithm that always accepts, must print a hard function!) But do any interesting consequences follow from a constructive separation of \emph{search versions} of $\MCSP$ from $\PTIME$? The same proof strategy (of applying the conjectured refuter for the trivial algorithm that always accepts) %
does not make sense in this case, as the only hard instances for search problems are YES instances.

It would also be interesting to consider constructive separations against \emph{non-uniform} algorithms.
We say a $\PTIME$ list-refuter $R$ for a language $L$ against circuit class $\caC$ is a deterministic polynomial time algorithm that, given the description of a circuit $C_n$ on input length $n$ where $\{C_n\}_{n\in \mathbb{N}}\in \caC$, finds a list of $x_i\in \{0,1\}^n$ such that $L(x_i)\neq C(x_i)$ for some $i$, for infinitely many input lengths $n$. 
We also say that $R$ gives a $\PTIME$-constructive separation $L \notin \caC$.
Furthermore, we say it is an oblivious list-refuter, if it does not need access to the description of the circuit $C_n$ (this was called explicit obstructions in \cite{cjw20}).
It would be interesting to examine which proof methods for circuit lower bounds can be made constructive. We list a few examples which should be particularly interesting:
\begin{itemize}
    \item[(1)] the $\widetilde{\Omega}(n^3)$ size lower bound against DeMorgan formulas for Andreev's function~\cite{Hastad98,Tal14},
    \item[(2)] the $\widetilde{\Omega}(n^2)$ size lower bound against formulas for Element-Distinctness~\cite{Neciporuk1691966},
    \item[(3)] $\AC^0[p]$ size-depth lower bounds via the approximation method~\cite{razborov1987lower,Smolensky87}. 
\end{itemize}
Chen, Jin, and Williams~\cite{cjw20} showed that constructing corresponding explicit obstructions for (1) and (2) above would imply $\EXP \not\subset \NC^1$, but it is unclear whether one can get a $\PTIME$-constructive separation without implying a major breakthrough lower bound. 

We remark that as shown in~\cite{cjw20}, most lower bounds proved by random restrictions \emph{can} be made constructive, by constructing an appropriate pseudorandom restriction generator.~\cite{cjw20} explicitly constructed an oblivious list-refuter for parity against subquadratic-size formulas, and we remark that a similar oblivious list-refuter for parity against polynomial-size $\AC^0$ circuits follows from the pseudorandom restriction generator for $\AC^0$ of~\cite{gw14}.

Atserias \cite[Theorem 3]{Atserias06} showed that 
$\NP \not\subset \PTIME/\poly$ implies a $\BPP$-constructive separation $\NP \not\subset \PTIME/\poly$ (note that Atserias' refuters only need to know the size of the circuits being refuted).
 An interesting open problem following the work of Atserias is whether separations of the form $\mathcal{C} \not \subset \PTIME/\poly$ can be made constructive for classes $\mathcal{C}$ higher than $\NP$ (for example, $\NEXP$).


 
 	\printbibliography

\end{document}